\numberwithin{equation}{section}
\newtheorem{theorem}{Theorem}[section]
\newtheorem{corollary}[theorem]{Corollary}
\newtheorem{lemma}[theorem]{Lemma}
\newtheorem{definition}[theorem]{Definition}
\newtheorem{remark}[theorem]{Remark}
\newlength{\blank}
\newenvironment{proofof}[1][{\hspace{-\blank}}]{{\medskip\noindent\textit{Proof~{#1}}.\ }}{\qed}
\newcommand{\CC}{\mathbb{C}}
\newcommand{\EE}{\mathbb{E}}
\newcommand{\cD}{\mathcal{D}}
\newcommand{\cE}{\mathcal{E}}
\newcommand{\cP}{\mathcal{P}}
\newcommand{\cR}{\mathcal{R}}
\newcommand{\cS}{\mathcal{S}}
\newcommand{\cT}{\mathcal{T}}
\newcommand{\cU}{\mathcal{U}}
\newcommand{\cN}{\mathcal{N}}
\newcommand{\1}{\openone} 
\newcommand{\id}{\operatorname{id}} 
\newcommand{\ox}{\otimes}
\newcommand{\proj}[1]{\ket{#1}\!\bra{#1}}
\begin{document}

\title{Decoupling by local random unitaries \protect\\ without simultaneous smoothing, \protect\\ and applications to multi-user quantum information tasks}

\author[1,2,4,$\ast$, \orcidlink{0000-0002-0126-4521}]{Pau Colomer}
\author[1,3,4,$\dagger$ \orcidlink{0000-0001-6344-4870}]{Andreas Winter}

\affil[1]{{\small\it Grup d'Informaci\'o Qu\`antica, Departament de F\'isica, \protect\\ Universitat Aut\`onoma de Barcelona, 08193 Bellaterra (Barcelona), Spain\vspace{0mm}}}
\affil[2]{{\small\it Lehrstuhl f\"ur Theoretische Informationstechnik, School of Computation, Information and Technology, Technische Universit\"at M\"unchen, Theresienstra{\ss}e 90, 80333 M\"unchen, Germany\vspace{1mm}}}
\affil[3]{{\small\it ICREA--Instituci\'o Catalana de Recerca i Estudis Avan\c{c}ats, \protect\\ Pg.~Lluis Companys, 08010 Barcelona, Spain\vspace{1mm}}}
\affil[4]{{\small\it Institute for Advanced Study, Technische Universit\"at M\"unchen, \protect\\ Lichtenbergstra{\ss}e 2a, 85748 Garching, Germany\vspace{0mm}}}
\affil[$\ast$]{{\small\textit{Email:} pau.colomer@tum.de} {\small\textit{ORCID:} 0000-0002-0126-4521}}
\affil[$\dagger$]{{\small\textit{Email:} andreas.winter@uab.cat} {\small\textit{ORCID:} 0000-0001-6344-4870}}

\date{}

\maketitle

\begin{abstract}
We show that a simple telescoping sum trick, together with the triangle inequality and a tensorisation property of expected-contractive coefficients of random channels, allow us to achieve general simultaneous decoupling for multiple users via local actions. Employing both old [Dupuis \emph{et al.}, Commun. Math. Phys. 328:251-284 (2014)] and new methods [Dupuis, IEEE Trans. Inf. Theory 69:7784-7792 (2023)], we obtain bounds on the expected deviation from ideal decoupling either in the one-shot setting in terms of smooth min-entropies, or the finite block length setting in terms of R\'enyi entropies. These bounds are essentially optimal without the need to address the simultaneous smoothing conjecture, which remains unresolved.
This leads to one-shot, finite block length, and asymptotic achievability results for several tasks in quantum Shannon theory, including 
local randomness extraction of multiple parties, 
multi-party assisted entanglement concentration, 
multi-party quantum state merging, 
and quantum coding for the quantum multiple access channel.
Because of the one-shot nature of our protocols, we obtain achievability results without the need for time-sharing, which at the same time leads to easy proofs of the asymptotic coding theorems. 
We show that our one-shot decoupling bounds furthermore yield achievable rates (so far only conjectured) for 
all four tasks in compound settings, which are additionally optimal for entanglement of assistance and state merging.
\end{abstract}


\section{Introduction}
\label{sec:intro}



Multi-user information theory is intrinsically difficult, with several of the classic transmission problems remaining unsolved despite decades of research, including the bidirectional channel \cite{Shannon-bidirectional}, the broadcast channel \cite{broadcast}, and the interference channel \cite{interference} (except in particular cases), cf.~\cite{ElGamal-Kim}. Even models such as the multiple-access channel (MAC) that have been solved early on \cite{Ahlswede:MAC,Liao:MAC} have recently exhibited unexpected additional complexity: indeed, while the capacity region of a general MAC has a finitary single-letter expression, its computation (or even approximation) in terms of the channel parameters turns out to be NP-hard, and with entanglement-assistance it is even uncomputable \cite{MAC:NP-hard}. 

The foundational tool of \emph{joint typicality} (a multipartite probability distribution being typical in several of its marginals simultaneously) is frequently employed in classical multi-user settings to define and analyze codes and decoders, and serves as a single conceptual integrator of many constructions (even if it does not always yield the best possible performance) \cite{CoverThomas,ElGamal-Kim}. 
The analogous problems in quantum information theory have added difficulty at an even more fundamental level because this basic tool is simply not available in the required generality for multipartite quantum states, although it has been conjectured both in a form suited to i.i.d systems \cite[Conj.~3.2.7]{Dutil:PhD} and in a general form for min-entropies \cite{DrescherFawzi:sim-min}, also known as the \emph{simultaneous smoothing conjecture}.
In the absence of a general solution to this conjecture (either in its one-shot or the asymptotic version), researchers have developed workarounds of varying complexity and applicability. For small numbers of parties (two or three) and specific problems, it can be avoided altogether \cite{DrescherFawzi:sim-min}; and for classical information transmission tasks over quantum channels with multiple senders and receivers, 
a ``simultaneous hypothesis testing'' technique combining a modification of the state with hypothesis testing \cite{Sen:sim-hypo-test,Sen:multi-user-one-shot} overcomes this technical barrier. 

However, there are at least two types of tasks that require a different primitive and therefore remain hindered by the simultaneous smoothing conjecture: cryptographic privacy amplification and randomness concentration on the one hand, and simultaneous quantum information transmission between multiple parties on the other (including channel coding, as well as channel simulation). All these impaired tasks can be based on the \emph{decoupling} of one part of a correlated state from another, via the concatenation of a unitary (typically random) and a fixed irreversible quantum channel. This primitive is well-developed in the case of a single system to decouple and well-understood to be governed by min-entropies \cite{SchumacherWestmoreland:decoupling, Renner:PhD, CaiWinterYeung, HOW:merging-Nature, HOW:merging-CMP, 
Devetak:Q,Abeyesinghe_2009, Berta:merging,Devetak:QRST,Bennett:QRST,Berta:QRST}. 

Here, we develop a solution for simultaneous decoupling, extending the ``generalised decoupling'' approach of Dupuis \emph{et al.} \cite{Dupuis-et-al:decouple} to multiple systems undergoing local random unitaries followed by a CPTP map (see Fig.~\ref{fig:general}). We are able to do so without addressing the simultaneous smoothing conjecture by leveraging contractivity properties of random channels and multiplicativity of contraction under tensor products. 
\begin{figure}[ht]
    \centering
    \includegraphics[scale=0.46]{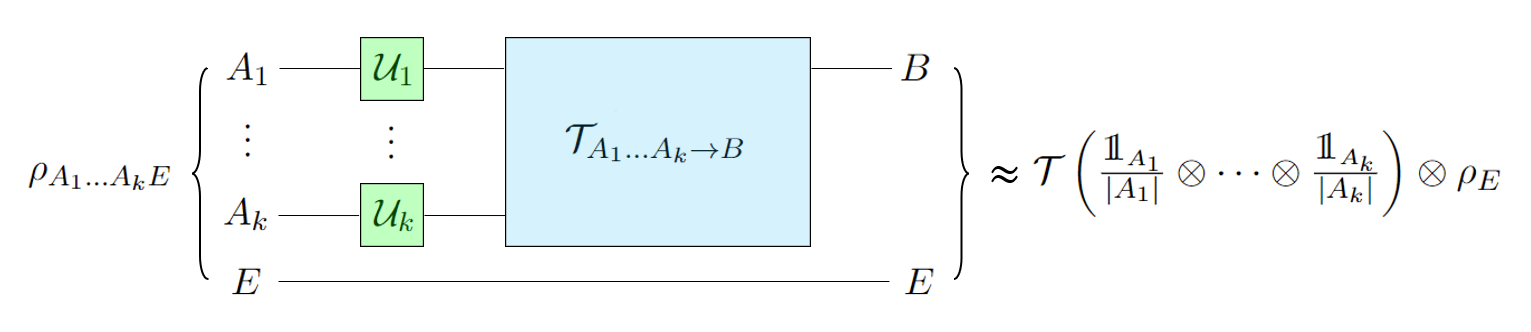}
    \caption{generalised multipartite decoupling via local random unitary transformations $\mathcal{U}_i$ acting locally on each system $A_i$, followed by a fixed CPTP map $\cT_{A_1\dots A_k\rightarrow B}$.}
    \label{fig:general}
\end{figure}
We illustrate the reach of our method by proving  multi-party generalised decoupling theorems in terms of both R\'enyi and smooth min-entropies. As applications, we show how we obtain one-shot and asymptotic i.i.d.~coding theorems for four quantum information tasks.

The rest of the paper is structured as follows: we start with some notation and preliminary material in Section \ref{sec:preliminaries}. Then we present the problem setting and main results in Section \ref{sec:setting}, followed by the core technical lemmas in Section \ref{sec:lemmata}. The main decoupling theorems are proved in Section \ref{sec:Proofs}. In Section \ref{sec:apps}, we apply the decoupling theorems to the problems of randomness extraction \cite{BertaFawziWehner:randomness,YHW:randomness}, entanglement of assistance \cite{SVW:EofA,HOW:merging-Nature,HOW:merging-CMP,Dutil:PhD}, quantum state merging (also known as quantum Slepian-Wolf problem), and quantum multiple access coding \cite{YDH:MAC}; these are developed in the fully general one-shot form, and then applied to the i.i.d.~asymptotics as well as to the so-called compound setting of an only partially known i.i.d.~source or channel. The resulting one-shot and compound rate formulas had long been conjectured but are here proved for the first time. 
We conclude in Section \ref{sec:conclusion}, including a comparison with previous approaches. 

\section{Preliminaries}
\label{sec:preliminaries}
We denote the Hilbert spaces associated with finite-dimensional quantum systems by capital letters, $A$, $B$, etc., and by $\abs{A}$ the dimension of $A$. The composition of two systems is facilitated by the tensor product of the Hilbert spaces, $AB=A\ox B$. Multipartite operators $\rho_{AB}$ acting on this  tensor product space have their corresponding reduced operator denoted as $\rho_A=\Tr_B \rho_{AB}$. The set of normalized quantum states (non-negative operators $\rho$ on $A$ with $\Tr\rho=1$) is denoted as $\cS(A)$.

We use the abbreviation CP to denote completely positive maps $\cT_{A\rightarrow B}$ (and CPTP if they are additionally trace-preserving maps), and $\tau_{AB}=\left(\1_A\ox\cT_{A'\rightarrow B}\right)\left(\ketbra{\Phi}{\Phi}_{AA'}\right)$ is their corresponding Choi operator, where $\ket{\Phi}_{AA'}=\frac{1}{\sqrt{\abs{A}}}\sum_{i=1}^{\abs{A}}\ket{i}_A\ox\ket{i}_{A'}$ is the maximally entangled state. 

The basic metric on quantum states is given by the trace norm distance, $\|\rho-\sigma\|_1$. 
Recall the definition of the trace norm of an operator $M$: $\|M\|_1=\Tr\sqrt{M^\dagger M}$. This quantity is lower bounded by 0 (when $\rho=\sigma$) and upper bounded by 2 due to the triangle inequality $\|\rho-\sigma\|_1\leq\|\rho\|_1+\|\sigma\|_1=2$. We shall mostly use the normalized trace distance defined as
\[
D(\rho,\sigma):=\frac12\|\rho-\sigma\|_1.
\]
We will also come across the Hilbert-Schmidt norm $\|M\|_2=\sqrt{\Tr M^\dagger M}$. Actually, it is useful to define the Schatten $p$-norms as a generalisation of the previous. Given a real number $p\geq1$ and a linear operator $M$, the Schatten $p$-norm is given by
\[
  \|M\|_p:=\left[\Tr\left(M^\dagger M\right)^\frac{p}{2}\right]^\frac{1}{p}.
\]

Likewise, we have to define the diamond norm of a linear map $\Theta:A\rightarrow B$, which is the trace norm of the output of a trivial extension of $\Theta$ maximized over all possible input operators $M\in A'A$ with $\|M\|_1\leq1$, that is $\displaystyle\|\Theta\|_\diamond=\max_{M \text{ s.t. } \|M\|_1\leq 1}\|(\id_{A'}\otimes\Theta)M\|_1$ \cite{Aharonov:diamond,Watrous:diamond}.

Our technical results are small upper bounds on the trace distance between states, proving that they are almost equal. These bounds are presented in terms of conditional entropy measures. Let us recall the following standard definitions. The von Neumann entropy of a state $\rho_A\in \cS(A)$ is defined as $S(A)_\rho = S(\rho_A) = -\Tr\rho\log\rho$, and the conditional von Neumann entropy of $A$ given $B$ for the bipartite state $\rho_{AB}$ is $S(A|B)_\rho=S(AB)_\rho-S(B)_\rho$. Also, for $\rho_{AB}\in \cS(AB)$ and $\sigma_B\in \cS(B)$, we define the sandwiched conditional R\'enyi entropy of order $\alpha\in[\frac{1}{2},1)\cup(1,\infty)$ given $\sigma_B$ \cite{Mueller-Lennert:sandwich,WWY:sandwich} as
\[
\widetilde{H}_\alpha(A|B)_{\rho|\sigma} 
  :=\begin{cases}
    \frac{1}{1-\alpha}\log\Tr\left[\left(\sigma_B^{\frac{1-\alpha}{2\alpha}}\rho_{AB}\sigma_B^{\frac{1-\alpha}{2\alpha}}\right)^\alpha\right] &\begin{split}&\text{if }\alpha<1 \text{ and } \Tr \rho\sigma \neq 0, \\
    &\text{or }\operatorname{supp}\rho \subseteq A\ox\operatorname{supp}(\sigma), \end{split}\\
    -\infty &\text{otherwise}.
  \end{cases}
\]

The maximisation of the sandwiched conditional R\'enyi entropy given $\sigma_B\in \cS(B)$ over all possible states $\sigma_B$ gives the conditional R\'enyi entropy of  $\rho_{AB}$, denoted $\widetilde{H}_\alpha(A|B)_\rho$. This quantity is monotone non-increasing in $\alpha$ \cite{Frank_2013}, and if we take the limit $\alpha\rightarrow1$ we recover the conditional von Neumann entropy $S(A|B)_\rho$. Furthermore, the limit of the R\'enyi entropy when $\alpha\rightarrow\infty$ makes sense and is called min-entropy:
\[
H_{\min}(A|B)_\rho = \widetilde{H}_\infty(A|B)_\rho := \max_{\sigma_B}\sup\{ \lambda\in\mathds{R}: \rho_{AB} \leq 2^{-\lambda}\cdot\1_A\ox\sigma_B\},
\]
where the maximum is taken over all states $\sigma_B\in\cS(B)$. Similarly, for $\alpha = \frac12$ we find the max-entropy:
\[
H_{\max}(A|B)_\rho = \widetilde{H}_{\frac12}(A|B)_\rho := \max_{\sigma_B} \log\left\|\sqrt{\rho_{AB}}\sqrt{\1\ox\sigma_B}\right\|_1^2.
\]

The max- and min-entropies are related by the fundamental duality relation $H_{\min}(A|B)_\psi=-H_{\max}(A|C)_\psi$ for any pure tripartite state $\psi_{ABC}$. Notice also that for $\alpha=2$ we find the collision entropy, which is the quantity that shows up in the original proofs of the variously general decoupling theorems \cite{Abeyesinghe_2009,Dupuis-et-al:decouple}:
\[
\widetilde{H}_2(A|B)_\rho = \sup_{\sigma_B} -\log\Tr\left[\left(\left(\1_A\ox\sigma_B^{-1/4}\right)\rho_{AB}\left(\1_A\ox\sigma_B^{-1/4}\right)\right)^2\right].
\]
This quantity, however, usually gives unreliable or loose bounds due to its rough responsiveness to small variations in the state $\rho_{AB}$ over which it is computed. This is why it is commonly substituted by the min-entropy, which is more reliable, and robust and a lower bound on the collision entropy due to the monotonicity of R\'enyi entropies under $\alpha$. In one-shot settings, it is also useful to $\epsilon$-smooth the min and max-entropies. I.e., computing them on the best state $\omega$ in an $\epsilon$-ball around $\rho$ with respect to the purified distance $P(\rho,\omega)=\sqrt{1-\|\sqrt{\rho}\sqrt{\omega}\|_1^2}$:
\[\begin{split}
  H_{\min}^\epsilon(A|B)_\rho:=\max_\omega H_{\min}(A|B)_\omega \text{ s.t. } P(\rho,\omega) \leq \epsilon,\\
  H_{\max}^\epsilon(A|C)_\rho:=\min_\omega H_{\max}(A|C)_\omega \text{ s.t. } P(\rho,\omega) \leq \epsilon.
\end{split}
\]

Smoothing allows us to discard atypical behaviour in the states. In multi-party settings, it makes sense to wish for \emph{simultaneous smoothing} of all the marginals of the given state: that is, we want to modify the global state so that its marginals appear smoothed. More formally, for any number $m$ of parties we would like to find functions $g_m(\epsilon)$ and $h_m(\epsilon)$ with ${\lim_{\epsilon\rightarrow0} g_m(\epsilon)=0}$ and $h_m(\epsilon)$ finite for any $m$ and $\epsilon>0$, such that for every state $\rho_{A_1\dots A_mB}$ on an $(m+1)$-party system $A_1\ldots A_mB$ there exists another state $\sigma$ with $P(\rho,\sigma) \leq g_m(\epsilon)$ that satisfies
\begin{equation}
  \forall\emptyset\neq I\subseteq[m] \quad H_{\min}(A_I|B)_{\sigma} \geq H_{\min}^{\epsilon}(A_I|B)_{\rho} - h_m(\epsilon).
\end{equation}
This has been stated as a conjecture \cite{DrescherFawzi:sim-min, FHSSW:inter-channel} (without the additive $h_m(\epsilon)$ term, which however seems very natural to us), but remains unproven in general, in particular for $m>2$. It has also been used to conjecture rate regions in several multi-party quantum information tasks. Here, we find local decoupling theorems without simultaneous smoothing and apply them to finally prove the anticipated achievable rate regions for several multi-party quantum information tasks. 

The purified distance between two arbitrary states $\rho$ and $\sigma$ is a function of the fidelity $F(\rho,\sigma)=\|\sqrt{\rho}\sqrt{\sigma}\|_1^2$, indeed $P(\rho,\sigma)=\sqrt{1-F(\rho,\sigma)}$. These quantities are related to the normalized trace distance 
through the Fuchs–van de Graaf inequalities \cite{fuchs+vandegraaf}:
\begin{equation}\label{eq:FvdG_relation}
  1-\sqrt{F(\rho,\sigma)} \leq D(\rho,\sigma) \leq P(\rho,\sigma)\leq\sqrt{D(\rho,\sigma)\left[2-D(\rho,\sigma)\right]}.
\end{equation}
The first two are the original inequalities. We took the liberty of adding the third one by noticing $ P(\rho,\sigma)^2=1-F(\rho,\sigma)\leq1-\left[1-D(\rho,\sigma)\right]^2=D(\rho,\sigma)\left[2-D(\rho,\sigma)\right]$.

\section{Setting and main results}
\label{sec:setting}
We consider random CP maps $\cR^x:A\rightarrow B$, where $x$ is distributed on a given set according to a certain well-defined probability law. If there are systems $A_1$, $A_2$, \ldots, $A_k$, we consider independently random maps $\cR^{x_i}:A_i\rightarrow B_i$ for $i\in\{1,\ldots,k\} =: [k]$. 
Two particular channels are of special interest to us. The the constant channel (or state preparation channel) $\cP^\sigma:A\rightarrow B$, acting as $\cP^\sigma(\rho) = \sigma_B\Tr_A\rho$, that outputs a state $\sigma$ (or more generally a positive semidefinite operator) on $B$ regardless of the input $\rho$, and the fully depolarizing channel $\cD:A\rightarrow A$, which can be seen as the particular instance of the constant channel that prepares the maximally mixed state $\cD(\rho)=\frac{\1_A}{|A|} \Tr_A\rho$.
We use superscripts to identify different objects, potentially acting on the same or other spaces, such as $\cR^{x_i}$ and $\cR^{x_j}$, and subscripts on states and channels to record on which systems they act. 

We shall only consider random CP maps $\cR^x$ with the property that the average map $\EE_x \cR^x$ is a constant map $\cP^\sigma$. Let us also introduce the difference $\Delta^x:=\cR^x-\cP^\sigma$. 
\begin{definition}\label{def:lambda}
We call a random Hermitian-preserving map $\Delta^x$ \emph{$\lambda$-expected-contractive} if for any system $E$ and any matrix $\rho_{AE}$,
\[
  \EE_x \|\Delta^x(\rho_{AE})\|_2 \leq \lambda \|\rho_{AE}\|_2.
\]
Dupuis \cite{Dupuis:Renyi-decouple} equivalently calls $\cR^x$ \emph{$\lambda$-randomizing}, although he considers this concept only for the preparation maximally mixed state $\sigma=\frac{1}{|B|}\1_B$.
\end{definition}

Let the systems $A_1$, $A_2$, \ldots, $A_k$ and $E$ share a state $\rho_{A_{[k]}E}$, and consider a fixed quantum channel (CPTP map) $\cT:A_{[k]}\rightarrow B$ with Choi state $\tau_{A_{[k]}B}$. On each system $A_i$ ($i\in[k]$) we define random unitaries $U_i$ distributed according to a unitary $2$-design, so that the average $\EE_{U_i} \cU_i = \cD_{A_i}$ is the completely depolarizing channel, where we denote the associated unitary channel $\cU_i(\alpha)=U_i\alpha U_i^\dagger$. Then we have random maps $\cR^{U_{[k]}} = \cT \circ (\cU_1\ox\cdots\ox\cU_k)$.

Decoupling is about the question: \emph{How far from $\tau_B = \cT\left({\1_{A_{[k]}}}/{|A_{[k]}|}\right)$ is the output of the channel $\cR^{U_{[k]}}$ typically?} 
To answer it, we aim to give an upper bound on
\[
  \EE_{U_{[k]}} \left\| \cR^{U_{[k]}}(\rho_{A_{[k]}E}) - \tau_B\ox\rho_E \right\|_1
  = \EE_{U_{[k]}} \left\| \cT \circ \left(\cU_1\ox\cdots\ox\cU_k - \cD_{A_1}\ox\cdots\ox\cD_{A_k}\right)\rho_{A_{[k]}E} \right\|_1.  
\]

The crucial insight for everything that follows is that we can rewrite the difference of maps inside the norm as 
\begin{equation}\begin{split}
  \label{eq:split-into-Theta}
  \cR^{U_{[k]}} - \cP^{\tau_B} 
    &= \cT\circ\left(\bigotimes_{i=1}^k \cU_i - \bigotimes_{i=1}^k \cD_{A_i} \right) \\
    &= \sum_{\emptyset\neq I\subseteq[k]} \cT\circ\left(\Theta_{A_I}\ox\cD_{A_{I^c}}\right),
\end{split}\end{equation}
where $\Theta_{A_i} := \cU_i-\cD_{A_i}$, hence $\Theta_{A_I}=\bigotimes_{i\in I}(\cU_i-\cD_{A_i})$, and $\cD_{A_{I^c}}=\bigotimes_{i\not\in I}\cD_i$. Therefore, we have $\cU_i = \Theta_{A_i} + \cD_{A_i}$ and we can use the distributive law to get the above expansion. Hence, 
\begin{equation}\begin{split}
  \label{eq:Theta}
  \left(\cR^{U_{[k]}} - \cP^{\tau_B}\right)\rho_{A_{[k]}E}
    &= \sum_{\emptyset\neq I\subseteq[k]} \cT\left((\Theta_{A_I}\ox\cD_{A_{I^c}})\rho_{A_{[k]}E}\right) \\
    &= \sum_{\emptyset\neq I\subseteq[k]} \left(\cT_I\circ\Theta_{A_I}\right)\rho_{A_IE}, 
\end{split}\end{equation}
with $\cT_I:A_I\rightarrow B$ acting as $\cT_I(\rho_{A_I}) = \cT\left(\rho_{A_I}\ox\frac{\1_{A_{I^c}}}{|A_{I^c}|}\right)$. The first step in our upper bound is the application of the triangle inequality, 
\begin{equation}
  \label{eq:Theta-triangle-bound}
  \EE_{U_{[k]}} \left\| \cR^{U_{[k]}}(\rho_{A_{[k]}E}) - \tau_B\ox\rho_E \right\|_1
  \leq \sum_{\emptyset\neq I\subseteq[k]} \EE_{U_I} \left\| \left(\cT_I\circ\Theta_{A_I}\right)\rho_{A_IE} \right\|_1.
\end{equation}
This allows us to simply deal with each term $\emptyset\neq I\subseteq[k]$ separately in the remainder of the argument.

The main technical results of the present work are formulated in the following theorems and their corollary. 

\begin{theorem}
\label{theorem:with_smoothing}
Assume $\cT_{A_{[k]} \rightarrow B}$ to be a CPTP map with Choi state, and consider the random channels $\cR^{U_{[k]}}$ as above. Then, for any state $\rho_{A_{[k]}E}$,
\begin{equation}
\begin{split}
\label{eq:with_smoothing}
\EE_{U_{[k]}} &\left\| \cR^{U_{[k]}}(\rho_{A_{[k]}E}) - \tau_B\ox\rho_E \right\|_1 \\
&\leq \sum_{\emptyset\neq I\subseteq[k]} \left\{ 2^{|I|+1}\epsilon_I + D_I \exp_2\left[-\frac{1}{2}\widetilde{H}_2^{\epsilon_I}(A_I|E)_{\rho|\zeta_E^I}-\frac{1}{2}\widetilde{H}_2(A_I|B)_{\tau|\sigma_B^I}\right] \right\},   
\end{split}    
\end{equation}
where $D_I = 2^{\abs{I}-1}\prod_{i\in I} \left(1-\frac{1}{\abs{A_i}^2}\right)^{-\frac{1}{2}}$, $\tau_B = \cT\left({\1_{A_{[k]}}}/{|A_{[k]}|}\right)$, the $\zeta_E^I$ are arbitrary states on $E$, $\sigma_B^I$ are arbitrary states on $B$, and $\exp_2$ denotes the exponential function to base $2$.
\end{theorem}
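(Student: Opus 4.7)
The plan is to start from the decomposition \eqref{eq:Theta-triangle-bound}, which already reduces the problem to establishing, for each nonempty $I \subseteq [k]$, the term-wise estimate
\[
\EE_{U_I} \bigl\| (\cT_I \circ \Theta_{A_I}) \rho_{A_I E} \bigr\|_1
\leq 2^{|I|+1}\epsilon_I + D_I \exp_2\!\left[ -\tfrac12 \widetilde H_2^{\epsilon_I}(A_I|E)_{\rho|\zeta_E^I} - \tfrac12 \widetilde H_2(A_I|B)_{\tau|\sigma_B^I} \right],
\]
after which one simply sums over $I$. I would prove each term-wise bound by a smoothing step followed by a contractivity estimate.

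For the smoothing, I would let $\tilde\rho^I_{A_I E}$ (near-)attain the supremum in the definition of $\widetilde H_2^{\epsilon_I}(A_I|E)_{\rho|\zeta_E^I}$, so that $P(\rho_{A_I E}, \tilde\rho^I) \leq \epsilon_I$ and $\widetilde H_2(A_I|E)_{\tilde\rho^I|\zeta_E^I} = \widetilde H_2^{\epsilon_I}(A_I|E)_{\rho|\zeta_E^I}$. Splitting $\rho = \tilde\rho^I + (\rho-\tilde\rho^I)$ and applying the triangle inequality inside the expectation,
\[
\EE_{U_I} \bigl\| (\cT_I \circ \Theta_{A_I}) \rho \bigr\|_1
\leq \EE_{U_I} \bigl\| (\cT_I \circ \Theta_{A_I}) \tilde\rho^I \bigr\|_1 + \EE_{U_I} \bigl\| (\cT_I \circ \Theta_{A_I})(\rho - \tilde\rho^I) \bigr\|_1.
\]
The second term I would control in diamond norm: $\cT_I$ is cptp, so $\|\cT_I\|_\diamond = 1$; each $\Theta_{A_i} = \cU_i - \cD_{A_i}$ is a difference of cptp maps, so $\|\Theta_{A_i}\|_\diamond \leq 2$; tensorising, $\|\Theta_{A_I}\|_\diamond \leq 2^{|I|}$. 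Combined with $\tfrac12\|\rho-\tilde\rho^I\|_1 = D(\rho,\tilde\rho^I) \leq P(\rho,\tilde\rho^I) \leq \epsilon_I$ from \eqref{eq:FvdG_relation}, this contributes at most $2^{|I|+1}\epsilon_I$, matching the first summand.

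For the main term on $\tilde\rho^I$, I would invoke the tensorisation of expected-contractive coefficients, which is the core technical content of Section \ref{sec:lemmata}. Each $\Theta_{A_i}$ obtained from a unitary $2$-design on $A_i$ is $(1-|A_i|^{-2})^{-1/2}$-expected-contractive in the sense of the Definition above, and multiplicativity under tensor products gives that $\Theta_{A_I} = \bigotimes_{i\in I}\Theta_{A_i}$ is $\prod_{i\in I}(1-|A_i|^{-2})^{-1/2}$-expected-contractive. A weighted Cauchy–Schwarz step then converts the resulting $\EE\|\cdot\|_2$ bound into a trace-norm bound: inserting $\sigma_B^I$ on the $B$-side through the Choi state of $\cT_I$ produces the factor $\exp_2[-\tfrac12 \widetilde H_2(A_I|B)_{\tau|\sigma_B^I}]$, and inserting $\zeta_E^I$ on the $E$-side produces $\exp_2[-\tfrac12 \widetilde H_2(A_I|E)_{\tilde\rho^I|\zeta_E^I}]$; the residual numerical overhead (including the $2^{|I|-1}$) is absorbed into $D_I$. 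Summing over nonempty $I\subseteq[k]$ completes the proof.

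The conceptually decisive step — and the one I expect to require the most care — is the contractivity estimate, specifically the tensorisation of the expected-contractive coefficient from the individual $\Theta_{A_i}$ to their product $\Theta_{A_I}$. It is precisely this multiplicative behaviour that permits each marginal $\rho_{A_I E}$ to be smoothed independently via its own $\tilde\rho^I$ while still yielding a single product bound, thereby bypassing the simultaneous smoothing conjecture entirely. Writing the Cauchy–Schwarz step in a form compatible with the tensor product structure, so that the correction factors multiply as claimed rather than accumulating an extra $|I|$-dependent loss in the exponent, is the main technical hurdle, but it is exactly what the lemmas of Section \ref{sec:lemmata} are designed to deliver.
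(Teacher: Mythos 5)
Your overall skeleton matches the paper: the reduction via Equation~\eqref{eq:Theta-triangle-bound} to term-wise bounds, the smoothing step with $\|\Theta_{A_I}\|_\diamond\leq 2^{|I|}$ and $D\leq P$ giving the $2^{|I|+1}\epsilon_I$ contribution, and the conversion of a $\lambda_I$-expected-contractive estimate into the two collision-entropy factors via Lemma~\ref{lemma:tr_norm_bound} are all exactly as in the paper's proof. The gap is in how you obtain the contractivity estimate itself. You propose to show that each $\Theta_{A_i}$ alone is expected-contractive, tensorise via Lemma~\ref{lemma:lambda-multiplicativity}, and only afterwards "insert" $\cT_I$ and $\sigma_B^I$ through the Choi state by a Cauchy--Schwarz step. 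This cannot work for the theorem as stated, because Theorem~\ref{theorem:with_smoothing} concerns a \emph{general} cptp map $\cT:A_{[k]}\to B$, so the composed maps $\cT_I\circ\Theta_{A_I}$ do not factorize over $i\in I$ and the multiplicativity lemma is simply not applicable; that lemma is what yields the \emph{improved} constant of Corollary~\ref{corollary:D_I-main} in the special case $\cT=\cT_1\ox\cdots\ox\cT_k$, not the general bound. Moreover, once you decouple the 2-design average (acting on $\Theta_{A_I}$) from the channel, the subsequent application of $\tilde\cT_I$ is controlled only by its $2\to 2$ norm, which is in general not $\|\tilde\tau_{A_IB}\|_2$ and can exceed it by dimensional factors (e.g.\ for a constant channel $\cT(\rho)=\Tr(\rho)\proj{0}$ one has $\|\tau_{AB}\|_2=|A|^{-1/2}$ but $\|\cT\|_{2\to2}=\sqrt{|A|}$). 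The Choi-state collision entropy appears only when the expectation over $U_I$ is evaluated \emph{jointly} with the channel, via the swap trick and the term $\EE_{U_I}\,U_I^{\ox 2\dagger}\,\tilde\cT_I^{\dagger}(F_B)^{\!}\,U_I^{\ox 2}$.

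That joint computation is precisely the content of Lemma~\ref{lemma:lambda_I}, which you do not reproduce: an induction establishing the alternating-sign expansion \eqref{eq:AlternateSigns}, the multipartite 2-design average of Corollary~\ref{corollary:schur_general}, and a counting/cancellation argument over subsets $J,L\subseteq I$ which is where the factor $2^{|I|-1}$ and the product $\prod_{i\in I}(1-|A_i|^{-2})^{-1/2}$ in $D_I$ actually come from. Your proposal dismisses this as "residual numerical overhead absorbed into $D_I$", but your proposed mechanism would only ever produce a product of per-site coefficients (and with the wrong sign of the exponent: for $\Theta_{A_i}$ composed with a channel the coefficient is $\sqrt{1-|A_i|^{-2}}\,\|\tau_{A_iB_i}\|_2\leq \|\tau_{A_iB_i}\|_2$, per Lemma~\ref{lemma:lambda.tensorial}), never the $2^{|I|-1}$ overhead, and never a bound involving the joint Choi marginal $\tau_{A_IB}$ of a non-product $\cT$. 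To close the gap you must prove the expected-contractivity of the composed map $\cT_I\circ\Theta_{A_I}$ directly, as in Lemma~\ref{lemma:lambda_I}, rather than of $\Theta_{A_I}$ alone.
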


\begin{theorem}
\label{theorem:without_smoothing}
Assume $\cT:A_{[k]} \rightarrow B$ to be a CPTP map with $\cT(\1/|A_{[k]}|)=\1/|B|$, and consider the random channels $\cR^{U_{[k]}}$ as above. Then, for any state $\rho_{A_{[k]}E}$,
\begin{equation}\label{eq:withoutsmoothing}
\begin{split}
\EE_{U_{[k]}} &\left\| \cR^{U_{[k]}}(\rho_{A_{[k]}E}) - \tau_B\ox\rho_E \right\|_1 \\
&\leq \sum_{\emptyset\neq I\subseteq[k]} 
          D_I^{2-\frac{2}{\alpha_I}}
          2^{\frac{2}{\alpha_I}-1}
          \exp_2\left[ \left(1-\frac{1}{\alpha_I}\right) \left(-\widetilde{H}_{\alpha_I}(A_I|E)_{\rho|\zeta^I_E} - \widetilde{H}_2(A_I|B)_{\tau|\tau_B} \right)\right],
\end{split} 
\end{equation}
where $D_I = 2^{\abs{I}-1}\prod_{i\in I} \left(1-\frac{1}{\abs{A_i}^2}\right)^{-\frac{1}{2}}$ as before, 
$\alpha_I\in(1;2]$ are arbitrary real number numbers and $\zeta_E^I$ are arbitrary states on $E$. 
\end{theorem}

\begin{corollary}
\label{corollary:D_I-main}
Under the same conditions of Theorems \ref{theorem:with_smoothing} and \ref{theorem:without_smoothing}, and in the special case that the CP map is a tensor product, $\cT = \cT_1\ox \cdots \ox\cT_k$ with $\cT_i:A_i\rightarrow B_i$ (see Figure \ref{fig:tensorial}) and $B=B_1\dots B_k$, then Equations \eqref{eq:with_smoothing} and \eqref{eq:withoutsmoothing} hold with $D_I=\prod_{i\in I}\sqrt{1-\frac{1}{\abs{A_i}^2}}$.
\end{corollary}
\vspace{-5pt}
\begin{figure}[ht]
    \centering
    \includegraphics[scale=0.43]{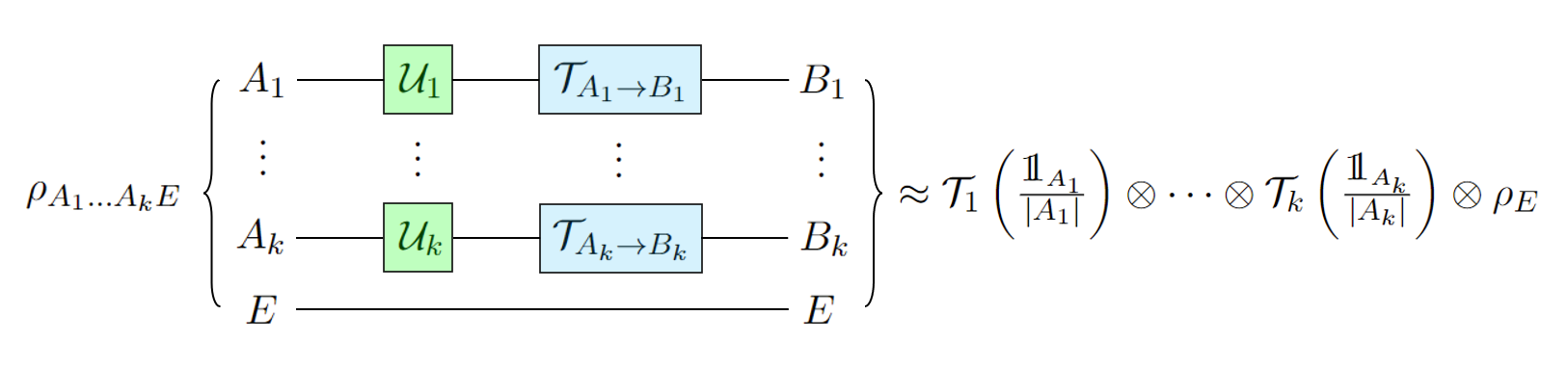}
    \caption{Multi-party decoupling via local random unitary transformations $\mathcal{U}_i$ followed by a fixed local CPTP map $\cT_{A_i\rightarrow B_i}$ on each of the systems $A_i$.}
    \label{fig:tensorial}
\end{figure}

\begin{remark}
\label{rem:2-vs-min}
\normalfont
In Theorem \ref{theorem:with_smoothing}, we will almost always use the lower bound $\widetilde{H}_2^{\epsilon_I}(A_I|E)_{\rho|\zeta_E^I} \geq H_{\min}^{\epsilon_I}(A_I|E)_{\rho|\zeta_E^I}$, and optimize $\zeta_E^I$ for the min-entropy, so that the first term in the exponential of Equation~\eqref{eq:with_smoothing} becomes $H_{\min}^{\epsilon_I}(A_I|E)_{\rho}$.
\end{remark}

\begin{remark}
\label{rem:single-system}
\normalfont
For $k=1$, both the above theorems, or more precisely their versions from Corollary \ref{corollary:D_I-main}, reproduce well-known predecessors:
Theorem \ref{theorem:with_smoothing} is essentially the general decoupling theorem from \cite{Dupuis-et-al:decouple}, albeit without the smoothing of the channel Choi matrix $\tau_{AB}$ (which in practice seems less critical than that of the state). 
Theorem \ref{theorem:without_smoothing} is a restatement of the main result of \cite{Dupuis:Renyi-decouple}; see also the precursor \cite{Mojahedian-et-al:correlation}. 
\end{remark}

\begin{remark}
\label{rem:Hao-Chung-and-friends}
\normalfont
Hao-Chung Cheng, Li Gao, and Mario Berta, in concurrent and independent work \cite{ChengGaoBerta:broadcast}, have discovered the same telescoping trick to obtain similar decoupling bounds, and in fact a multipartite version of the convex-split lemma. In their work, they apply the latter to the simulation of broadcast channels. 
\end{remark}

\section{Lemmata}
\label{sec:lemmata}
\subsection{Technical ingredients for the proofs}
In this subsection, we collect some well-known technical lemmas that will be used throughout the paper. Their proofs can be found in \cite{Dupuis-et-al:decouple}.

\begin{lemma}
\label{lemma:tr_norm_bound}
Let $M$ be a linear operator on $A$ and $\sigma$ a positive defined operator. Then
\begin{equation}
    \| M \|_1\leq \sqrt{\Tr[\sigma]\cdot\Tr[\sigma^{-1/4}M\sigma^{-1/2}M^\dagger \sigma^{-1/4}]}.
\end{equation}

If $M$ is Hermitian this can be simplified to
\begin{equation}
     \| M \|_1\leq \sqrt{\Tr[\sigma]\cdot\Tr[(\sigma^{-1/4}M\sigma^{-1/4})^2]}.
\end{equation}
\end{lemma}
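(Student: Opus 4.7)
The plan is to reduce the claim, via a symmetric conjugation by powers of $\sigma$, to a one-line application of the generalised Hölder inequality for Schatten norms. First I would set $N := \sigma^{-1/4} M \sigma^{-1/4}$, interpreting $\sigma^{-1}$ as the Moore-Penrose pseudoinverse and noting, as usual, that if $M$ has components outside the support of $\sigma$ then the right-hand side is $+\infty$ and the bound is vacuous. With this choice $M = \sigma^{1/4} N \sigma^{1/4}$, so the three-factor Hölder inequality for Schatten norms with exponents $(4,2,4)$, which satisfy $\tfrac14+\tfrac12+\tfrac14=1$, yields
\begin{equation*}
    \|M\|_1 \;=\; \|\sigma^{1/4}\, N\, \sigma^{1/4}\|_1
    \;\leq\; \|\sigma^{1/4}\|_4 \cdot \|N\|_2 \cdot \|\sigma^{1/4}\|_4.
\end{equation*}

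Next I would evaluate the three factors. Each outer factor equals $\|\sigma^{1/4}\|_4 = (\Tr\sigma)^{1/4}$, so their product contributes the promised $\sqrt{\Tr\sigma}$. For the middle factor,
\begin{equation*}
    \|N\|_2^2 \;=\; \Tr(N^\dagger N) \;=\; \Tr\bigl[\sigma^{-1/4} M^\dagger \sigma^{-1/2} M \sigma^{-1/4}\bigr],
\end{equation*}
and the identity $\|N\|_2 = \|N^\dagger\|_2$ (unitary invariance of the $2$-norm, equivalently a single use of cyclicity of the trace) lets me rewrite this as $\Tr\bigl[\sigma^{-1/4} M \sigma^{-1/2} M^\dagger \sigma^{-1/4}\bigr]$, exactly the expression in the statement. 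The Hermitian specialisation is then immediate: $M=M^\dagger$ forces $N=N^\dagger$, so $\Tr(N^\dagger N) = \Tr(N^2) = \Tr\bigl[(\sigma^{-1/4} M \sigma^{-1/4})^2\bigr]$.

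There is no real obstacle here: once the decomposition $M=\sigma^{1/4}N\sigma^{1/4}$ is in hand, the whole argument reduces to choosing the correct exponents $(4,2,4)$ in the generalised Hölder inequality and computing $\|\sigma^{1/4}\|_4=(\Tr\sigma)^{1/4}$. If one prefers not to invoke multi-factor Hölder, an equivalent derivation starts from the polar decomposition $M = U|M|$, rewrites $\|M\|_1 = \Tr(U^\dagger M) = \Tr\bigl(\sigma^{1/4} U^\dagger \sigma^{1/4} N\bigr)$ by cyclicity, and then applies Cauchy-Schwarz in the Hilbert-Schmidt inner product twice --- once to split off the factor $\|N\|_2$, and a second time, using $U^\dagger U \leq \1$, to absorb the partial isometry $U$ into the factor $\Tr\sigma$. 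The Hölder route is cleaner, so it is the one I would write up in full.
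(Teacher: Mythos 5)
Your proof is correct. The paper itself does not prove Lemma~\ref{lemma:tr_norm_bound} — it defers to Dupuis \emph{et al.}~\cite{Dupuis-et-al:decouple}, whose argument is essentially your alternative route: start from the polar decomposition $M=U|M|$ (or equivalently $\|M\|_1=\max_V|\Tr VM|$ over unitaries $V$), insert $\sigma^{1/4}\sigma^{-1/4}$ on either side, and apply Hilbert--Schmidt Cauchy--Schwarz twice, once to peel off $\|\sigma^{-1/4}M\sigma^{-1/4}\|_2$ and once to bound $\Tr(\sigma^{1/2}U\sigma^{1/2}U^\dagger)$ by $\Tr\sigma$. Your main route, the three-factor Schatten Hölder inequality with exponents $(4,2,4)$ applied to $M=\sigma^{1/4}N\sigma^{1/4}$, reaches the same bound in a single step and avoids the polar decomposition entirely; the price is invoking the multi-factor Hölder inequality rather than only Cauchy--Schwarz, but that is a standard tool and the resulting proof is arguably cleaner. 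Both routes handle the support issue (when $M$ has a component orthogonal to $\operatorname{supp}\sigma$, the right-hand side is $+\infty$) and the Hermitian specialisation identically, and your observation that $\Tr(N^\dagger N)=\Tr(NN^\dagger)$ correctly reconciles the order of $M$ and $M^\dagger$ with the statement.
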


\begin{lemma}
\label{lemma:swaps}
Let $M$ and $N$ be two linear operators on $A^{\otimes2}$, and let $F_A$ swap the two copies of the $A$ system: $F_A\left(\sum_{ij}m_{ij}\ket{i}\ket{j}\right)=\sum_{ij}m_{ij}\ket{j}\ket{i}$. Then,  $\Tr(M\otimes N)F_A = \Tr MN$.
\end{lemma}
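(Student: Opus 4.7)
The plan is to give a direct basis-expansion proof, since this is the elementary ``swap trick'' and there is essentially no hidden content beyond bookkeeping indices. First, I read the statement under the understanding that $M$ and $N$ are operators on a single copy of $A$ (otherwise $M\ox N$ would live on $A^{\ox 4}$ and could not be composed with $F_A$, which acts on $A^{\ox 2}$); with that parsing, $F_A$ is the operator on $A\ox A$ defined by $F_A\ket{\varphi}\ox\ket{\psi} = \ket{\psi}\ox\ket{\varphi}$, or equivalently $F_A = \sum_{a,b} \ketbra{a}{b} \ox \ketbra{b}{a}$ in any orthonormal basis $\{\ket{a}\}$ of $A$.

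First, I would fix such a basis $\{\ket{i}\}_{i=1}^{|A|}$ and expand $M = \sum_{i,j} M_{ij}\ketbra{i}{j}$ and $N = \sum_{k,l} N_{kl}\ketbra{k}{l}$. Multiplying $M\ox N$ by $F_A$ on the right collapses two of the four summations via $\bra{j}a\rangle=\delta_{ja}$ and $\bra{l}b\rangle=\delta_{lb}$, yielding
\[
  (M\ox N) F_A \;=\; \sum_{i,j,k,l} M_{ij} N_{kl}\, \ketbra{i}{l} \ox \ketbra{k}{j}.
\]
Taking the trace on $A\ox A$ then forces $i=l$ and $k=j$, leaving $\sum_{j,k} M_{kj} N_{jk} = \sum_k (MN)_{kk} = \Tr MN$, which is exactly the claim.

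The main ``obstacle'' is really only a sanity check on the placement of $F_A$ and on the dimensional parsing of the statement, since neither $M$ nor $N$ is assumed to have any structure. A slightly slicker alternative, which I might include as a remark, is to observe that both sides of the identity are bilinear in $(M,N)$ and to verify it on the spanning set $M=\ketbra{i}{j},\ N=\ketbra{k}{l}$; this reduces the entire lemma to the single computation $\Tr\bigl(\ketbra{i}{j}\ox\ketbra{k}{l}\bigr)F_A = \delta_{il}\delta_{jk} = \Tr\ketbra{i}{j}\ketbra{k}{l}$, which is precisely the action of $F_A$ read off its definition. Either way, no smoothing, no entropy inequality, and no structural property of $M,N$ is used.
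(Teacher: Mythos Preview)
Your proof is correct. The paper does not actually prove this lemma itself but defers to the reference \cite{Dupuis-et-al:decouple}, where the standard argument is exactly the basis expansion you give (or equivalently the bilinearity check on rank-one operators that you mention as an alternative). Your parsing of the hypothesis is also right: $M$ and $N$ must act on a single copy of $A$, not on $A^{\otimes 2}$ as stated, for the identity to even type-check.
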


\begin{lemma}
\label{lemma:Haar}
Let $M$ be a linear operator acting on the Hilbert space $A^{\otimes 2}$. Then, for a random unitary $U$ distributed according to a $2$-design (a set of unitaries that collectively approximates the statistical behavior of the entire unitary group up to second-degree polynomials \cite{DCEE:2-designs}),
\begin{equation}
    \EE_{U_A} \left( U^{\otimes 2}MU^{\otimes2\dagger} \right) = \alpha \1_{AA'}+\beta F_A,
\end{equation}
where $\alpha$ and $\beta$ are such that $\Tr M = \alpha\abs{A}^2+\beta\abs{A}$ and $\Tr MF = \alpha\abs{A}+\beta\abs{A}^2$.
\end{lemma}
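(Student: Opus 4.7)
The plan is to combine two standard ingredients: the defining property of a unitary $2$-design to reduce the expectation to a Haar integral, and Schur-Weyl duality for the diagonal action $U \ox U$ to pin down the shape of the result. Since $U^{\otimes 2} M U^{\otimes 2 \dagger}$ is a polynomial of degree at most $2$ in the matrix entries of $U$ and their conjugates, the $2$-design hypothesis gives
\[
\EE_{U_A} U^{\otimes 2} M U^{\otimes 2\dagger} = \int dU\, U^{\otimes 2} M U^{\otimes 2\dagger} =: T(M),
\]
so it suffices to evaluate the Haar twirl $T(M)$.

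Next I would observe that $T(M)$ is, by construction of the Haar integral, invariant under conjugation by $V \ox V$ for every unitary $V$ on $A$, hence lies in the commutant of the representation $V \mapsto V \ox V$ acting on $A \ox A'$. Schur-Weyl duality identifies this commutant with the two-dimensional algebra spanned by $\1_{AA'}$ and the swap $F_A$, so there exist scalars $\alpha, \beta \in \CC$ with
\[
T(M) = \alpha\, \1_{AA'} + \beta\, F_A.
\]

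To pin down $\alpha$ and $\beta$ I would take the traces of both sides against the two generators of the commutant. Using cyclicity of the trace and unitarity, $\Tr T(M) = \Tr M$. Using additionally that $F_A$ commutes with $U \ox U$, one has $T(M) F_A = \int dU\, U^{\otimes 2}(M F_A) U^{\otimes 2 \dagger}$, and hence $\Tr[T(M) F_A] = \Tr[M F_A]$. On the right-hand side, the ingredients $\Tr \1_{AA'} = \abs{A}^2$, $\Tr F_A = \abs{A}$ (the latter being Lemma \ref{lemma:swaps} applied to $M = N = \1_A$), and $F_A^2 = \1_{AA'}$ yield
\[
\Tr T(M) = \alpha \abs{A}^2 + \beta \abs{A}, \qquad \Tr[T(M) F_A] = \alpha \abs{A} + \beta \abs{A}^2,
\]
which together with the two trace identities above are exactly the linear relations stated in the lemma.

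There is no substantive obstacle: the argument is entirely standard and the only mild subtlety is the commutant description. If one wished to avoid quoting Schur-Weyl duality, one could instead compute the Haar moment $\int dU\, U_{ij} U_{kl} \overline{U_{pq}} \overline{U_{rs}}$ by hand and verify that contraction against $M$ produces precisely the two-term structure $\alpha \1 + \beta F$, with the same coefficients.
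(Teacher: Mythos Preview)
Your proof is correct and follows the standard approach. The paper does not supply its own proof of this lemma; it lists it among ``well-known technical lemmas'' whose proofs are deferred to \cite{Dupuis-et-al:decouple}, and the argument there is exactly the one you give: reduce to the Haar twirl via the $2$-design property, identify the commutant of $V\ox V$ as $\operatorname{span}\{\1,F_A\}$ by Schur-Weyl duality, and fix the coefficients by tracing against $\1$ and $F_A$.
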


We can easily generalise this lemma to a multipartite version:
\begin{corollary}
\label{corollary:schur_general}
Let $M$ be a linear operator acting on $A_1^{\otimes2}\ox\cdots\ox A_k^{\otimes2}$. Then, for $U_{[k]} = U_1 \ox\cdots\ox U_k$ the tensor product of independent unitaries distributed according to $2$-designs, 
\begin{equation}
  \EE_{U_{[k]}} \left( U_{[k]}^{\ox 2} M U_{[k]}^{\ox 2\dagger} \right)
  = \sum_{L\subseteq [k]} c_L\left(F_{A_L}\otimes\1_{A_{L^c}}^{\ox 2}\right),
\end{equation}
where $L^c=[k]\setminus L$ is the set complement of $L$, and the coefficients $c_L$ are determined by the relations
\begin{equation}
\Tr[M(F_{A_L}\otimes\1_{A_{L^c}}^{\ox 2})] = \sum_{T\subseteq [k]} c_T \abs{A_{T\cap L^c}} \abs{A_{T^c\cup L}}^2.
\end{equation}
\end{corollary}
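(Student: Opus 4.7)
My plan is to iterate Lemma \ref{lemma:Haar} once per subsystem, exploiting the independence of the $U_i$, and then to pin down the coefficients $c_L$ by taking traces against basis elements that commute with the random unitary action.

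First, I would average over $U_1$ alone, treating $A_2^{\ox 2}\ox\cdots\ox A_k^{\ox 2}$ as passive. Applied with operator-valued coefficients on the remaining factors (which does not change the algebra), Lemma \ref{lemma:Haar} decomposes $\EE_{U_1}[U_1^{\ox 2} M U_1^{\ox 2\dagger}]$ as an operator multiple of $\1_{A_1}^{\ox 2}$ plus an operator multiple of $F_{A_1}$. Iterating for $U_2,\dots,U_k$, each stage branches each summand into two, producing $2^k$ scalar coefficients $c_L$ attached to $F_{A_L}\ox\1_{A_{L^c}}^{\ox 2}$ as $L$ ranges over subsets of $[k]$. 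This establishes the structural form of the expansion.

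To identify the $c_L$ themselves, I would use that each $F_{A_i}$ and each $\1_{A_i}^{\ox 2}$ commutes with $U_i^{\ox 2}$, and therefore $F_{A_L}\ox\1_{A_{L^c}}^{\ox 2}$ commutes with $U_{[k]}^{\ox 2}$. Multiplying the expansion on the right by $F_{A_L}\ox\1_{A_{L^c}}^{\ox 2}$ and taking the trace, cyclicity together with this commutation lets one move the $U_{[k]}^{\ox 2}$ conjugation off $M$ inside the expectation, so the left-hand side simplifies to $\Tr[M(F_{A_L}\ox\1_{A_{L^c}}^{\ox 2})]$. On the right, the inner product $\Tr[(F_{A_T}\ox\1_{A_{T^c}}^{\ox 2})(F_{A_L}\ox\1_{A_{L^c}}^{\ox 2})]$ factorises across the $k$ sites; by Lemma \ref{lemma:swaps}, site $i$ contributes $\abs{A_i}^2$ when the two local operators coincide ($F\cdot F$ or $\1\cdot\1$) and $\abs{A_i}$ when they differ. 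Grouping these per-site factors according to the intersection pattern of $i$ with $T$ and $L$ reproduces the stated linear system relating the $c_T$ to the traces $\Tr[M(F_{A_L}\ox\1_{A_{L^c}}^{\ox 2})]$.

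The main thing to keep honest, more than a genuine obstacle, is the combinatorial bookkeeping: one has to sort the $2^k$ subsets cleanly by intersection pattern with $L$. The resulting $2^k\times 2^k$ coefficient matrix is a $k$-fold Kronecker product of the $2\times 2$ matrix implicit in Lemma \ref{lemma:Haar}, hence invertible, so the $c_L$ are uniquely determined by the trace data. Beyond Lemmas \ref{lemma:Haar} and \ref{lemma:swaps}, no further ingredients are required.
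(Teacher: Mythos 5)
Your proposal is correct and is essentially the argument the paper intends: the paper states the corollary as an immediate generalization of Lemma \ref{lemma:Haar}, and your route (iterate the single-system twirl with operator-valued coefficients to get the span of $F_{A_L}\ox\1_{A_{L^c}}^{\ox 2}$, then fix the $c_L$ by tracing against these invariant operators and factorising the trace site by site via Lemma \ref{lemma:swaps}) is exactly that, with the Kronecker-product structure of the resulting linear system guaranteeing invertibility.

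One caveat on your last step: your per-site rule (weight $\abs{A_i}^2$ when the local operators agree, $\abs{A_i}$ when they differ) gives the coefficient of $c_T$ as $\abs{A_{T\,\triangle\,L}}\,\abs{A_{(T\triangle L)^c}}^2$, with $T\triangle L$ the symmetric difference. This is the correct relation — it reduces to Lemma \ref{lemma:Haar} at $k=1$ and is what the Kronecker inverse $\bigotimes_{i}\bigl(\begin{smallmatrix}\abs{A_i} & -1\\ -1 & \abs{A_i}\end{smallmatrix}\bigr)$ used in the proof of Lemma \ref{lemma:lambda_I} inverts — but it is not literally the printed formula $\abs{A_{T\cap L^c}}\,\abs{A_{T^c\cup L}}^2$, which assigns weight $\abs{A_i}^2$ to sites in $L\setminus T$ and already fails the $k=1$, $L=\{1\}$ check against Lemma \ref{lemma:Haar}. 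So your computation does not "reproduce the stated linear system" verbatim; it reproduces the corrected one, and the statement as printed appears to contain a typo rather than your derivation containing a gap.
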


\begin{lemma}
\label{lemma:partial_trace}
Let $\omega_{AB}$ be a non-negative operator acting on ${AB}$. Then, 
\begin{equation}
    \frac{1}{\abs{A}} \Tr \omega_{B}^2 
    \leq \Tr \omega_{AB}^2
    \leq \abs{A} \Tr \omega_{B}^2.
\end{equation}
\end{lemma}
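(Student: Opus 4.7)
The plan is to prove the two inequalities separately. The lower bound needs only Hermiticity of $\omega_{AB}$, while the upper bound genuinely requires $\omega_{AB}\geq 0$ (for a general Hermitian counterexample one can take $\omega_{AB} = \ket{00}\bra{11} + \ket{11}\bra{00}$, which has $\omega_B = 0$ but $\Tr\omega_{AB}^2 = 2$).

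For the lower bound, I would perform the orthogonal (in the Hilbert--Schmidt inner product) decomposition
\begin{equation*}
\omega_{AB} = X + \frac{\1_A}{\abs{A}}\ox\omega_B, \qquad X := \omega_{AB} - \frac{\1_A}{\abs{A}}\ox\omega_B,
\end{equation*}
for which $\Tr_A X = 0$ by construction. Expanding the inner product of the two summands and pushing the partial trace through the tensor factor gives $\Tr X\bigl(\frac{\1_A}{\abs{A}}\ox\omega_B\bigr) = \frac{1}{\abs{A}}\Tr\omega_B(\Tr_A X) = 0$, so the two pieces are orthogonal, Pythagoras applies, and
\begin{equation*}
\Tr\omega_{AB}^2 = \Tr X^2 + \frac{1}{\abs{A}}\Tr\omega_B^2 \geq \frac{1}{\abs{A}}\Tr\omega_B^2.
\end{equation*}

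For the upper bound, the crux is the operator inequality $\omega_{AB} \leq \abs{A}\,\1_A\ox\omega_B$, valid for any $\omega_{AB}\geq 0$. Once this is in hand, multiplying both sides by $\omega_{AB}\geq 0$ and using that the trace of a product of two PSD operators is non-negative gives $\Tr\omega_{AB}^2 \leq \abs{A}\,\Tr\omega_{AB}(\1_A\ox\omega_B) = \abs{A}\,\Tr\omega_B^2$. To establish the operator inequality itself, I would spectrally decompose $\omega_{AB} = \sum_k q_k\proj{\psi_k}$ with $q_k\geq 0$ and reduce to pure states by linearity. For each $\ket{\psi_k}$ with Schmidt decomposition $\ket{\psi_k} = \sum_i\sqrt{p_i}\ket{i}_A\ket{i}_B$ of Schmidt rank $r_k\leq\abs{A}$, restricting the inequality to the Schmidt subspace reduces it to the matrix inequality $\abs{A}\cdot\mathrm{diag}(p_i)\succeq (\sqrt{p_ip_j})_{ij}$, which follows from Cauchy--Schwarz combined with $r_k\leq\abs{A}$; outside the Schmidt subspace $\proj{\psi_k}$ vanishes while the right-hand side remains PSD on the relevant support.

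The main obstacle is the operator inequality in the upper bound: one must correctly invoke positivity through the spectral decomposition and check the support matching between $\proj{\psi_k}$ and $\1_A\ox(\psi_k)_B$. The rest of the argument is algebraic bookkeeping.
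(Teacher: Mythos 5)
Your proof is correct. Note that the paper itself does not prove this lemma at all: it is listed among the standard technical ingredients whose proofs are deferred to Dupuis \emph{et al.}~[Dupuis-et-al:decouple], so there is no in-paper argument to match; your write-up is a valid self-contained substitute. Both halves check out: the Pythagorean decomposition $\omega_{AB}=X+\frac{\1_A}{|A|}\ox\omega_B$ with $\Tr_A X=0$ cleanly gives the lower bound (and, as you say, needs only Hermiticity), and the upper bound via the operator inequality $\omega_{AB}\leq |A|\,\1_A\ox\omega_B$, reduced to pure states by spectral decomposition and then to the rank-one matrix inequality $|A|\,\mathrm{diag}(p_i)\succeq(\sqrt{p_ip_j})_{ij}$ on the Schmidt-diagonal subspace, is sound (the block structure you gesture at does hold, since $\1_A\ox\psi_B$ is diagonal in a product basis containing the Schmidt vectors $\ket{i}_A\ket{i}_B$, while $\proj{\psi}$ is supported on their span); your counterexample showing positivity is genuinely needed for the upper bound is also correct. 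One streamlining worth knowing, and closer to how this is usually handled in the decoupling literature: for positive $\omega_{AB}$ the two inequalities are dual to each other under purification — if $\psi_{ABC}$ purifies $\omega_{AB}$, then $\Tr\omega_{AB}^2=\Tr\omega_C^2$ and $\Tr\omega_B^2=\Tr\omega_{AC}^2$, so the upper bound for $\omega_{AB}$ is exactly the lower bound applied to $\omega_{AC}$ — which lets you prove only your (easy) Pythagorean half and avoid establishing the operator inequality from scratch; alternatively, $\omega_{AB}\leq|A|\,\1_A\ox\omega_B$ is itself a standard fact one may simply cite.
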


\begin{lemma}
\label{lemma:tr_dist}
The normalized trace distance $D(\rho,\sigma)$ between two quantum states $\rho,\sigma\in \cS(A)$ is equal to the largest probability difference that the two states could give to the same measurement outcome $\Lambda$:
    \begin{equation}
        D(\rho,\sigma)=\max_{0\leq\Lambda\leq 1}\Tr\{\Lambda(\rho-\sigma)\}.
    \end{equation}
\end{lemma}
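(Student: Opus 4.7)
The plan is to use the Jordan (Hahn) decomposition of the Hermitian operator $\rho - \sigma$ and then verify the claimed equality by matching an upper bound with an explicit attaining measurement.

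First, I would invoke the spectral theorem to write $\rho - \sigma = P - Q$ with $P, Q \geq 0$ supported on orthogonal subspaces (the positive and negative parts of $\rho-\sigma$). By definition of the trace norm, $\|\rho-\sigma\|_1 = \Tr\abs{P-Q} = \Tr P + \Tr Q$. Since $\Tr\rho = \Tr\sigma = 1$, we have $\Tr(\rho-\sigma) = 0$, hence $\Tr P = \Tr Q$, which gives $D(\rho,\sigma) = \frac{1}{2}\|\rho-\sigma\|_1 = \Tr P$.

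For the upper bound, for any $\Lambda$ with $0 \leq \Lambda \leq \1$ I would estimate
\[
  \Tr[\Lambda(\rho-\sigma)] = \Tr[\Lambda P] - \Tr[\Lambda Q] \leq \Tr[\Lambda P] \leq \Tr P = D(\rho,\sigma),
\]
where the first inequality uses $\Tr[\Lambda Q]\geq 0$ (both factors are non-negative, so the product has non-negative trace) and the second uses $\1 - \Lambda \geq 0$ together with $P \geq 0$. For attainability, I would choose $\Lambda^\star$ to be the orthogonal projector onto the support of $P$. Since the supports of $P$ and $Q$ are orthogonal, $\Lambda^\star Q = 0$, so $\Tr[\Lambda^\star(\rho-\sigma)] = \Tr P - 0 = D(\rho,\sigma)$, matching the upper bound and turning the supremum into a maximum.

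No serious obstacle is anticipated: this is a classical identity, and the only subtlety worth flagging is that the supremum over $0 \leq \Lambda \leq \1$ is actually attained by a projector, directly furnished by the Jordan decomposition; no continuity or compactness argument is needed.
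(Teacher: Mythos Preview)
Your proof is correct and is the standard argument via the Jordan decomposition of $\rho-\sigma$. The paper does not actually prove this lemma itself but lists it among ``well-known technical lemmas'' whose proofs are deferred to \cite{Dupuis-et-al:decouple}, so there is no in-paper proof to compare against.
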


\begin{theorem}[Uhlmann's theorem for the purified distance]
\label{theorem:Uhlmann}
Let $\rho,\sigma\in \cS(A)$ and $\ket{\psi}\in A\ox A'$ be a purification of $\rho$, with $A'\cong A$. Then, there exists a purification $\ket{\phi}\in A\ox A'$ of $\sigma$ such that $P(\rho,\sigma)=P(\psi,\phi)$.
\end{theorem}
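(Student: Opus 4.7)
The plan is to reduce the statement to the classical Uhlmann theorem for the fidelity, $F(\rho,\sigma)=\max_{\ket{\phi}}|\braket{\psi}{\phi}|^2$, where the maximum ranges over purifications $\ket{\phi}\in A\ox A'$ of $\sigma$ while $\ket{\psi}$ is held fixed. Once this identity is in hand, the claim is immediate: for pure states $F(\psi,\phi)=|\braket{\psi}{\phi}|^2$, so the $\ket{\phi}$ attaining the maximum satisfies $F(\psi,\phi)=F(\rho,\sigma)$, and hence $P(\psi,\phi)=\sqrt{1-F(\psi,\phi)}=\sqrt{1-F(\rho,\sigma)}=P(\rho,\sigma)$.

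The first step is to parametrise purifications via a canonical form. With $\ket{\Omega}_{AA'}=\sum_i\ket{i}_A\ox\ket{i}_{A'}$ the (unnormalised) maximally entangled vector, the given purification can be written as $\ket{\psi}=(\sqrt{\rho}\ox V)\ket{\Omega}$ for some unitary $V$ on $A'$, and every purification of $\sigma$ on the same ancilla has the form $\ket{\phi}=(\sqrt{\sigma}\ox W)\ket{\Omega}$ for some unitary $W$ on $A'$; this exhausts all purifications because any two purifications on the same auxiliary space differ by a unitary on $A'$ (after extending the relevant partial isometry on the support of $\rho$, resp.\ $\sigma$, to a full unitary, which is why we want $A'\cong A$ to have enough room). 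Using the identity $\bra{\Omega}(M\ox N)\ket{\Omega}=\Tr(M^T N)$, which follows by direct computation in the $\ket{i}$ basis, the overlap becomes
\[
  \braket{\psi}{\phi}=\Tr\bigl((\sqrt{\rho}\sqrt{\sigma})^T\, V^\dagger W\bigr).
\]
Since $V$ is fixed while $W$ ranges over all unitaries on $A'$, the matrix $U:=V^\dagger W$ also ranges over all unitaries on $A'$, so $\max_{\ket{\phi}}|\braket{\psi}{\phi}|=\max_{U\text{ unitary}}\bigl|\Tr\bigl((\sqrt{\rho}\sqrt{\sigma})^T U\bigr)\bigr|$.

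The final step invokes the variational characterisation of the trace norm, $\max_{U\text{ unitary}}|\Tr(XU)|=\|X\|_1$, attained by taking $U$ to be the (unitarily extended) partial isometry from the polar decomposition of $X$. Applied to $X=(\sqrt{\rho}\sqrt{\sigma})^T$, and using that transposition preserves singular values, this gives $\max_{\ket{\phi}}|\braket{\psi}{\phi}|=\|\sqrt{\rho}\sqrt{\sigma}\|_1=\sqrt{F(\rho,\sigma)}$; squaring establishes the classical Uhlmann identity, and choosing the maximising $W$ yields the desired $\ket{\phi}$. The only delicate point is the rank-deficient case, where the polar partial isometry must be extended to a full unitary on $A'$; this is always possible because $A'\cong A$ provides sufficient dimension, and everything else is routine manipulation.
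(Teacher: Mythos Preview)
Your proof is correct and is the standard argument for Uhlmann's theorem: parametrise purifications via the canonical $(\sqrt{\rho}\ox V)\ket{\Omega}$ form, compute the overlap using $\bra{\Omega}(M\ox N)\ket{\Omega}=\Tr(M^T N)$, and optimise via the variational formula $\max_U|\Tr(XU)|=\|X\|_1$. The paper itself does not supply a proof of this statement; it lists it among the ``well-known technical lemmas'' whose proofs are deferred to \cite{Dupuis-et-al:decouple}, so there is nothing to compare against beyond noting that your argument is precisely the classical one.
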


\subsection{Central lemmas}
The proofs of the main theorems (which we will present in the next section) rely on a series of new lemmas listed and proved in this section.
\begin{lemma}
\label{lemma:lambda_I}
Given any general CP map $\cT_I:A_{I}\rightarrow B$ with $I\subset [k]$, $\cT_I\circ\Theta_{A_I}$ is $\lambda_I$-expected-contractive with
\[
\lambda_I=\frac{2^{\abs{I}-1}}{\displaystyle\prod_{i\in I}\sqrt{1-\frac{1}{\abs{A_i}^2}}}\|\tau_{A_IB}\|_2.
\]
\end{lemma}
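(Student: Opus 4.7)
The plan is to apply Jensen's inequality $\EE\|\omega\|_2 \le \sqrt{\EE\|\omega\|_2^2}$ with $\omega := (\cT_I\circ\Theta_I)(\rho_{A_IE})$, followed by the swap trick (Lemma \ref{lemma:swaps}) to reduce the second moment to a Haar twirl. A direct computation using Lemma \ref{lemma:Haar} and $\EE_{U_i}\cU_i = \cD_{A_i}$ gives the basic building block
\[
\EE_{U_i}(\cU_i-\cD_{A_i})^{\otimes 2}(M) \;=\; \frac{\Tr[\Pi_i\,M]}{|A_i|^2-1}\,\Pi_i, \qquad \Pi_i := F_{A_i}-\tfrac{\1_{A_i^{\otimes 2}}}{|A_i|}.
\]
Since the $U_i$ are independent, $\EE_{U_I}\Theta_I^{\ox 2}$ tensorises across $I$, and the factorisation $F_{BE}=F_B\ox F_E$ cleanly decouples the $\tau$-side from the $\rho$-side. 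One obtains
\[
\EE\|\omega\|_2^2 \;=\; \frac{\Tr[\cT_I^{\ox 2}(\Pi_I)\,F_B]\cdot\Tr[(\Pi_I\ox F_E)(\rho\ox\rho)]}{\prod_{i\in I}(|A_i|^2-1)}, \qquad \Pi_I := \bigotimes_{i\in I}\Pi_i.
\]

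Next, I would expand $\Pi_I = \sum_{J\subseteq I}(-1)^{|I\setminus J|}|A_{I\setminus J}|^{-1}F_{A_J}$ with $F_{A_J}=\bigotimes_{i\in J}F_{A_i}$, and use the Choi representation of $\cT_I$ to verify the two key identities (generalising the single-system Dupuis et al. computation)
\[
\Tr[\cT_I^{\ox 2}(F_{A_J})F_B] = |A_I|^2\,\|\tau_{A_JB}\|_2^2, \qquad \Tr[(F_{A_J}\ox F_E)(\rho\ox\rho)] = \|\rho_{A_JE}\|_2^2,
\]
where $\tau_{A_JB}=\Tr_{A_{I\setminus J}}\tau_{A_IB}$. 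Both numerator factors then become signed sums $\sum_J(-1)^{|I\setminus J|}a_J$ of nonnegative terms, and Lemma \ref{lemma:partial_trace} (iterated over $A_{I\setminus J}$) gives the uniform bounds $a_J := |A_{I\setminus J}|^{-1}\|\tau_{A_JB}\|_2^2 \le \|\tau_{A_IB}\|_2^2$ and analogously $\le\|\rho_{A_IE}\|_2^2$ for the $\rho$-side.

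The final step exploits the alternating-sign structure: exactly $2^{|I|-1}$ subsets $J\subseteq I$ have $|I\setminus J|$ even and $2^{|I|-1}$ have it odd, so for any nonnegative sequence
\[
\Bigl|\sum_J(-1)^{|I\setminus J|}a_J\Bigr| \;\le\; \max\!\Bigl\{{\textstyle\sum_{+}}a_J,\;{\textstyle\sum_{-}}a_J\Bigr\} \;\le\; 2^{|I|-1}\max_J a_J.
\]
Applying this parity bound to each of the two factors, plugging in and taking the square root produces exactly the stated $\lambda_I = 2^{|I|-1}\|\tau_{A_IB}\|_2 / \prod_{i\in I}\sqrt{1-|A_i|^{-2}}$.

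The main obstacle is this last sign-aware step: a blind triangle inequality across the $2^{|I|}$ subsets would cost an extra factor of $2$, so the parity split between positive and negative summands is essential for obtaining $2^{|I|-1}$ rather than $2^{|I|}$. The Choi-representation identity $\Tr[\cT_I^{\ox 2}(F_{A_J})F_B]=|A_I|^2\|\tau_{A_JB}\|_2^2$, reached by a short Kraus computation combined with the swap trick and a partial trace over $A_{I\setminus J}$, is the other point requiring care; once in hand, the rest of the argument is algebraic bookkeeping.
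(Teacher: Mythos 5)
Your proof is correct, and its endgame coincides with the paper's: the same parity split into the $2^{|I|-1}$ subsets of even and the $2^{|I|-1}$ subsets of odd complement size, combined with Lemma \ref{lemma:partial_trace} applied to each $\frac{1}{|A_{J^c}|}\|\tau_{A_JB}\|_2^2$ and $\frac{1}{|A_{J^c}|}\|\rho_{A_JE}\|_2^2$, which is exactly what yields $2^{|I|-1}$ rather than $2^{|I|}$ in both arguments. Where you genuinely differ is the route to the exact second-moment expression. The paper first proves the alternating-sign identity $\EE_{U_I}\left((\cT_I\circ\Theta_I)\rho\right)^2=\sum_{J\subseteq I}(-1)^{|J^c|}\EE_{U_J}\left[\cT_I\circ(\cU_J\ox\cD_{A_{J^c}})\rho\right]^2$ by induction on $|I|$, then evaluates the trace via the multipartite Schur--Weyl average (Corollary \ref{corollary:schur_general}), observes the cancellation of all $L\neq I$ terms, and determines $c_I$ by the matrix inversion taken from \cite{chakraborty-et-al:multisender.decoupling}. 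You instead note that $\EE_{U_i}\Theta_{A_i}^{\ox 2}$ is the Hilbert--Schmidt-orthogonal projection onto the span of $\Pi_i=F_{A_i}-\1_{A_i}^{\ox 2}/|A_i|$ (because $\EE_{U_i}\cU_i^{\ox 2}$ projects onto $\operatorname{span}\{\1,F_{A_i}\}$ while $\cD_{A_i}^{\ox 2}$ projects onto $\operatorname{span}\{\1\}$, and $\Tr\Pi_i^2=|A_i|^2-1$), and tensorise over $i\in I$ by independence; this gives the exactly factorised $\tau$-side/$\rho$-side expression in one step, and after expanding $\Pi_I$ over subsets $J$ your two signed sums reproduce precisely the paper's Equations~\eqref{eq:trace.approx} and \eqref{eq:c.approx}, so the bound and the final $\lambda_I$ agree. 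Your two key identities also check out: $\Tr\bigl[\cT_I^{\ox 2}(F_{A_J}\ox\1_{A_{I\setminus J}}^{\ox 2})F_B\bigr]=|A_I|^2\|\tau_{A_JB}\|_2^2$ because the map $\rho_{A_J}\mapsto\cT_I\bigl(\rho_{A_J}\ox\1_{A_{I\setminus J}}/|A_{I\setminus J}|\bigr)$ has Choi state $\tau_{A_JB}=\Tr_{A_{I\setminus J}}\tau_{A_IB}$, and the $\rho$-side identity is the swap trick with identity on the untouched copies. What your variant buys is economy and transparency: no induction, no matrix inversion, and the factorisation of the second moment into a channel factor and a state factor is manifest from the outset rather than emerging from cancellations. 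Two small presentational points: write the implicit $\1_{A_{I\setminus J}}^{\ox 2}$ tensor factors and the partial traces explicitly in your displayed identities, and remark that nothing in the computation uses positivity or normalisation of $\rho_{A_IE}$, so the bound holds for arbitrary Hermitian operators, as the definition of $\lambda$-expected-contractivity requires.
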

\begin{proof}
Recall that $\cT_I\circ\Theta_{A_I}$ is $\lambda_I$-expected-contractive if $\EE_{U_I}\|\left(\cT_I\circ\Theta_{A_I}\right)\rho_{A_IE}\|_2\leq\lambda_I\|\rho_{A_IE}\|_2$ for arbitrary $\rho_{AE}$, where $\EE_{U_I}$ is the expectation value over each ${U_i}$ with $i\in I$. We will start by assuming $\rho=\rho^\dagger$ to be Hermitian. Using Jensen's inequality, we find $(\EE_{U_I}\|\left(\cT_I\circ\Theta_{A_I}\right)\rho_{A_IE}\|_2)^2\leq\EE_{U_I}\|\left(\cT_I\circ\Theta_{A_I}\right)\rho_{A_IE}\|_2^2$. Now we can expand the square of the Hilbert-Schmidt norm as a trace without carrying the root throughout the demonstration:
\[
\EE_{U_I}\|\left(\cT_I\circ\Theta_{A_I}\right)\rho_{A_IE}\|_2^2=\EE_{U_I}\Tr\left[\left(\left(\cT_I\circ\Theta_{A_I}\right)\rho_{A_IE}\right)^2\right]=\Tr\left[\EE_{U_I}\left(\left(\cT_I\circ\Theta_{A_I}\right)\rho_{A_IE}\right)^2\right],
\]
where we have used the linearity of the trace in the second equality, and we are imposing for simplicity the additional restriction that the matrices $\rho_{A_IE}$ are Hermitian, and will prove in the end that the results can be generalised to any matrix. Defining a subset $J\subseteq I$ and its complement $J^c=I\setminus J$ we can write the expectation value as follows:
\begin{equation}\label{eq:AlternateSigns}
\EE_{U_{I}}\left((\cT_I\circ\Theta_{I})\rho_{A_IE}\right)^2=\sum_{ J\subseteq I}(-1)^\abs{J^c}\EE_{U_{J}}\left[\cT_I\circ(\cU_J\ox\cD_{A_{J^c}})\rho_{A_IE}\right]^2.
\end{equation}
We prove this claim by induction on the cardinality of $I$. For $\abs{I}=1$ (that is $A_I=A$) we have $
\EE_U\left[(\cT\circ\Theta_A)\rho_{AE}\right]^2=\EE_U[(\cR^U-\cT\circ\cD_{A})\rho_{AE}]^2$. Expanding the binomial and remembering our condition $\EE_U\cU(\rho_{AE})=\cD_A(\rho_{AE})$ we find:
\begin{equation*}
\begin{split}
\EE_U\left((\cT\circ\Theta_A)\rho_{AE}\right)^2&=\EE_U[\cR^U(\rho_{AE})^2]-2\EE_U[\cR^U(\rho_{AE})]\cdot[\cT\circ\cD_A(\rho_{AE})
]+[\cT\circ\cD_A(\rho_{AE})]^2\\
&=\EE_U[\cT\circ\cU(\rho_{AE})^2]-[\cT\circ\cD_A(\rho_{AE})]^2\\
&=\sum_{ J\subseteq I}(-1)^\abs{J^c}\EE_{U_{J}}\left[\cT_I\circ(\cU_J\ox\cD_{A_{J^c}})\rho_{A_IE}\right]^2,
\end{split}
\end{equation*}
because $\abs{J} \in \{0,1\}$. We continue the induction by assuming that Equation~\eqref{eq:AlternateSigns} is true for some $I$, and we want to pass to a bigger set $I'=I\stackrel{.}{\cup}\{i_0\}$ (with $\stackrel{.}{\cup}$ the disjoint union) to compute the expectation value $\EE_{U_{I'}}[(\cT_{I'}\circ\Theta_{A_{I'}})\rho_{A_{I'}E}]^2$ on $A_{I'}=A_I\otimes A_{i_0}$. Similarly let us define $J'=J\stackrel{.}{\cup}\{i_0\}$ for a subset $J\subseteq I$, that is $A_{J'}=A_J\otimes A_{i_0}$. Then, expanding $\Theta_{A_i}:=\cU_i-\cD_{A_i}$ and $\Theta_{A_I}=\bigotimes_{i\in I}(\cU_i-\cD_{A_i})$, and rearranging the terms (recall $\cD_{A_{I^c}}=\bigotimes_{i\not\in I}\cD_i$), we can re-write:
\begin{equation*}
\begin{split}
\EE_{U_{I'}}\left[(\cT_{I'}\circ\Theta_{A_{I'}})\rho_{A_{J'}E}\right]^2
 &=\EE_{U_{i_0}}\EE_{U_I}\left[\cT_{I'}\circ (\Theta_{A_I}\ox\Theta_{A_{i_0}})\rho_{A_{I'}E}\right]^2\\
 &=\sum_{ J\subseteq I}(-1)^\abs{J^c}\EE_{U_{J}}\left[\cT_{I'}\circ(\cU_J\ox\cD_{A_{J^c}})\rho_{A_{I'}E}\right]^2 \\
 &\phantom{=======:}
  \ox\EE_{U_{i_0}}\left[\cT_{I'}\circ(\cU_{{i_0}}-\cD_{A_{i_0}})\rho_{A_{I'}E}\right]^2.
\end{split}
\end{equation*}
By expanding the square (just as we did at the beginning of the induction) we can write $\EE_{U_{i_0}}\left[\cT_{I'}\circ(\cU_{{i_0}}-\cD_{A_{i_0}})\rho_{A_{I'}E}\right]^2=\EE_{U_{i_0}}\left[(\cT_{I'}\circ\cU_{{i_0}})\rho_{A_{I'}E}\right]^2-\left[(\cT_{I'}\circ\cD_{A_{i_0}})\rho_{A_{I'}E}\right]^2$. This allows us to write
\begin{equation*}
\begin{split}
\sum_{J\subseteq I} &(-1)^\abs{J^c}\EE_{U_{J}}\left[\cT_{I'}\circ(\cU_J\ox\cD_{A_{J^c}})\rho_{A_{I'}E}\right]^2\ox\EE_{U_{i_0}}\left[\cT_{I'}\circ(\cU_{{i_0}}-\cD_{A_{i_0}})\rho_{A_{I'}E}\right]^2\\
&=\sum_{ J\subseteq I}(-1)^\abs{J^c}\EE_{U_{J}}\EE_{U_{i_0}}\left[\cT_{I'}\circ(\cU_{J'}\ox\cD_{A_{J^c}})\rho_{A_{I'}E}\right]^2\\
&\phantom{====:}
 + (-1)^\abs{J'^c}\EE_{U_{J}}\left[\cT_{I'}\circ(\cU_J\ox\cD_{A_{J'^c}})\rho_{A_{I'}E}\right]^2\\
&= \sum_{ J\subseteq I'}(-1)^\abs{J^c}\EE_{U_{J}}\left[\cT_{I'}\circ(\cU_J\ox\cD_{A_{J^c}})\rho_{A_{I'}E}\right]^2.
\end{split}
\end{equation*}
This completes the proof by induction. Now we perform the trace:
\begin{equation*}
\begin{split}
\Tr\left[\EE_{U_I}\left(\left(\cT_I\circ\Theta_{A_I}\right)\rho_{A_IE}\right)^2\right] 
  &=\Tr\left[\sum_{ J\subseteq I}(-1)^\abs{J^c}\EE_{U_{J}}\left[\cT_I\circ(\cU_J\ox\cD_{A_{J^c}})\rho_{A_IE}\right]^2\right]\\
  &=\Tr\left[\sum_{ J\subseteq I}(-1)^\abs{J^c}\EE_{U_{J}}\cR^{U_I}\left(\rho_{A_JE}\ox\frac{\1_{A_{J^c}}}{\abs{A_{J^c}}}\right)^2\right]\\
  &= \Tr\left[\sum_{ J\subseteq I}(-1)^\abs{J^c}\EE_{U_{J}}\cR^{U_I}(\sigma_J)^2\right],
\end{split}
\end{equation*}
using the abbreviation $\sigma_J:= \rho_{A_JE}\ox\frac{\1_{A_{J^c}}}{\abs{A_{J^c}}}$ for ${J\subseteq I}$. 
Now we use the swap trick (as in \cite{Dupuis-et-al:decouple}) to simplify this expression:

\vspace{-5pt}
\begin{equation*}
\begin{split}
\Tr\left[\sum_{ J\subseteq I}(-1)^\abs{J^c}\EE_{U_{J}}\cR^{U_I}(\sigma_J)^2\right]
&=\Tr\left[\sum_{ J\subseteq I}(-1)^\abs{J^c}\EE_{U_{J}}\cR^{U_I}(\sigma_J)^{\ox2}F_{BE}\right]\\
&=\Tr\left[\sum_{ J\subseteq I}(-1)^\abs{J^c}\EE_{U_{J}}\left(\bigotimes_{i\in I}(U_i^\dagger)^{\ox2}(\cT^{\ox2})^\dagger F_{BE}(U_i)^{\ox2}\right)\sigma_J^{\ox2}\right]\\
&=\Tr\left[\sum_{ J\subseteq I}(-1)^\abs{J^c}\sigma_J^{\ox2}\left(\sum_{L\subseteq I}c_L(F_{LE}\ox\1_{L^c}^{\ox2})\right)\right]\\
&=\sum_{J,L\subseteq I}(-1)^\abs{J^c}\Tr[\rho_{A_{[J\cap L]} E}^2]c_L\prod_{i\in[J^c\cap L]}\frac{1}{\abs{A_i}},
\end{split}
\end{equation*}
where we have used Corollary \ref{corollary:schur_general} in the third equality. Notice that for a fixed $L$ we have $2^\abs{L}$ possible values for $\Tr[\rho_{A_{[J\cap L]} E}^2]\prod_{i\subseteq[J^c\cap L]}\frac{1}{\abs{A_i}}$. If we expand the sum, we find $2^{\abs{I}-\abs{L}}$ elements for each of the $2^\abs{L}$ possible values of the trace and product. Notice also that $2^{\abs{I}-\abs{L}-1}$ of this elements are positive and $2^{\abs{I}-\abs{L}-1}$ of them are negative. This implies that $\forall L\neq I$ the sum cancels. We just have to compute the case where $L$ is the whole $I$. We find:
\begin{equation}\label{eq:trace.approx}
\Tr\left[\EE_{U_I}\left(\left(\cT_I\circ\Theta_{A_I}\right)\rho_{A_IE}\right)^2\right]=c_I\sum_{J\subseteq I}\frac{(-1)^\abs{J^c}}{\abs{A_{J^c}}}\Tr[\rho_{A_JE}^2].
\end{equation}

To compute $c_I$ we follow the steps in \cite{chakraborty-et-al:multisender.decoupling}, let us define a new auxiliary subset $P\subseteq I$:
\[
\begin{bmatrix}
c_0 \\
\vdots \\
c_P \\
\vdots \\
c_I 
\end{bmatrix}=\frac{\abs{A_I}}{\displaystyle\prod_{i\in I}\left(\abs{A_i}^2-1\right)}\bigotimes_{i\in I} \begin{bmatrix}
\abs{A_i} & -1 \\
-1 & \abs{A_i} 
\end{bmatrix}\begin{bmatrix}
\Tr(\tau_B^2) \\
\vdots \\
\Tr(\tau_{A_PB}^2)  \\
\vdots \\
\Tr(\tau_{A_IB}^2) 
\end{bmatrix},
\]
thus
\begin{equation}\label{eq:c.approx}
\begin{split}
    c_I&=\frac{\abs{A_I}}{\displaystyle\prod_{i\in I}\left(\abs{A_i}^2-1\right)}\sum_{P\subseteq I}(-1)^{\abs{P^c}}\abs{A_P}\Tr(\tau_{A_PB}^2)\\
    &=\frac{1}{\displaystyle\prod_{i\in I}\left(1-\frac{1}{\abs{A_i}^2}\right)}\sum_{P\subseteq I} \frac{(-1)^{\abs{P^c}}}{\abs{A_{P^c}}}\Tr(\tau_{A_PB}^2).
\end{split}
\end{equation}

We can find an upper bound for Equation~\eqref{eq:trace.approx} by keeping only the positive terms of the sum, these are the terms such that $\abs{J^c}$ is even. And then transforming each partial trace $\Tr[\rho_{A_JE}^2]$ with $J\subseteq I$ to $\Tr[\rho_{A_IE}^2]$, using Lemma \ref{lemma:partial_trace} we find:
\[
  \sum_{J\subseteq I}(-1)^\abs{J^c}\frac{\Tr[\rho_{A_JE}^2]}{\abs{A_{J^c}}}
  \leq \!\!\sum_{\abs{J^c} \text{ even}}\!\! \frac{\Tr[\rho_{A_JE}^2]}{\abs{A_{J^c}}}
  \leq \Tr[\rho_{A_IE}^2] \left(\sum_{\abs{J^c} \text{ even}}\!\! 1\right)
  = 2^{\abs{I}-1}\Tr[\rho_{A_IE}^2],
\]
and similarly 
\[
  \sum_{J\subseteq I}(-1)^\abs{J^c}\frac{\Tr[\rho_{A_JE}^2]}{\abs{A_{J^c}}}
  \geq -\!\!\sum_{\abs{J^c}  \text{ odd}} \!\!\frac{\Tr[\rho_{A_JE}^2]}{\abs{A_{J^c}}}
  \geq -\Tr[\rho_{A_IE}^2] \left(\sum_{\abs{J^c} \text{ odd}}\!\! 1\right)
  = - 2^{\abs{I}-1}\Tr[\rho_{A_IE}^2],
\]
where we have used that any set $I$ has $2^{\abs{I}-1}$ subsets with an even number of elements, and the same number of subsets with an odd number of elements. 
With the same method we can bound $c_I$ from Equation~\eqref{eq:c.approx} and find 
\begin{align*}
c_I 
  =\frac{1}{\displaystyle\prod_{i\in I}\left(1-\frac{1}{\abs{A_i}^2}\right)}\sum_{L\subseteq I} \frac{(-1)^{\abs{P^c}}}{\abs{A_{P^c}}}\Tr(\tau_{A_PB}^2)
  &\leq \frac{2^{\abs{I}-1}}{\displaystyle\prod_{i\in I}\left(1-\frac{1}{\abs{A_i}^2}\right)}\Tr[\tau_{A_IB}^2], \\
c_I 
  &\geq - \frac{2^{\abs{I}-1}}{\displaystyle\prod_{i\in I}\left(1-\frac{1}{\abs{A_i}^2}\right)}\Tr[\tau_{A_IB}^2].
\end{align*}
Putting together these bounds we obtain
\begin{equation}\label{eq:lambda_hermitian}
  \left( \EE_{U_I} \left\|\left(\cT_I\circ\Theta_{A_I}\right)\rho_{A_IE}\right\|_2\right)^2
  \leq\EE_{U_I} \left\|\left(\cT_I\circ\Theta_{A_I}\right)\rho_{A_IE}\right\|_2^2
  \leq\frac{4^{\abs{I}-1}}{\displaystyle\prod_{i\in I}\left(1-\frac{1}{\abs{A_i}^2}\right)} \left\|\tau_{A_IB}\right\|_2^2 \left\|\rho_{A_IE}\right\|_2^2,
\end{equation}
which allows us to identify $\lambda_I^2\geq\left(4^{\abs{I}-1}/\prod_{i\in I}1-\frac{1}{\abs{A_i}^2}\right)\|\tau_{A_IB}\|_2^2$, under the restriction that $\rho_{A_IE}$ is Hermitian. 

It only remains to show how this result can be generalised from Hermitian to arbitrary matrices. To start, any matrix $\eta=\eta_{AE}$ can be decomposed into a Hermitian component $\eta_R$ and an anti-Hermitian component $i\eta_I$ (with $\eta_I$ Hermitian): $\eta=\eta_R+i\eta_I$. Now, applying the definition of the 2-norm we have
\[\begin{split}
  \left\|\Delta^U(\eta)\right\|_2^2
   = \Tr \left(\Delta^U(\eta)\right)^\dagger\!\Delta^U(\eta)
  &= \Tr \Delta^U(\eta_R-i\eta_I)\Delta^U(\eta_R+i\eta_I) \\
  &= \Tr\left[\Delta^U(\eta_R)^2+\Delta^U(\eta_I)^2\right],
\end{split}\]
where we have used the decomposition of $\eta$ (and $\eta^\dagger=\eta_R-i\eta_I$) in the second inequality, and the Hermitian preserving property of $\Delta^U=\cR^U-\cP^{\tau_B}$ and the distributive law in the last. With the linearity of the trace and the definition of the 2-norm it follows that
\(
\left\|\Delta^U(\eta)\right\|_2^2
  = \left\|\Delta^U(\eta_R)\right\|_2^2 
    + \left\|\Delta^U(\eta_I)\right\|_2^2.
\)
Now, notice that both $\eta_R$ and $\eta_I$ are Hermitian so we can employ Equation \eqref{eq:lambda_hermitian} to obtain the following bound:
\[
 \EE_U \left\|\Delta^U(\eta)\right\|_2^2 
 =\EE_U \left\|\Delta^U(\eta_R)\right\|_2^2 
 + \EE_U \left\|\Delta^U(\eta_I)\right\|_2^2
 \leq \lambda_I^2\|\eta_R\|_2^2 + \lambda_I^2\|\eta_I\|_2^2
 =\lambda_I^2\|\eta\|_2^2,
\]
where we have used $\|\eta\|_2^2 = \|\eta_R\|_2^2+\|\eta_I\|_2^2$ in the last step. This proves that the property $\EE_U\|\Delta^U(\rho)\|_2^2\leq\lambda^2\|\rho\|_2^2$ for Hermitian matrices extends to general matrices. So the inequality \eqref{eq:lambda_hermitian} holds for any matrix $\rho_{A_IE}$, and we can take the square root of this inequality to find
\(\EE_{U_I} \left\|\left(\cT_I\circ\Theta_{A_I}\right)\rho_{A_IE}\right\|_2\leq\lambda_I\|\rho_{A_IE}\|_2 \), concluding the proof.
\end{proof}

\begin{lemma}
\label{lemma:lambda.tensorial}
Consider a CP map $\cT:A\rightarrow B$
with Choi operator $\tau_{AB} = (\id\ox\cT)\Phi_{AA'}$. 
Define a random CP map $\cR^U := \cT(U\cdot U^\dagger)$, where $U$ is distributed according to a probability law $p$ on $SU(A)$ that is a $2$-design, for example, the Haar measure. 

Then, the family $\cR^U$ is $\lambda$-randomizing (equivalently, $\Delta^U=\cR^U-\cP^{\tau_B}$ is $\lambda$-expected-contractive) for any $\lambda \geq \sqrt{1-\frac{1}{\abs{A}^2}} \|\tau_{AB}\|_2$.
\end{lemma}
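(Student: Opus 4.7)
The plan is to run the same argument as in the proof of Lemma~\ref{lemma:lambda_I}, specialized to a single system ($k=1$), but to sharpen the last step by applying Lemma~\ref{lemma:partial_trace} symmetrically to both the state side and the Choi-matrix side. In the multipartite proof, the inequality $\Tr\omega_B^2 \geq \Tr\omega_{AB}^2/|A|$ is used only on the state to collapse the alternating sum, which for $|I|=1$ would merely give the weaker $\lambda = \|\tau_{AB}\|_2/\sqrt{1-1/|A|^2}$; using the same inequality a second time on $\tau$ is what upgrades the bound to the claimed $\sqrt{1-1/|A|^2}\,\|\tau_{AB}\|_2$.

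Concretely, I would begin from Jensen's inequality $(\EE_U\|\Delta^U(\rho_{AE})\|_2)^2 \leq \EE_U\|\Delta^U(\rho_{AE})\|_2^2$, rewrite the squared Hilbert-Schmidt norm as $\Tr[\Delta^U(\rho_{AE})]^2$, and apply the swap trick (Lemma~\ref{lemma:swaps}) to express it as $\Tr[\Delta^U(\rho_{AE})^{\otimes 2}\, F_{BE}]$. Expanding $\Delta^U = \cR^U - \cP^{\tau_B}$ and averaging $U^{\otimes 2}$ via Lemma~\ref{lemma:Haar} on $A^{\otimes 2}$ produces, by the same bookkeeping as in the $|I|=1$ case of Lemma~\ref{lemma:lambda_I}, the identity
\[
\EE_U\|\Delta^U(\rho_{AE})\|_2^2 \;=\; c_1 \left(\Tr\rho_{AE}^2 - \tfrac{1}{|A|}\Tr\rho_E^2\right),
\]
with coefficient
\[
c_1 \;=\; \frac{1}{1-1/|A|^2}\left(\Tr\tau_{AB}^2 - \tfrac{1}{|A|}\Tr\tau_B^2\right).
\]

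Finally, I would invoke Lemma~\ref{lemma:partial_trace} in the direction $\Tr X_B^2 \geq \Tr X_{AB}^2/|A|$ twice. Applied to $\rho$, it gives $\Tr\rho_{AE}^2 - \tfrac{1}{|A|}\Tr\rho_E^2 \leq (1-1/|A|^2)\|\rho_{AE}\|_2^2$; applied to $\tau$, it gives $c_1 \leq \|\tau_{AB}\|_2^2$ since the prefactor $1/(1-1/|A|^2)$ and the bracket $(1-1/|A|^2)\Tr\tau_{AB}^2$ cancel. Multiplying the two bounds and taking a square root yields $\lambda = \sqrt{1-1/|A|^2}\,\|\tau_{AB}\|_2$, as claimed.

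I do not expect a genuine obstacle here, as all the heavy lifting is already contained in Lemma~\ref{lemma:lambda_I}. The only subtle point is recognizing that for $|I|=1$ the sum over subsets defining $c_I$ is algebraically of the same two-term alternating form as the sum on the state side, so Lemma~\ref{lemma:partial_trace} can (and must) be used on both; for $|I|>1$ the Choi-side sum contains many more terms and has to be handled by the cruder positive-part estimate used in Lemma~\ref{lemma:lambda_I}, which is precisely why the single-system bound comes out tighter.
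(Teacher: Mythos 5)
Your proposal is correct and follows essentially the same route as the paper's proof: specialize the computation of Lemma~\ref{lemma:lambda_I} to a single system to get the exact identity $\EE_U\|(\cT\circ\Theta)\rho_{AE}\|_2^2 = c\,[\Tr\rho_{AE}^2 - \tfrac{1}{|A|}\Tr\rho_E^2]$ with $c=\bigl(\Tr\tau_{AB}^2-\tfrac{1}{|A|}\Tr\tau_B^2\bigr)/\bigl(1-\tfrac{1}{|A|^2}\bigr)$, then apply Lemma~\ref{lemma:partial_trace} to both the state and the Choi-matrix factors, which is exactly how the paper obtains the tighter constant $\sqrt{1-1/|A|^2}\,\|\tau_{AB}\|_2$. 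Your closing remark about why the $|I|>1$ case only admits the cruder positive-part estimate also matches the paper's reasoning.
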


\begin{proof} 
Notice that this is actually nothing else than a simple particular case of Lemma \ref{lemma:lambda_I} with $\abs{I}=1$, this is $A_I=A$. We will just find a tighter bound on the constant $\lambda$. Let us again develop the argument first for Hermitian matrices. From Equations~\eqref{eq:trace.approx} and \eqref{eq:c.approx} we observe:
\[
\Tr\left[\EE_{U}\left(\left(\cT\circ\Theta\right)\rho_{AE}\right)^2\right]
 = c\left[\Tr \rho_{AE}^2 - \frac{\Tr
  \rho_{E}^2}{\abs{A}}\right], 
  \text{ with }
  c = \frac{\Tr \tau_{AB}^2 - \frac{\Tr \tau_{B}^2}{\abs{A}}}{1-\frac{1}{\abs{A}^2}}.
\]
We upper bound the parameter $c$ with the help of Lemma \ref{lemma:partial_trace}. 
Notice $-\abs{A}\Tr \tau_B^2 \leq -\Tr \tau_{AB}^2$, therefore $c\leq\Tr \tau_{AB}^2$. Similarly, $-\abs{A}\Tr \rho_B^2 \leq -\Tr \rho_{AB}^2$. We thus find
\[
\Tr\left[\EE_{U}\left(\left(\cT\circ\Theta\right)\rho_{AE}\right)^2\right]
 \leq \left(1-\frac{1}{\abs{A}^2}\right) (\Tr\tau_{AB}^2) (\Tr\rho_{AE}^2).
\]
Now, using Jensen's inequality we have
\[
(\EE_U\|(\cT\circ\Theta)\rho_{AE}\|_2)^2\leq\EE_U\|(\cT\circ\Theta)\rho_{AE}\|^2_2\leq\left(1-\frac{1}{\abs{A}^2}\right)\|\tau_{AB}\|_2^2\|\rho_{AE}\|_2^2,
\]
Repeating the argument that concluded the proof of Lemma \ref{lemma:lambda_I} we can directly generalise this results to any matrix (not necessarily Hermitian) $\rho_{AE}$. This implies that $\cT\circ\Theta$ is $\lambda$-expected contractive for any 
$\lambda\geq\sqrt{1-\frac{1}{\abs{A}^2}}\|\tau_{AB}\|_2$.
\end{proof}

\begin{lemma}
\label{lemma:lambda-multiplicativity}
Let $\Delta^{x_i}:A_i\rightarrow B_i$ be $\lambda_i$-expected-contractive maps, for $i\in I$, where $I$ is a finite index set and the $x_i$ are independent random variables. Then, the family
\[
  \Delta^{x_I}:A_I := \bigotimes_{i\in I} A_i \longrightarrow \bigotimes_{i\in I} B_i =: B_I,
\]
where $x_I=(x_i:i\in I)$, is $\lambda_I$-expected-contractive with $\lambda_I = \prod_{i\in I} \lambda_i$.
\end{lemma}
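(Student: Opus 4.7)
The plan is to prove this by induction on $|I|$ with a single-step reduction. The base case $|I|=1$ is exactly the hypothesis that $\Delta^{x_1}$ is $\lambda_1$-expected-contractive. For the inductive step, pick any element $j\in I$, let $J:=I\setminus\{j\}$ so that $A_I=A_J\ox A_j$, and note that $\Delta^{x_I}$ factorizes as $\Delta^{x_J}\ox\Delta^{x_j}$ since the two tensor factors act on disjoint systems. Independence of the $x_i$ allows the expectation to split, $\EE_{x_I}=\EE_{x_J}\EE_{x_j}$, and Fubini justifies exchanging the two expectations freely with the (non-negative, measurable) norm functional.

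The heart of the computation is the following. For any Hermitian $\rho_{A_IE}$ and any fixed value of $x_J$, set
\[
  \sigma_{A_JA_jE} := (\Delta^{x_J}\ox\id_{A_j})(\rho_{A_IE}).
\]
Because $\Delta^{x_J}$ is a tensor product of Hermitian-preserving maps it is itself Hermitian-preserving, so $\sigma$ is Hermitian. Applying the $\lambda_j$-expected-contractivity of $\Delta^{x_j}$ on system $A_j$, treating $A_J\ox E$ as the passive environment, yields
\[
  \EE_{x_j}\|(\id_{A_JE}\ox\Delta^{x_j})(\sigma)\|_2 \leq \lambda_j\|\sigma\|_2.
\]
Taking the subsequent $\EE_{x_J}$ expectation and invoking the induction hypothesis on $\Delta^{x_J}:A_J\to B_J$ with environment $A_j\ox E$ gives $\EE_{x_J}\|\sigma\|_2\leq\bigl(\prod_{i\in J}\lambda_i\bigr)\|\rho_{A_IE}\|_2$. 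Multiplying the two bounds produces $\EE_{x_I}\|\Delta^{x_I}(\rho_{A_IE})\|_2\leq\lambda_j\prod_{i\in J}\lambda_i\cdot\|\rho_{A_IE}\|_2=\prod_{i\in I}\lambda_i\cdot\|\rho_{A_IE}\|_2$, closing the induction.

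No step presents a substantive obstacle: the proof is essentially iterated Fubini combined with the single-system bound. The only subtlety worth flagging is that the definition of $\lambda$-expected-contractivity requires a Hermitian input, so one must verify that each intermediate operator such as $\sigma$ remains Hermitian along the way. This is guaranteed by the Hermitian-preservation of each individual $\Delta^{x_i}$ and the fact that Hermitian-preservation is closed under tensor products. It is also important that the definition permits an \emph{arbitrary} environment system, so that at each stage we may absorb the as-yet-unprocessed $A_i$ factors into the environment of the currently-acting map.
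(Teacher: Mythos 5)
Your proof is correct and follows essentially the same route as the paper: the paper proves the two-party case by inserting $\eta^{x_B}:=(\1\ox\Delta^{x_B})(\rho)$ and chaining the two expected-contractivity bounds (using independence and the arbitrariness of the environment), then notes the general case follows by induction on $|I|$, which is exactly your single-step reduction made explicit. The only cosmetic slip is that your intermediate operator $\sigma$ lives on $B_J\ox A_j\ox E$ rather than $A_J\ox A_j\ox E$ (since $\Delta^{x_J}$ maps $A_J\to B_J$), but this does not affect the argument.
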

\begin{proof}
It is enough to prove the claim for $I=\{1,2\}$, as then the general case follows by induction on the cardinality of $I$.

Indeed, if $\Delta^{x_I}=\Delta^{x_A}\otimes\Delta^{x_B}$, then $\EE_{x_I} \|\Delta^{x_I}(\rho_{ABE})\|_2 = \EE_{x^{AB}} \|(\Delta^{x_A}\otimes\Delta^{x_B})(\rho_{ABE})\|_2$. If we define $\eta_{ABE}^{x_B} := (\1_A\otimes\Delta^{x_B})(\rho_{ABE})$ we can bound
\[\begin{split}
  \EE_{x^{AB}} \|(\Delta^{x_A}\otimes\Delta^{x_B})(\rho_{ABE})\|_2 
  &= \EE_{x^{AB}} \|\Delta^{x_A}(\eta_{ABE}^{x_B})\|_2 \\
  &\leq \lambda_A\EE_{x_B}\|\eta_{ABE}^{x_B}\|_2 \\
  &= \lambda_A\EE_{x_B}\|\Delta^{x_B}(\rho_{ABE})\|_2 \\
  &\leq \lambda_A\lambda_B\|\rho_{ABE}\|_2,
\end{split}\]
and we are done.
\end{proof}

We can join Lemmas \ref{lemma:lambda_I} and \ref{lemma:lambda-multiplicativity} in a single statement by making a distinction between the most general scenario where any general CP map $\cT_I:A_I\rightarrow B$ is applied, and the particular case where the map has a tensor product structure $\cT_I=\bigotimes_{i\in I}\cT_i$ such that $\cT_i:A_i\rightarrow B_i$, $B=\bigotimes_{i\in I} B_i$. In this second case, we can tighten the bound. We redact such a general statement in the following corollary.

\begin{corollary}
\label{corollary:D_I}
Given a CP map $\cT:A_I\rightarrow B$ with $I\subset \left[k\right]$, $\cT_I\circ\Theta_{A_I}$ is $\lambda_I$-expected-contractive with $\lambda_I=D_I\|\tau_{A_IB}\|_2$, where
\[
D_I = \begin{cases}
\frac{2^{\abs{I}-1}}{\prod_{i\in I}\sqrt{1-\frac{1}{\abs{A_i}^2}}} \phantom{=}\leq \frac12 \left(\frac{4}{\sqrt{3}}\right)^{|I|} &\text{for a general CP map $\cT_I:A_I\rightarrow B$}, \\
\prod_{i\in I}\sqrt{1-\frac{1}{\abs{A_i}^2}} \leq 1 &\text{when } \cT_I=\bigotimes_{i\in I}\cT_i \text{ with $\cT_i:A_i\rightarrow B_i$.}
\end{cases}
\] 
\end{corollary}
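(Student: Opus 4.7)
The plan is to recognize the corollary as a synthesis of the two preceding lemmas, with the two cases requiring essentially no new work, plus one elementary dimension estimate to produce the clean bound $\frac12(4/\sqrt 3)^{|I|}$ in the general case.

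For the general cp map $\cT_I:A_I\to B$, Lemma \ref{lemma:lambda_I} already delivers the exact statement $\lambda_I=\frac{2^{|I|-1}}{\prod_{i\in I}\sqrt{1-1/|A_i|^2}}\|\tau_{A_IB}\|_2$. To produce the closed-form majorant, I would use that every quantum system has $|A_i|\geq 2$, so $1-1/|A_i|^2\geq 3/4$ and therefore $1/\sqrt{1-1/|A_i|^2}\leq 2/\sqrt{3}$. Multiplying $|I|$ such factors and extracting the factor $2^{|I|-1}=\frac12\cdot 2^{|I|}$ gives the stated inequality $D_I\leq \frac12\left(4/\sqrt 3\right)^{|I|}$.

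For the tensor-product case $\cT_I=\bigotimes_{i\in I}\cT_i$ with $\cT_i:A_i\to B_i$, I would apply Lemma \ref{lemma:lambda.tensorial} on each factor separately, which shows that each $\cT_i\circ\Theta_{A_i}$ is $\lambda_i$-expected-contractive with $\lambda_i=\sqrt{1-1/|A_i|^2}\,\|\tau_{A_iB_i}\|_2$. Since the Choi operator of a tensor-product channel factorises, $\tau_{A_IB_I}=\bigotimes_{i\in I}\tau_{A_iB_i}$ and hence the Hilbert–Schmidt norm satisfies $\|\tau_{A_IB_I}\|_2=\prod_{i\in I}\|\tau_{A_iB_i}\|_2$. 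Feeding the individual coefficients into Lemma \ref{lemma:lambda-multiplicativity} (multiplicativity of expected-contractive coefficients under tensor products of independent random channels) then gives $\lambda_I=\prod_{i\in I}\lambda_i=\Bigl(\prod_{i\in I}\sqrt{1-1/|A_i|^2}\Bigr)\|\tau_{A_IB_I}\|_2$, matching $D_I=\prod_{i\in I}\sqrt{1-1/|A_i|^2}\leq 1$.

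There is no real obstacle here; the corollary is essentially a repackaging. The conceptually important point worth emphasising, however, is the contrast between the two bounds: the generic bound grows exponentially in $|I|$ as $(4/\sqrt 3)^{|I|}$, whereas the locality assumption $\cT_I=\bigotimes_i\cT_i$ lets the tensorisation argument of Lemma \ref{lemma:lambda-multiplicativity} replace that exponential blow-up by a product that is uniformly bounded by $1$. This exponential improvement, bought by the local structure of the channel, is precisely what drives the multi-user decoupling theorems stated in Section \ref{sec:setting} and is what makes them non-trivial for large $k$.
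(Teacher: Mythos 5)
Your proposal is correct and follows the paper's own route: the general case is read off directly from Lemma \ref{lemma:lambda_I}, while the tensor-product case combines Lemma \ref{lemma:lambda.tensorial} with the multiplicativity of Lemma \ref{lemma:lambda-multiplicativity} and the factorisation of the Choi operator, exactly as in the paper. Your added derivation of the numerical majorant $D_I\leq \frac12(4/\sqrt{3})^{|I|}$ from $|A_i|\geq 2$ is a correct (and trivially verifiable) supplement that the paper leaves implicit.
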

\begin{proof}
The first statement is actually Lemma \ref{lemma:lambda_I}, so it has already been proved. The second statement follows from Lemmas \ref{lemma:lambda.tensorial} and \ref{lemma:lambda-multiplicativity}. Notice that if the CP map has the commented tensor product structure, we can extract from Lemma \ref{lemma:lambda.tensorial} that $\cT_i\circ\Theta_{A_i}$ is $\lambda_i$-expected contractive with $\lambda_i=\sqrt{1-\frac{1}{\abs{A_i}^2}} \|\tau_{A_iB_i}\|_2$ for each system $A_i$. Now, from Lemma \ref{lemma:lambda-multiplicativity} we can calculate $\lambda_I=\prod_{i\in I}\lambda_i=\prod_{i\in I} \left(\sqrt{1-\frac{1}{\abs{A_i}^2}}\|\tau_{A_iB_i}\|_2\right) = \left(\prod_{i\in I}\sqrt{1-\frac{1}{\abs{A_i}^2}} \right) \|\tau_{A_IB}\|_2$.
\end{proof}

\begin{lemma}
\label{lemma:lambda_to_collision.entropy}
Consider a $\lambda$-randomizing family of channels $\cR^U:=\cT(U\cdot U^\dagger)$, where $\cT:A\rightarrow B$ is a CPTP map such that $\cT(\1_A/|A|)=\1_B/|B|$ with Choi operator $\tau_{AB}=(\id\otimes\cT)\Phi_{AA'}$, $U$ is distributed according to a probability law on $SU(A)$ that is a 2-design, and let us choose $\lambda^2=\Tr\tau_{AB}^2 \geq \left(1-\frac{1}{|A|}\right) \|\tau_{AB}\|_2^2$ from Lemma \ref{lemma:lambda.tensorial}. Then,
\[
\log\abs{B}+\log\lambda^2=-\widetilde{H}_2(A|B)_{\tau|\tau_B},
\]
where $\tau_B=\Tr_A \tau_{AB} = \1_B/\abs{B}$ is the maximally mixed state. 
Furthermore, for any $\cT_I: A_I\rightarrow B_I$, with $\cT_I\left(\frac{\mathds{1}_{A_I}}{|A_I|}\right)=\frac{\mathds{1}_{B_I}}{|B_I|}$, and $D_I$ given in Corollary \ref{corollary:D_I} we have
\[
\log\abs{B_I}+\log\lambda_I^2=2\log D_I-\widetilde{H}_2(A_I|B_I)_{\tau|\tau_{B}}.
\]
\end{lemma}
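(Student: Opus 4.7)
The plan is to evaluate the sandwiched conditional Rényi-2 entropy directly from its definition, taking advantage of the fact that the reference state is a scalar multiple of identity. First I would verify that the hypothesis on $\cT$ forces $\tau_B$ to be maximally mixed: since $\Tr_{A}\Phi_{AA'}=\1_{A'}/|A'|$ and $|A|=|A'|$, we have
\[
  \tau_B = \Tr_A(\id\ox\cT)\Phi_{AA'} = \cT(\1_{A'}/|A'|) = \1_B/|B|,
\]
so $\sigma_B=\tau_B$ is a legitimate choice and it equals $\1_B/|B|$.

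Next I would substitute $\sigma_B = \1_B/|B|$ into the definition of $\widetilde{H}_2$. Because $\sigma_B^{-1/4} = |B|^{1/4}\1_B$ is scalar, the sandwich collapses:
\[
  (\1_A\ox\sigma_B^{-1/4})\tau_{AB}(\1_A\ox\sigma_B^{-1/4}) = |B|^{1/2}\,\tau_{AB}.
\]
Squaring and tracing then gives $|B|\Tr\tau_{AB}^2 = |B|\,\lambda^2$ (using $\lambda^2 = \Tr\tau_{AB}^2 = \|\tau_{AB}\|_2^2$), and with the prefactor $\frac{1}{1-\alpha}=-1$ at $\alpha=2$ we obtain
\[
  \widetilde{H}_2(A|B)_{\tau|\tau_B} = -\log\bigl(|B|\lambda^2\bigr) = -\log|B|-\log\lambda^2,
\]
which is exactly the first claim after rearranging.

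For the second statement I would apply Corollary \ref{corollary:D_I} to write $\lambda_I^2 = D_I^2\,\|\tau_{A_IB_I}\|_2^2$, so that
\[
  \log|B_I|+\log\lambda_I^2 = 2\log D_I + \bigl(\log|B_I|+\log\|\tau_{A_IB_I}\|_2^2\bigr).
\]
The same computation as above, now applied to $\tau_{A_IB_I}$ (whose $B_I$-marginal is maximally mixed by the same argument used for $\tau_B$), turns the parenthesised quantity into $-\widetilde{H}_2(A_I|B_I)_{\tau|\tau_{B_I}}$, yielding the second identity.

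There is no real obstacle here: the statement is a direct unfolding of the definition of $\widetilde{H}_2$, and the only thing to be careful about is confirming that the assumption on $\cT$ indeed makes $\tau_B$ maximally mixed so that the convenient choice $\sigma_B=\tau_B$ scalarizes the sandwiching operators.
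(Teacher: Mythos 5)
Your proposal is correct and follows essentially the same route as the paper: choose $\sigma_B=\tau_B=\1_B/|B|$ (which the unitality hypothesis guarantees), let the scalar sandwich collapse so that $-\widetilde{H}_2(A|B)_{\tau|\tau_B}=\log\Tr(|B|\tau_{AB}^2)=\log|B|+\log\lambda^2$, and then invoke Corollary \ref{corollary:D_I} to insert $\lambda_I = D_I\|\tau_{A_IB}\|_2$ for the multipartite statement. The only difference is cosmetic: you spell out explicitly that $\tau_B$ is maximally mixed, which the paper takes for granted.
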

\begin{proof}
Applying the definition of the R\'enyi entropies we have
\[
-\widetilde{H}_2(A|B)_{\tau|\tau_B}=\log\Tr\left(\left[\left(\frac{\1_B}{\abs{B}}\right)^{-\frac{1}{4}}\!\!\tau_{AB}\!\left(\frac{\1_B}{\abs{B}}\right)^{-\frac{1}{4}}\right]^2\right)=\log\Tr(\abs{B}\tau_{AB}^2)=\log\abs{B}+\log\lambda'^2,
\]
where we have applied Lemma \ref{lemma:lambda.tensorial} in the last equality. Similarly, following Corollary \ref{corollary:D_I} for a CP map $\cT:A_I\rightarrow B$ we find
\[
\log\abs{B_I}+\log\lambda_I^2=2\log D_I+\log\Tr[\abs{B_I}\tau_{A_IB}^2]=2\log D_I-\widetilde{H}_2(A_I|B)_{\tau|\tau_{B_I}},
\]
concluding the proof.
\end{proof}

\section{Proving the multi-user decoupling theorems}
\label{sec:Proofs}

In Section \ref{sec:setting} we have found the bound
\[
\EE_{U_{[k]}} \left\| \cR^{U_{[k]}}(\rho_{A_{[k]}E}) - \tau_B\ox\rho_E \right\|_1
  \leq \sum_{\emptyset\neq I\subseteq[k]} \EE_{U_I} \left\| \left(\cT_I\circ\Theta_{A_I}\right)(\rho_{A_IE}) \right\|_1.
\]  
which allows us to treat each term of the sum on the right-hand side independently.

\begin{proofof}[of Theorem~\ref{theorem:with_smoothing}]
Let us define the modified objects $(\zeta_E^I)^{-\frac{1}{4}}\rho_{A_IE}(\zeta_E^I)^{-\frac{1}{4}}:=\Tilde{\rho}_{A_IE}$ and $(\sigma_B^I)^{-\frac{1}{4}}(\cT_I\circ\Theta_{A_I})(\cdot)(\sigma_B^I)^{-\frac{1}{4}}:=(\Tilde{\cT}_I\circ\Theta_{A_I})(\cdot)$, with a pair of states $\sigma_B^I$ and $\zeta_E^I$ chosen for each term $\emptyset\neq I\subseteq[k]$. Using Lemma \ref{lemma:tr_norm_bound} we can bound
\begin{equation}
\label{equation:change to tilde}
\begin{split}
\left\| (\cT_I\circ\Theta_{A_I})\rho_{A_IE} \right\|_1
&\leq\sqrt{
\Tr\left[\left((\sigma_B^I\otimes\zeta_E^I)^{-\frac{1}{4}}(\cT_I\circ\Theta_{A_I})\rho_{A_IE}(\sigma_B^I\otimes\zeta_E^I)^{-\frac{1}{4}} \right)^2\right]} \\
&=\sqrt{\Tr\left[\left((\Tilde{\cT}_I\circ\Theta_{A_I})\Tilde{\rho}_{A_IE}\right)^2\right] }
 =  \left\| (\Tilde{\cT}_I\circ\Theta_{A_I})\Tilde{\rho}_{A_IE} \right\|_2.
\end{split}
\end{equation}
Hence, the expected values are bounded as $\EE_{U_I}\!\left\| \left(\cT_I\circ\Theta_{A_I}\right)(\rho_{A_IE}) \right\|_1 \leq \EE_{U_I}\!\left\| (\Tilde{\cT}_I\circ\Theta_{A_I})\Tilde{\rho}_{A_IE} \right\|_2$. We extract from Corollary \ref{corollary:D_I} that $\tilde{\cT}_I\circ\Theta_{A_I}$ is $\lambda_I$-expected-contractive with $\lambda_I=D_I\|\tilde{\tau}_{A_IB}\|_2$. 
Therefore, $\EE_{U_I} \left\| (\Tilde{\cT}_I\circ\Theta_{A_I})\Tilde{\rho}_{A_IE} \right\|_2\leq D_I\left\|\tilde{\tau}_{A_IB}\right\|_2 \left\|\tilde{\rho}_{A_IE}\right\|_2$, with $D_I=2^{\abs{I}-1}\prod_{i\in I}\left(1-\frac{1}{\abs{A_i}^2}\right)^{-\frac{1}{2}}$ in the most general scenario. Now we unpack our tilde-modified operators to the original ones:
\[
D_I\|\tilde{\tau}_{A_IB}\|_2\|\tilde{\rho}_{A_IE}\|_2=D_I\|(\sigma_B^I)^{-\frac{1}{4}}\tau_{A_IB}(\sigma_B^I)^{-\frac{1}{4}}\|_2\|(\zeta_E^I)^{-\frac{1}{4}}\rho_{A_IE}(\zeta_E^I)^{-\frac{1}{4}}\|_2.
\]

Notice that we can always express sandwiched conditional R\'enyi entropies by means of Schatten-$\alpha$ norms as $2^{\frac{1-\alpha}{\alpha}\widetilde{H}_\alpha(A|B)_{\rho|\zeta}} = \left\|\zeta_E^{\frac{1-\alpha}{2\alpha}}\rho_{AE}\zeta_E^{\frac{1-\alpha}{2\alpha}}\right\|_\alpha$. Thus, 
\[
D_I\left\|(\sigma_B^I)^{-\frac{1}{4}}\tau_{A_IB}(\sigma_B^I)^{-\frac{1}{4}}\right\|_2 \left\|(\zeta_E^I)^{-\frac{1}{4}}\rho_{A_IE}(\zeta_E^I)^{-\frac{1}{4}}\right\|_2 = D_I2^{-\frac{1}{2}\widetilde{H}_2(A_I|E)_{\rho|\zeta_E^I}-\frac{1}{2}\widetilde{H}_2(A_I|B)_{\tau|\sigma_B^I}}.
\]

Now we can $\epsilon_I$-smooth each term $\emptyset\neq I\subseteq[k]$. That is, we consider states $\rho_{A_IE}'$ such that $\frac{1}{2}\|\rho_{A_IE}-\rho_{A_IE}'\|_1\leq\epsilon_I$. Thus, we can bound
\[
  \left\|(\cT\circ\Theta_{A_I})\rho_{A_IE}-(\cT\circ\Theta_{A_I})\rho_{A_IE}'\right\|_1 \leq 2^{\abs{I}}\cdot2\epsilon_I,
\]
because $\|\Theta_{A_i}\|_\diamond\leq2$ and so $\|\Theta_{A_I}\|_\diamond\leq2^\abs{I}$ due to the multiplicativity of the diamond norm under tensor products. Now using the triangle inequality we have
\begin{equation}\label{eq:colision_or_min}
\begin{split}
\EE_{U_I}\left\|(\cT\circ\Theta_{A_I})\rho_{A_IE}\right\|_1
 &\leq 2^{\abs{I}+1}\epsilon_I+\EE_{U_I} \left\|(\cT\circ\Theta_{A_I})\rho_{A_IE}'\right\|_1\\
 &\leq 2^{\abs{I}+1}\epsilon_I+D_I2^{-\frac{1}{2}\widetilde{H}_2(A_I|E)_{\rho'|\zeta_E^I}-\frac{1}{2}\widetilde{H}_2(A_I|B)_{\tau|\sigma_B^I}}\\
 &\leq 2^{\abs{I}+1}\epsilon_I+D_I2^{-\frac{1}{2}\widetilde{H}_{2}^{\epsilon_I}(A_I|E)_{\rho'|\zeta_E^I}-\frac{1}{2}\widetilde{H}_2(A_I|B)_{\tau|\sigma_B^I}},
\end{split}
\end{equation}
Where $\widetilde{H}_{2}^{\epsilon_I}$ is the smooth version of the sandwiched collision entropy, i.e. it is optimized over all possible states inside a $\epsilon_I$-ball around $\rho$ (a lower bound). This finally gives us:
\begin{equation}
\begin{split}
\EE_{U_{[k]}} &\left\| \cR^{U_{[k]}}(\rho_{A_{[k]}E}) - \tau_B\ox\rho_E \right\|_1 \\
&\leq \sum_{\emptyset\neq I\subseteq[k]} \left\{ 2^{|I|+1}\epsilon_I + D_I \exp_2\left[-\frac{1}{2}\widetilde{H}_2^{\epsilon_I}(A_I|E)_{\rho|\zeta_E^I}-\frac{1}{2}\widetilde{H}_2(A_I|B)_{\tau|\sigma_B^I}\right] \right\},   
\end{split}    
\end{equation}

Proving Theorem \ref{theorem:with_smoothing} and Corollary \ref{corollary:D_I-main} by changing the value of the constant $D_I$ according to the structure of the CP map as shown in Corollary \ref{corollary:D_I}.
\end{proofof}

Notice that the above bound can also be expressed in terms of min-entropies by lower-bounding  $\widetilde{H}_2(A_I|E)_{\rho'|\zeta_E^I}^{\epsilon_I}\geq H_\text{min}^{\epsilon_I}(A_I|E)_\rho$ (see Remark \ref{rem:2-vs-min}) on the last inequality of Equation~\eqref{eq:colision_or_min}.

\medskip
Dupuis \cite{Dupuis:Renyi-decouple} gave a bound on single-system decoupling using R\'enyi entropies; see also \cite{Mojahedian-et-al:correlation}. The main technical result in that paper states
\begin{equation}
\label{eq:Dupuis-interpolation}
 \EE_U\norm{\mathcal{N}^U(\rho_{AE})}_1\leq 2^{\frac{2}{\alpha}-1}\cdot2^{\frac{\alpha-1}{\alpha}(\log\abs{B}-\widetilde{H}_\alpha(A|E)_{\rho}+2\log\lambda)}
\end{equation}
for any family of CPTP maps $\mathcal{N}^U:A\rightarrow B$ that is a $\lambda$-expected contractive \cite[Lemma 7 \&{} Thm.~8]{Dupuis:Renyi-decouple}. Defining $\mathcal{N}^U:=\cR^U-\cD_A$, he finds
\begin{equation}
  \EE_U\norm{\cR^U(\rho_{AE})-\tau_B\ox\rho_E}_1\leq 2^{\frac{2}{\alpha}-1}\cdot2^{-\frac{\alpha-1}{\alpha}(\widetilde{H}_\alpha(A|E)_{\rho|\zeta_E}+\widetilde{H}_2(A|B)_{\tau|\tau_B})},  
\end{equation}
which we have rewritten using Lemma \ref{lemma:lambda_to_collision.entropy} in a more compact and recognisable form. The general case is a straightforward generalisation of this, as we have done all the heavy lifting before.

\begin{proofof}[of Theorem~\ref{theorem:without_smoothing}]
We start once again with
\[
  \EE_{U_{[k]}} \left\| \cR^{U_{[k]}}(\rho_{A_{[k]}E}) - \tau_B\ox\rho_E \right\|_1
  \leq \sum_{\emptyset\neq I\subseteq[k]} \EE_{U_I} \left\| \left(\cT_I\circ\Theta_{A_I}\right)\rho_{A_IE} \right\|_1.
\]
Just as in the previous proof, we can treat each term of the sum independently. Now, by defining $\mathcal{N}^U:=\cT_I\circ\Theta_{A_I}:A_I\rightarrow B_I$, we know from Lemma \ref{lemma:lambda_I} that this family of maps is $\lambda_I$-expected contractive. Thus, by Equation~\eqref{eq:Dupuis-interpolation},
\[
\EE_{U_I} \left\| \left(\cT_I\circ\Theta_{A_I}\right)\rho_{A_IE} \right\|_1\leq2^{\frac{2}{\alpha_I}-1}2^{\frac{\alpha_I-1}{\alpha_I}\left(\log|B_I|+2\log\lambda_I-\widetilde{H}_{\alpha_I}(A_I|E)_{\rho|\zeta_E^I}\right)},
\]
where $\alpha_I\in(1,2]$. Furthermore, from Lemma \ref{lemma:lambda_I} and more generally Corollary \ref{corollary:D_I} we know the value of $\lambda_I$, and actually we can identify $\log|B_I|+2\log\lambda_I=2\log D_I-\widetilde{H}_2(A_I|B)_{\tau|\tau_{B_I}}$ using Lemma \ref{lemma:lambda_to_collision.entropy}. Therefore we can finally write:
\begin{equation}
\begin{split}
\EE_{U_{[k]}} &\left\| \cR^{U_{[k]}}(\rho_{A_{[k]}E}) - \tau_B\ox\rho_E \right\|_1 \\
&\leq \sum_{\emptyset\neq I\subseteq[k]} 
          D_I^{2-\frac{2}{\alpha_I}}
          2^{\frac{2}{\alpha_I}-1}
          \exp\left[ \left(1-\frac{1}{\alpha_I}\right) \left(-\widetilde{H}_{\alpha_I}(A_I|E)_{\rho|\zeta^I_E} - \widetilde{H}_2(A_I|B_I)_{\tau|\tau_B} \right)\right],
\end{split} 
\end{equation}
concluding the proof of Theorem \ref{theorem:without_smoothing} and Corollary \ref{corollary:D_I-main}. 
\end{proofof}

\section{Applications}
\label{sec:apps}
To illustrate the power of our decoupling results, we shall discuss and solve four example problems in multi-user quantum information theory that have until now been hampered by the absence of the simultaneous smoothing technique. 
These are, in order: local randomness extraction from a given multipartite state in Subsection \ref{subsec:randomness}; concentration of multipartite pure entanglement in the hands of two designated users by LOCC, aka entanglement of assistance  in Subsection \ref{subsec:EoA}; quantum state merging, aka quantum Slepian-Wolf problem in Subsection \ref{subsec:Slepian-Wolf}; and finally quantum communication via quantum multiple access channels (MAC) in Subsection \ref{subsec:MAC}. 

For all of them, we first show how our decoupling bound yields a flexible one-shot achievability result, which in turn implies asymptotic rates in the i.i.d.~setting that in some cases had only been conjectured so far, or were known to rely on much more complicated proofs. We demonstrate furthermore the versatility of the one-shot bounds by generalising the i.i.d.~asymptotic rates to the case that the single-system state/channel is only partially known (compound source/channel setting).

In order to take this step from one-shot to i.i.d.~settings we make use of the quantum asymptotic equipartition property (AEP) \cite{Tomamichel:PhD}, which we state below along with a couple of other lemmas needed in the subsequent subsections. 

\begin{theorem}[AEP]
  \label{theorem:AEP}
  Let $\rho_{AB}$ be a bipartite state acting on $A\ox B$, so that for an integer $n$, $\rho_{AB}^{\ox n}$ is a state on $(A\ox B)^{\ox n}$. Then, for any $0<\epsilon<1$,
  \begin{align*}
   \lim_{n\rightarrow \infty}\frac{1}{n}H_{\min}^\epsilon(A^n|B^n)_{\rho^{\ox n}}
    &= S(A|B)_\rho, \\
  \phantom{================}
  \lim_{n\rightarrow \infty}\frac{1}{n}H_{\max}^\epsilon(A^n|B^n)_{\rho^{\ox n}} &=S(A|B)_\rho. 
  \phantom{===============:}
  \qedsymbol
\end{align*}
\end{theorem}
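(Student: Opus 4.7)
The plan is to sandwich the smooth min-entropy between sandwiched conditional Rényi entropies of orders $\alpha>1$ and $\beta\in(\tfrac12,1)$, exploit the additivity of these Rényi entropies under tensor products, and then squeeze the limit using their continuity at $\alpha=1$. The max-entropy statement follows from the min-entropy one by duality: for any purification $\ket{\psi}_{ABC}$ of $\rho_{AB}$ we have $H_{\min}^\epsilon(A|B)_\rho=-H_{\max}^\epsilon(A|C)_\psi$ and $S(A|B)_\rho=-S(A|C)_\psi$, so applying the min-entropy AEP to $\psi^{\ox n}$ yields the max-entropy AEP automatically. Hence it suffices to focus on the first equality.

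For the min-entropy part I would first invoke the one-shot Rényi sandwich of \cite{Tomamichel:PhD}, of the form
\begin{equation*}
  \widetilde{H}_\alpha(A|B)_\rho - \frac{g_1(\epsilon)}{\alpha-1} \;\leq\; H_{\min}^\epsilon(A|B)_\rho \;\leq\; \widetilde{H}_\beta(A|B)_\rho + \frac{g_2(\epsilon)}{1-\beta},
\end{equation*}
valid for all $\alpha>1$ and $\beta\in(\tfrac12,1)$, with constants $g_1(\epsilon), g_2(\epsilon)$ that depend on $\epsilon$ alone and in particular are independent of the dimensions of $A$ and $B$. The lower bound is obtained by a Markov-type truncation of the positive operator $\sigma_B^{(1-\alpha)/(2\alpha)}\rho_{AB}\sigma_B^{(1-\alpha)/(2\alpha)}$ inside the definition of $\widetilde{H}_\alpha$: cutting off its high-weight tail produces a sub-normalized state inside the purified-distance $\epsilon$-ball of $\rho$ whose min-entropy differs from $\widetilde{H}_\alpha$ by at most $g_1(\epsilon)/(\alpha-1)$. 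The upper bound uses the monotonicity $\widetilde{H}_\beta\ge H_{\min}$ applied to the optimal smoothed state together with a continuity estimate showing that $\widetilde{H}_\beta$ can move by at most $g_2(\epsilon)/(1-\beta)$ within an $\epsilon$-ball.

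Next I would apply the sandwich to $\rho^{\ox n}$ on $(AB)^{\ox n}$ and use additivity of the sandwiched conditional Rényi entropy, $\widetilde{H}_\alpha(A^n|B^n)_{\rho^{\ox n}}=n\,\widetilde{H}_\alpha(A|B)_\rho$, to obtain
\begin{equation*}
  \widetilde{H}_\alpha(A|B)_\rho - \frac{g_1(\epsilon)}{n(\alpha-1)} \;\leq\; \frac{1}{n}H_{\min}^\epsilon(A^n|B^n)_{\rho^{\ox n}} \;\leq\; \widetilde{H}_\beta(A|B)_\rho + \frac{g_2(\epsilon)}{n(1-\beta)}.
\end{equation*}
Letting $n\to\infty$ with $\alpha,\beta$ fixed squeezes $\liminf_n$ and $\limsup_n$ of the normalized smooth min-entropy between $\widetilde{H}_\alpha(A|B)_\rho$ and $\widetilde{H}_\beta(A|B)_\rho$. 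Then sending $\alpha\downarrow 1$ and $\beta\uparrow 1$ and invoking the continuity $\lim_{\gamma\to 1}\widetilde{H}_\gamma(A|B)_\rho=S(A|B)_\rho$ from \cite{Mueller-Lennert:sandwich} closes the sandwich at $S(A|B)_\rho$.

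The main obstacle is the one-shot Rényi sandwich, in particular its upper bound: one must pick an explicit smoothing partner, control its sandwiched Rényi divergence relative to $\rho$, and, crucially, keep all prefactors dimension-free so that they vanish after division by $n$. If either $g_i(\epsilon)$ secretly depended on $|A|$ or $|B|$, the squeeze would collapse when applied to $\rho^{\ox n}$. Once the dimension-free one-shot bounds are established, the remainder is an elementary iterated limit.
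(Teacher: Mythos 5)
The paper does not actually prove this statement: it is quoted as a known result from \cite{Tomamichel:PhD} (hence the \qedsymbol{} in the statement), and your overall strategy --- sandwich the smooth min-entropy between additive R\'enyi quantities with dimension-free, $\epsilon$-only correction terms, divide by $n$, then send $\alpha,\beta\to 1$, and get the max-entropy half by purification duality --- is exactly the standard proof in that reference. The lower half of your sandwich and the final limiting argument are fine.

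There is, however, a genuine flaw in how you justify the upper half. You claim that $\widetilde{H}_\beta$ (for $\beta<1$) ``can move by at most $g_2(\epsilon)/(1-\beta)$ within an $\epsilon$-ball'' with $g_2$ independent of dimension, and apply this to the min-entropy-optimal smoothing state. That continuity statement is false in general: take $B$ trivial, $\rho=\proj{0}$, and $\tilde\rho=(1-\epsilon^2)\proj{0}+\epsilon^2\pi_d$ with $\pi_d$ maximally mixed on dimension $d$; then $P(\rho,\tilde\rho)\leq\epsilon$ but $\widetilde{H}_\beta(\tilde\rho)\geq \log d-\tfrac{2\beta}{1-\beta}\log(1/\epsilon)$ while $\widetilde{H}_\beta(\rho)=0$. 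Since in the application the relevant dimension is $|A|^n|B|^n$, a hidden $\log(\dim)$ term would destroy the squeeze --- precisely the danger you flag yourself. The claimed one-shot inequality $H_{\min}^\epsilon(A|B)_\rho\leq\widetilde{H}_\beta(A|B)_\rho+g_2(\epsilon)/(1-\beta)$ is nevertheless true, but it must be obtained differently, e.g.\ by the standard two-step route: first $H_{\min}^\epsilon(A|B)_\rho\leq H_{\max}^{\epsilon'}(A|B)_\rho+\log\frac{1}{1-(\epsilon+\epsilon')^2}$ for $\epsilon+\epsilon'<1$, and then the bound $H_{\max}^{\epsilon'}(A|B)_\rho\leq\widetilde{H}_\beta(A|B)_\rho+g(\epsilon')/(1-\beta)$, which is nothing but the purification dual of the lower bound you already invoke. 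With that replacement (or by citing the corresponding converse bound of \cite{Tomamichel:PhD} directly), the rest of your argument --- additivity, the iterated limits, and the duality step giving the $H_{\max}^\epsilon$ statement --- goes through as written.
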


\begin{lemma}[{State space $\epsilon$-net~\cite{HLSW:randomizing}}]
  \label{lemma:net}
  For $\epsilon>0$ and an integer $d$, there exists a set $\cS_0$ of states on $\cS(\CC^d)$ with $M=|\cS_0| \leq \left(\frac{5}{\epsilon}\right)^{2d^2}$, 
  such that for every $\rho\in\cS(\CC^d)$ there exists a $\rho_0\in\cS_0$ with $\frac12\|\rho-\rho_0\|_1 \leq \epsilon$.
  \qed
\end{lemma}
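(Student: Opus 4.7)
The plan is to reduce the problem to covering the unit sphere of a purifying Hilbert space by an Euclidean $\epsilon$-net, and then push this net down to mixed states by partial trace. The key identifications are that $\CC^{d^2}$ viewed as a real vector space has dimension $2d^2$, that the Euclidean distance of two unit vectors controls the trace distance of the rank-one projections they define, and that the trace distance contracts under partial trace.

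First, I would observe that every state $\rho\in\cS(\CC^d)$ admits a purification $\ket{\psi}\in\CC^d\ox\CC^d\cong\CC^{d^2}$, and that if $\ket{\psi}$, $\ket{\phi}$ are any two unit vectors then
\[
\proj{\psi}-\proj{\phi}
   = \ket{\psi}\bigl(\bra{\psi}-\bra{\phi}\bigr)
   + \bigl(\ket{\psi}-\ket{\phi}\bigr)\bra{\phi},
\]
so the triangle inequality for the trace norm yields $\|\proj{\psi}-\proj{\phi}\|_1\leq 2\|\ket{\psi}-\ket{\phi}\|_2$. Combined with monotonicity of the trace norm under partial trace, this gives
\[
\tfrac{1}{2}\bigl\|\rho-\Tr_{A'}\proj{\phi}\bigr\|_1
  \leq \tfrac{1}{2}\bigl\|\proj{\psi}-\proj{\phi}\bigr\|_1
  \leq \bigl\|\ket{\psi}-\ket{\phi}\bigr\|_2.
\]
So it suffices to have, for every unit vector $\ket{\psi}\in\CC^{d^2}$, a unit vector $\ket{\phi}$ in some pre-chosen finite set with $\|\ket{\psi}-\ket{\phi}\|_2\leq\epsilon$.

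Second, I would construct such a set $\cV_0$ by the standard volumetric packing argument. Let $\cV_0$ be a maximal $\epsilon$-separated subset (in the Euclidean norm) of the unit sphere in $\CC^{d^2}\cong\RR^{2d^2}$; by maximality it is also an $\epsilon$-net of the sphere. The open Euclidean balls of radius $\epsilon/2$ centred at the points of $\cV_0$ are disjoint and all contained in the ball of radius $1+\epsilon/2$ around the origin, so comparing volumes in $\RR^{2d^2}$ gives
\[
|\cV_0|\,(\epsilon/2)^{2d^2}\leq(1+\epsilon/2)^{2d^2},
\qquad\text{hence}\qquad
|\cV_0|\leq\Bigl(1+\tfrac{2}{\epsilon}\Bigr)^{2d^2}\leq\Bigl(\tfrac{5}{\epsilon}\Bigr)^{2d^2},
\]
the last inequality being valid for all $\epsilon>0$ (and trivially sharp for $\epsilon\leq 1$, which is the only case of interest). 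Defining
\[
\cS_0:=\bigl\{\Tr_{A'}\proj{\phi}:\ket{\phi}\in\cV_0\bigr\}
\]
then gives a set of states of cardinality at most $(5/\epsilon)^{2d^2}$, and the computation above shows it is an $\epsilon$-net of $\cS(\CC^d)$ in the normalized trace distance.

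I do not foresee a genuine obstacle: every ingredient is standard, and the bookkeeping of constants (chosen loosely as $5/\epsilon$ rather than the sharper $3/\epsilon$) is specifically arranged so the exponent $2d^2$ absorbs any dependence on whether one covers the sphere or the ball, and whether one uses the trace distance or its normalized half. The only modest care needed is in the first step, to avoid a spurious phase optimisation: the identity above bounds the trace distance of the projectors directly by the Euclidean distance of the vectors, without first optimising $\bra{\psi}\phi\rangle$ to be real and positive.
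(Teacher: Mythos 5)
Your proof is correct and is essentially the standard argument behind the citation the paper relies on (the paper itself gives no proof, just quotes [HLSW:randomizing]): a volumetric Euclidean net on the unit sphere of the purifying space $\CC^{d^2}\cong\RR^{2d^2}$, the bound $\|\proj{\psi}-\proj{\phi}\|_1\leq 2\|\ket{\psi}-\ket{\phi}\|_2$, and contractivity of the trace norm under partial trace, which is exactly how the exponent $2d^2$ arises. The only blemish is the parenthetical claim that $1+\tfrac{2}{\epsilon}\leq\tfrac{5}{\epsilon}$ holds ``for all $\epsilon>0$'' --- it holds only for $\epsilon\leq 3$ --- but this is harmless since the lemma is only meaningful for $\epsilon\leq 1$, as you yourself note.
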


\begin{lemma}[{Duality of R\'enyi entropies~\cite{Beigi:sandwich,Mueller-Lennert:sandwich}, see also \cite{TBH:Renyi}}]
  \label{lemma:Renyi-duality}  
  If $\alpha,\beta\in\left[\frac12,\infty\right]$ such that $\frac{1}{\alpha}+\frac{1}{\beta}=2$, then for any pure tripartite state $\psi_{ABC}$: $ \widetilde{H}_\alpha(A|B)_\psi = -\widetilde{H}_\beta(A|C)_\psi$. \qed
\end{lemma}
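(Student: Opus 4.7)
The plan is to exploit the Schatten-$p$-norm reformulation of the sandwiched R\'enyi conditional entropy together with H\"older duality and the pure-state structure of $\ket{\psi}_{ABC}$. First, via the identity $\Tr(XX^\dagger)^\alpha=\|X\|_{2\alpha}^{2\alpha}$ applied to $X=\sigma_B^{(1-\alpha)/(2\alpha)}\psi_{AB}^{1/2}$, one rewrites
\[
\widetilde{H}_\alpha(A|B)_{\psi|\sigma} = \frac{1}{1-\alpha}\log\bigl\|\sigma_B^{(1-\alpha)/(2\alpha)}\,\psi_{AB}^{1/2}\bigr\|_{2\alpha}^{2\alpha}.
\]
The hypothesis $1/\alpha+1/\beta=2$ yields two simultaneous algebraic facts that are both essential: $(1-\alpha)/(2\alpha)=-(1-\beta)/(2\beta)$, so the sandwiching exponents on the two sides are negatives of each other; and $\tfrac{1}{2\alpha}+\tfrac{1}{2\beta}=1$, so the Schatten indices $(2\alpha,2\beta)$ are H\"older conjugate.

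Second, I would assume $\alpha>1$, hence $\beta<1$ (the opposite case is symmetric, and $\alpha=\beta=1$ follows by continuity from the von Neumann limit). Because $\tfrac{1}{1-\alpha}<0$ while $\tfrac{1}{1-\beta}>0$, maximizing $\widetilde{H}_\alpha(A|B)_{\psi|\sigma}$ over $\sigma_B$ amounts to \emph{minimizing} the Schatten norm $\bigl\|\sigma_B^{(1-\alpha)/(2\alpha)}\psi_{AB}^{1/2}\bigr\|_{2\alpha}$, whereas maximizing $\widetilde{H}_\beta(A|C)_{\psi|\tau}$ amounts to \emph{maximizing} $\bigl\|\tau_C^{(1-\beta)/(2\beta)}\psi_{AC}^{1/2}\bigr\|_{2\beta}$. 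A short algebraic rearrangement, using the identity $\alpha(1-\beta)=-\beta(1-\alpha)$ (immediate from $\beta=\alpha/(2\alpha-1)$), shows that the claimed duality $\widetilde{H}_\alpha(A|B)_\psi=-\widetilde{H}_\beta(A|C)_\psi$ is \emph{equivalent} to the min-max identity
\[
\min_{\sigma_B} \bigl\|\sigma_B^{(1-\alpha)/(2\alpha)}\,\psi_{AB}^{1/2}\bigr\|_{2\alpha} \;=\; \max_{\tau_C} \bigl\|\tau_C^{(1-\beta)/(2\beta)}\,\psi_{AC}^{1/2}\bigr\|_{2\beta}.
\]

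To establish this min-max identity I would represent the purification as a linear map $V:C\to AB$ defined by $V\ket{k}_C:=(\1_{AB}\otimes\bra{k}_C)\ket{\psi}_{ABC}$, so that $VV^\dagger=\psi_{AB}$ and $V^\dagger V=\psi_C$, along with its companion $V':B\to AC$ for the other bipartition. Setting $W:=\sigma_B^{(1-\alpha)/(2\alpha)}V$, I would apply Schatten/H\"older duality $\|W\|_{2\alpha}=\sup_{\|Y\|_{2\beta}\le 1}|\Tr(WY^\dagger)|$ and use the pure-state structure to identify the H\"older-optimal $Y$, up to an isometry fixed by $\ket{\psi}$, with an operator of the form $\tau_C^{(1-\beta)/(2\beta)}\psi_{AC}^{1/2}$; this is precisely where the sign-flipped exponent $(1-\alpha)/(2\alpha)=-(1-\beta)/(2\beta)$ and the conjugate Schatten indices make the matching consistent, converting the $B$-side minimum into the $C$-side maximum. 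The hard part will be the explicit operator-theoretic identification of the H\"older-optimal dual $Y$ with the desired form through the pure state, which requires careful bookkeeping of polar decompositions of $V$ and $V'$ and of the unitary freedom in the Schmidt decomposition of $\ket{\psi}_{ABC}$ across the different bipartitions. The boundary exponents $\alpha,\beta\in\{\tfrac12,1,\infty\}$ and rank-deficient marginals are then handled by continuity and by interpreting inverses as pseudo-inverses on supports; the extremal case $\alpha=\infty,\beta=\tfrac12$ recovers the already-recalled duality $H_{\min}(A|B)_\psi = -H_{\max}(A|C)_\psi$.
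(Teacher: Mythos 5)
The paper itself gives no proof of this lemma: it is imported verbatim from the cited works of Beigi and M\"uller-Lennert \emph{et al.}, so your attempt has to stand on its own. Its first half does: rewriting $\widetilde{H}_\alpha(A|B)_{\psi|\sigma}=\frac{2\alpha}{1-\alpha}\log\bigl\|\sigma_B^{(1-\alpha)/(2\alpha)}\psi_{AB}^{1/2}\bigr\|_{2\alpha}$, and using $(1-\alpha)/(2\alpha)=-(1-\beta)/(2\beta)$, $\frac{1}{2\alpha}+\frac{1}{2\beta}=1$ and $\frac{2\alpha}{1-\alpha}=-\frac{2\beta}{1-\beta}$, your reduction of the duality to the norm identity $\min_{\sigma_B}\bigl\|\sigma_B^{(1-\alpha)/(2\alpha)}\psi_{AB}^{1/2}\bigr\|_{2\alpha}=\max_{\tau_C}\bigl\|\tau_C^{(1-\beta)/(2\beta)}\psi_{AC}^{1/2}\bigr\|_{2\beta}$ is correct, and it is precisely the starting point of Beigi's H\"older-type proof.

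The gap is that this min--max identity \emph{is} the lemma, and your plan for it does not go through as stated. For a fixed $\sigma_B$, Schatten--H\"older duality gives $\|W_\sigma\|_{2\alpha}=\sup_{\|Y\|_{2\beta}\le 1}|\Tr(W_\sigma Y^\dagger)|$ with optimizer $Y\propto U_\sigma|W_\sigma|^{2\alpha-1}$ (polar part times a power of $|W_\sigma|$); for a generic $\sigma_B$ this is \emph{not} of the form $\tau_C^{(1-\beta)/(2\beta)}\psi_{AC}^{1/2}$, and asserting that it takes this form at the (unknown) optimal $\sigma_B$ is circular. What is missing is either (i) a justified interchange of $\min_{\sigma_B}$ with $\sup_Y$ --- a saddle-point/minimax argument, which requires a convexity--concavity structure in $(\sigma,Y)$ that you have not exhibited and which is not obvious in this parametrization --- or (ii) the standard two-sided argument: one inequality from a three-factor H\"older estimate applied to $1=\langle\psi|\psi\rangle$ with suitably inserted powers of $\sigma_B$ and $\tau_C$, the other from an explicit feasible choice of $\tau_C$ (respectively $\sigma_B$) and a direct computation. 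Note also that the cross-bipartition bookkeeping you defer is not cosmetic: $Y$ is a map $C\to AB$, while $\tau_C^{(1-\beta)/(2\beta)}\psi_{AC}^{1/2}$ acts on $AC$, so the identification must pass through the second purification map $V':B\to AC$, and this is exactly where the pure-state structure carries the weight of the proof. As written, your argument establishes only the (correct) reformulation, not the identity itself; completing it would essentially reproduce the proofs of Beigi or M\"uller-Lennert \emph{et al.}, which is why the paper simply cites them.
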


\begin{lemma}[{Classical conditioning~\cite[Prop.~9]{Mueller-Lennert:sandwich}}]
  \label{lemma:Renyi-cq-conditioning}
  For a cq-state $\rho_{ABY} = \sum_y \rho_{AB} \ox p_y\proj{y}_Y$ and any $\alpha>0$,
  \[
    \phantom{=================}
    \widetilde{H}_\alpha(A|BY)_{\rho} 
     \geq \min_y \widetilde{H}_\alpha(A|B)_{p_y}. 
    \phantom{================}
    \qedsymbol
  \]
\end{lemma}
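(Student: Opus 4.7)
The plan is to use the variational definition of the sandwiched conditional R\'enyi entropy, $\widetilde{H}_\alpha(A|BY)_\rho = \max_{\sigma_{BY}} \widetilde{H}_\alpha(A|BY)_{\rho|\sigma_{BY}}$, and to exhibit a single well-chosen block-diagonal conditioning state $\sigma_{BY}$ that already achieves the desired lower bound. Concretely, let $p_y := \Tr\rho_{AB}^{(y)}$ and $\rho_y := \rho_{AB}^{(y)}/p_y$, and for each $y$ pick $\sigma_B^{(y)}$ attaining the maximum in the definition of $\widetilde{H}_\alpha(A|B)_{\rho_y}$. The natural candidate is
\[
  \sigma_{BY} := \sum_y p_y\, \sigma_B^{(y)} \ox \proj{y}_Y,
\]
which is normalized and classical on $Y$ just like $\rho_{ABY}$.

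The key computation then plugs this $\sigma_{BY}$ into the sandwich and exploits that both operators are block-diagonal in $y$. The fractional power $\sigma_{BY}^{(1-\alpha)/(2\alpha)}$ splits into blocks, the $p_y$-factors from $\rho_{AB}^{(y)}=p_y\rho_y$ and from $\sigma_{BY}$ combine as $p_y^{(1-\alpha)/\alpha}\cdot p_y = p_y^{1/\alpha}$, and after taking the $\alpha$-th power and trace one is left with the clean identity
\[
  \Tr\!\left[\left(\sigma_{BY}^{\frac{1-\alpha}{2\alpha}} \rho_{ABY}\, \sigma_{BY}^{\frac{1-\alpha}{2\alpha}}\right)^{\alpha}\right]
  = \sum_y p_y\, 2^{(1-\alpha)\widetilde{H}_\alpha(A|B)_{\rho_y}}.
\]
Matching the classical weights of $\sigma_{BY}$ to the $p_y$'s is the crucial design choice: it is what makes the two sets of $p_y$-factors telescope into a single classical average, and any other choice would leave a stray moment of $p$ that only weakens the bound.

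Setting $H^\star := \min_y \widetilde{H}_\alpha(A|B)_{\rho_y}$, the rest is bookkeeping. For $\alpha>1$ we have $1-\alpha<0$, so every summand is bounded above by $p_y\, 2^{(1-\alpha)H^\star}$ and the whole sum is bounded by $2^{(1-\alpha)H^\star}$; applying $\frac{1}{1-\alpha}\log(\cdot)$ flips the inequality to yield $\widetilde{H}_\alpha(A|BY)_{\rho|\sigma_{BY}} \geq H^\star$. For $\alpha<1$ the same conclusion survives because both the inequality on summands and the sign of $1/(1-\alpha)$ flip simultaneously. Taking the maximum over $\sigma_{BY}$ then gives the stated bound. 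The only real hazard is keeping track of these two sign flips at once; beyond that, the claim is a direct consequence of the variational definition together with the cq block structure of $\rho_{ABY}$, with no simultaneous-smoothing-style obstacle in sight.
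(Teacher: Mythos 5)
Your proof is correct. The paper itself does not prove this lemma but simply cites it (Prop.~9 of M\"uller-Lennert \emph{et al.}), and your argument is essentially the standard one from that reference: plug the block-diagonal conditioning state $\sigma_{BY}=\sum_y p_y\,\sigma_B^{(y)}\ox\proj{y}_Y$, with per-block optimizers and classical weights matched to $p_y$, into the variational definition, use that the sandwich respects the block structure so that the trace becomes $\sum_y p_y 2^{(1-\alpha)\widetilde{H}_\alpha(A|B)_{\rho_y}}$, and track the sign of $1-\alpha$ in both the summand bound and the outer $\frac{1}{1-\alpha}\log$. The only loose ends are trivial edge cases you may mention for completeness: $\alpha=1$ (excluded by the sandwiched definition, recovered by taking the limit), blocks with $p_y=0$ (drop them), and the degenerate situation where some $\widetilde{H}_\alpha(A|B)_{\rho_y}=-\infty$, in which case the claimed bound holds vacuously.
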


\begin{lemma}
  \label{lemma:quasi-concavity}
  For any convex combination of $N$ states on $AB$, $\overline{\rho} = \sum_{i=1}^N p_i \rho_i$, and $0< \beta \leq \infty$,
  \[
    \widetilde{H}_\beta(A|B)_{\overline{\rho}} \leq \max_i \widetilde{H}_\beta(A|B)_{\rho_i} + \log N.
  \]
\end{lemma}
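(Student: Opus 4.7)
The plan is to lift the mixture to a classical--quantum state with an auxiliary flag register $Y$ and combine data processing with a direct block-diagonal computation. I would introduce
\[
\rho_{ABY} := \sum_{i=1}^N p_i\, \rho_i \ox \proj{i}_Y,
\]
whose partial trace over $Y$ is exactly $\overline{\rho}_{AB}$. Since $\Tr_Y$ is a cptp map acting on the $A$-side, data processing for the sandwiched R\'enyi conditional entropy gives $\widetilde{H}_\beta(A|B)_{\overline{\rho}} \leq \widetilde{H}_\beta(AY|B)_{\rho_{ABY}}$, so it is enough to bound the right-hand side by $\max_i\widetilde{H}_\beta(A|B)_{\rho_i} + \log N$.

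Fix an arbitrary state $\sigma_B \in \cS(B)$ and abbreviate $h_i := \widetilde{H}_\beta(A|B)_{\rho_i|\sigma_B}$, $h_\star := \max_i h_i$. Because $\rho_{ABY}$ is block-diagonal in $Y$ and $\1_{AY}\ox\sigma_B$ acts trivially on $Y$, the $\beta$-sandwiched operator itself splits into blocks indexed by $i$, yielding
\[
Q_\beta(\rho_{ABY} \| \1_{AY}\ox\sigma_B) \;=\; \sum_{i=1}^N p_i^\beta\, Q_\beta(\rho_i \| \1_A\ox\sigma_B) \;=\; \sum_{i=1}^N p_i^\beta\, 2^{(1-\beta)h_i}.
\]
Jensen's inequality applied to $x\mapsto x^\beta$ gives $\sum_i p_i^\beta \geq N^{1-\beta}$ for $\beta\geq 1$ and the reverse inequality for $\beta\leq 1$; combined with the monotonicity of $x\mapsto 2^{(1-\beta)x}$ (decreasing for $\beta>1$, increasing for $\beta<1$), in either case $\sum_i p_i^\beta\, 2^{(1-\beta)h_i}$ compares with $2^{(1-\beta)(h_\star+\log N)}$ in the direction consistent with the sign of $\frac{1}{1-\beta}$. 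Applying $\frac{1}{1-\beta}\log(\cdot)$ therefore produces the uniform bound
\[
\widetilde{H}_\beta(AY|B)_{\rho_{ABY}|\sigma_B} \;\leq\; h_\star + \log N \;=\; \max_i \widetilde{H}_\beta(A|B)_{\rho_i|\sigma_B} + \log N.
\]
The limiting case $\beta=\infty$ is handled directly: block-diagonality yields $H_{\min}(AY|B)_{\rho_{ABY}|\sigma_B} = \min_i\{-\log p_i + H_{\min}(A|B)_{\rho_i|\sigma_B}\}$, and evaluating this minimum at $i = \arg\max_i p_i$ (so $p_i\geq 1/N$) already gives the bound $\log N + \max_i H_{\min}(A|B)_{\rho_i|\sigma_B}$.

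Taking the maximum over $\sigma_B$ on both sides, and using the elementary identity $\max_{\sigma_B}\max_i f_i(\sigma_B) = \max_i\max_{\sigma_B} f_i(\sigma_B)$, one obtains $\widetilde{H}_\beta(AY|B)_{\rho_{ABY}} \leq \log N + \max_i \widetilde{H}_\beta(A|B)_{\rho_i}$; chaining this with the data-processing step of the first paragraph yields the claim. The only delicate point is the sign bookkeeping when $\beta$ crosses $1$: the three direction-sensitive ingredients (the Jensen estimate on $\sum_i p_i^\beta$, the monotonicity of $x\mapsto 2^{(1-\beta)x}$, and the sign of $\frac{1}{1-\beta}$) all flip simultaneously at $\beta=1$, so the $+\log N$ correction comes out uniformly in $\beta$.
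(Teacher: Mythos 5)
The gap is in your very first step. There is no data-processing inequality asserting that a cptp map on the \emph{conditioned} system cannot increase the conditional R\'enyi entropy: applying DPI for $\widetilde{D}_\beta$ to $\Tr_Y$ compares $\rho_{ABY}$ against $\1_{AY}\ox\sigma_B$, whose image is $N\,\1_A\ox\sigma_B$ rather than $\1_A\ox\sigma_B$, so the most this route yields is $\widetilde{H}_\beta(A|B)_{\overline{\rho}} \leq \widetilde{H}_\beta(AY|B)_{\rho_{ABY}} + \log N$, which is useless here. Worse, the inequality you assert is false for general (non-classical) $Y$: take $A$ trivial and $Y$ maximally entangled with $B$; then $\widetilde{H}_\beta(A|B)_{\overline{\rho}}=0$ while $\widetilde{H}_\beta(AY|B)_{\rho}=-\log|Y|<0$. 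The inequality does hold for your cq lift, but only \emph{because} $Y$ is classical, and that fact is not a generic DPI: with $\sigma_B$ fixed and $M_i := \bigl(\1_A\ox\sigma_B^{(1-\beta)/2\beta}\bigr)\,p_i\rho_i\,\bigl(\1_A\ox\sigma_B^{(1-\beta)/2\beta}\bigr)$, it is exactly the statement $\Tr\bigl(\sum_i M_i\bigr)^{\beta} \leq \sum_i \Tr M_i^{\beta}$ for $\beta<1$ (and the reverse, McCarthy-type superadditivity, for $\beta>1$), i.e.\ precisely the nontrivial comparison between $\widetilde{Q}_\beta(\overline{\rho}\,\|\,\1_A\ox\sigma_B)$ and $\sum_i p_i^\beta\,\widetilde{Q}_\beta(\rho_i\|\1_A\ox\sigma_B)$ that carries the entire content of the lemma. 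As written, your block-diagonal evaluation of $\widetilde{Q}_\beta(\rho_{ABY}\|\1_{AY}\ox\sigma_B)$ is never legitimately compared to $\widetilde{Q}_\beta(\overline{\rho}\|\1_A\ox\sigma_B)$; the step you dismiss as ``data processing'' is where the theorem lives. (Even your $\beta=\infty$ case needs this comparison, though there it is easy since $\overline{\rho}\geq p_i\rho_i$.)

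Once that comparison is supplied — either by proving the sub-/super-additivity of $X\mapsto\Tr X^\beta$ on positive operators, or by citing the two-sided mixture bound the paper invokes (Mosonyi, Prop.~2.9), namely $\sum_i p_i\,\widetilde{Q}_\beta(\rho_i\|\sigma)\leq\widetilde{Q}_\beta(\overline{\rho}\|\sigma)\leq\sum_i p_i^\beta\,\widetilde{Q}_\beta(\rho_i\|\sigma)$ for $\beta<1$ and its analogue for $\beta>1$ — your auxiliary register $Y$ becomes redundant: the paper simply bounds $\widetilde{Q}_\beta(\overline{\rho}\|\sigma)\leq\bigl(\max_i\widetilde{Q}_\beta(\rho_i\|\sigma)\bigr)\sum_i p_i^\beta\leq\bigl(\max_i\widetilde{Q}_\beta(\rho_i\|\sigma)\bigr)N^{1-\beta}$ and takes logarithms. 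The remainder of your argument — the block formula $\widetilde{Q}_\beta(\rho_{ABY}\|\1\ox\sigma_B)=\sum_i p_i^\beta\widetilde{Q}_\beta(\rho_i\|\sigma_B)$, the estimate of $\sum_i p_i^\beta$ against $N^{1-\beta}$, the sign bookkeeping across $\beta=1$, the $\beta=\infty$ formula, and the final optimization over $\sigma_B$ — is correct and parallels the paper's calculation (you should also address $\beta=1$, which the paper obtains as a limit). So the fix is to replace the DPI sentence by the classicality-based (Rotfel'd/McCarthy or Mosonyi) inequality; without it the proof has a genuine gap at its only substantive step.
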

\begin{proof}
We show the bound only for $0<\beta<1$, for $\beta>1$ it is analogous, and for $\beta=1$ it follows from taking a limit (the case $\beta=\infty$ had been observed in \cite{MorganWinter:prettystrong}). 
Our starting point is the relation \cite[Prop.~2.9]{Mosonyi}
\[
  \sum_{i=1}^N p_i \widetilde{Q}_\beta(\rho_i\|\sigma) 
  \leq \widetilde{Q}_\beta(\overline{\rho}\|\sigma) 
  \leq \sum_{i=1}^N p_i^\beta \widetilde{Q}_\beta(\rho_i\|\sigma),
\]
for the sandwiched R\'enyi relative entropy and the quantity appearing inside the logarithm:
\begin{align*}
  \widetilde{D}_\beta(\rho\|\sigma) 
    = \frac{1}{\beta-1}\log \widetilde{Q}_\beta(\rho\|\sigma), \quad
  \widetilde{Q}_\beta(\rho\|\sigma)
    = \Tr\left[\sigma^{\frac{1-\beta}{2\beta}}\rho\sigma^{\frac{1-\beta}{2\beta}} \right]^\beta.
\end{align*}
We use the right-hand inequality and upper bound successively 
\[\begin{split}
  \widetilde{Q}_\beta(\overline{\rho}\|\sigma) 
  \leq \sum_{i=1}^N p_i^\beta \widetilde{Q}_\beta(\rho_i\|\sigma) 
  \leq \left(\max_i \widetilde{Q}_\beta(\rho_i\|\sigma)\right) \sum_{i=1}^N p_i^\beta 
  \leq \left(\max_i \widetilde{Q}_\beta(\rho_i\|\sigma)\right) N^{1-\beta},
\end{split}\]
the rightmost inequality by the concavity of the function $x^\beta$. Thus, 
\[
  \widetilde{D}_\beta(\overline{\rho}\|\sigma) \geq \min_i \widetilde{D}_\beta(\rho_i\|\sigma) - \log N,
\]
so finally for our conditional R\'enyi entropy, $\widetilde{H}_\beta(A|B)_\rho = \max_{\sigma_B} -\widetilde{D}_\beta\left(\rho_{AB}\|\1_A\ox\sigma_B\right)$,
\[\begin{split}
  \widetilde{H}_\beta(A|B)_{\overline{\rho}}
  &= - \widetilde{D}_\beta(\overline{\rho}_{AB}\|\1_A\ox\sigma_B) \\
  &\leq \max_i \left( -\widetilde{D}_\beta(\rho_i\|\1_A\ox\sigma_B)\right) + \log N \\
  &\leq \max_i \widetilde{H}_\beta(A|B)_{\rho_i} + \log N,
\end{split}\]
and we are done.
\end{proof}

\subsection{Local randomness extraction}
\label{subsec:randomness}
Randomness extraction aims to convert weak randomness into (almost) uniform random bits. If we hold some side information $E$ about the random variable $A$, we want our output to be perfectly random even with respect to the side information. That is to say, we want it not only to be uniform but also uncorrelated from $E$.

Measuring a state is a source of weak randomness, and each possible measure gives us a different probability distribution of the outcomes. We would like to bind the amount of randomness that can be extracted from an arbitrary state $\rho_A$ over all possible measurements. Even more, if we allow some side party $E$ to hold side quantum correlations, we want our output to be uniform and independent from it. This means that the processing $\cN_{A\rightarrow X}$ of the overall state should result in $\cN_{A\rightarrow X}(\rho_{AE})=\frac{\mathds1_{X}}{\abs{X}}\ox\rho_E$. From this, it is quite clear that there must be a connection between this problem and decoupling.

We want to go beyond this single-user scenario and study multipartite randomness extraction. This has been developed in \cite{YHW:randomness} in the i.i.d.~asymptotic setting for $k=2$. Here we consider a state $\rho_{A_1\ldots A_k E}$ of $k$ cooperating users $A_i$ and an eavesdropper $E$. The objective of the $A_i$ parties is to each make a destructive projective measurement $\{\Pi_{x_i}^{(i)}\}_{x_i\in [t_i]}:A_i\rightarrow X_i$ so that all random variables $X_i$ are jointly uniformly distributed and independent from $E$. We assume $t_i\leq |A_i|$ and identify the outcomes $x_i$ with basis states $\ket{x_i}$ of a $t_i$-dimensional Hilbert space $X_i$.
After the application of the POVM, we want the output state $\sigma_{X_{[k]} E}$ to satisfy
\begin{equation*}\begin{split}
  \sum_{x_1\in[t_1]}\dots\sum_{x_k\in[t_k]} &\proj{x_1}_{X_1}\otimes\cdots\otimes\proj{x_k}_{X_k}\otimes \Tr_{A_{[k]}} \left( \rho_{A_{[k]}E}\left( \Pi_{x_1}^{(1)}\otimes\cdots\otimes \Pi_{x_k}^{(k)}\otimes \1_E \right) \right) \\
  &\phantom{================}
   =:\sigma_{X_{[k]} E}\stackrel{!}{\approx} \frac{\1_{X_1}}{|X_1|}\otimes\cdots\otimes\frac{\1_{X_k}}{|X_k|}\otimes\rho_E.
\end{split}\end{equation*}
The trace distance
\[
  \delta := \frac12 \left\| \sigma_{X_{[k]} E} - \frac{\1_{X_1}}{|X_1|}\otimes\cdots\otimes\frac{\1_{X_k}}{|X_k|}\otimes\rho_E\right\|_1
\]
is called the \emph{error} of the protocol. 

In the base case $k=1$, this problem has been comprehensively studied in \cite{BertaFawziWehner:randomness}, where it was shown that $\log |X_1|$ can be as large as $\log|A_1| + (H_{\min}^\epsilon(A_1|E)_\rho)_- + O(\log\epsilon)$ while the error is $O(\epsilon)$, where $\left(x\right)_- = \min\{0,x\}$ is defined as the negative part of the real number $x$; furthermore, it was shown that this is essentially optimal. Looking at a subset $I\subseteq[k]$ of players and treating them as a single one, the optimality part of the result from \cite{BertaFawziWehner:randomness} shows that a protocol of error $\epsilon$ necessarily satisfies $\sum_{i\in I} \log|X_i| \leq \sum_{i\in I} \log|A_i| + (H_{\min}^\epsilon(A_I|E)_\rho)_-$ for all $I\subseteq[k]$. We will show that this can essentially be achieved. 

\begin{theorem}
  \label{theorem:local_rand_extr}
  Consider the randomness extraction setting above. 
  If the $t_i$ satisfy the following set of inequalities,
  \begin{equation}
  \label{eq:randomness-one-shot}
  \forall \emptyset\neq I\subseteq[k] 
  \quad 
  \sum_{i\in I} \log t_i \leq \sum_{i\in I} \log|A_i| + \left(H_{\min}^\epsilon(A_I|E)_\rho\right)_- + 2\log\epsilon,
  \end{equation}
  then there exists a one-shot local randomness extraction protocol with rates $\log t_i$ and error $\delta \leq (3^k+2^{k-1})\epsilon$
\end{theorem}
\begin{corollary}
  \label{corollary:local_rand_extr}
  Consider the i.i.d.~asymptotics of the state $\rho_{A_1\ldots A_k E}^{\ox n}$. The optimal rate region of the randomness rates $R_i=\frac{1}{n}\log t_i$ of bits per copy of the state in the limit of block length $n\rightarrow \infty$ and error $\delta\rightarrow 0$ is given by the set of inequalities
  \begin{equation}
    \label{eq:randomness-iid}
    \forall \emptyset\neq I\subseteq[k] \quad \sum_{i\in I} R_i \leq \sum_{i\in I} \log|A_i| + (S(A_I|E)_\rho)_-.
  \end{equation}
\end{corollary}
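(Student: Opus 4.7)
The plan is to derive Corollary~\ref{corollary:local_rand_extr} from the one-shot result Theorem~\ref{theorem:local_rand_extr} by a standard application of the quantum AEP (Theorem~\ref{theorem:AEP}), handling achievability and the converse separately.

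For achievability, I would fix a rate tuple $(R_1,\dots,R_k)$ strictly inside the region described by~\eqref{eq:randomness-iid}, and for each $n$ set $t_i^{(n)}:=\lfloor 2^{nR_i}\rfloor$. Instantiating Theorem~\ref{theorem:local_rand_extr} on the i.i.d.\ state $\rho_{A_{[k]}E}^{\otimes n}$ with $A_i\!\leftarrow\!A_i^n$, $E\!\leftarrow\!E^n$, and a common smoothing parameter $\epsilon>0$, the one-shot inequalities become
\begin{equation*}
\sum_{i\in I}\log t_i^{(n)} \leq \sum_{i\in I} n\log|A_i| + \left(H_{\min}^{\epsilon}(A_I^n|E^n)_{\rho^{\otimes n}}\right)_{-} + 2\log\epsilon.
\end{equation*}
Dividing by $n$ and invoking the AEP, $\tfrac{1}{n}H_{\min}^{\epsilon}(A_I^n|E^n)_{\rho^{\otimes n}}\to S(A_I|E)_\rho$; combined with the continuity and positive homogeneity of $x\mapsto(x)_{-}$, and the fact that $2\log\epsilon/n\to 0$, the one-shot inequalities are satisfied for every $I$ for all sufficiently large $n$, yielding extractors with error of order $\epsilon$. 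Letting $\epsilon\to 0$ along a sequence (say $\epsilon_m=2^{-m}$) and performing a standard diagonalisation in $n$ then produces a sequence of extraction protocols achieving the prescribed rates with vanishing error.

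The converse is immediate from the observation, already recorded just before Theorem~\ref{theorem:local_rand_extr}, that the single-user bound of \cite{BertaFawziWehner:randomness}, applied with the subset $I\subseteq[k]$ treated as a single party, yields $\sum_{i\in I}\log|X_i|\leq\sum_{i\in I}\log|A_i|+(H_{\min}^{\epsilon}(A_I|E)_\rho)_{-}$ for any $\epsilon$-good extractor. Applying this to $\rho^{\otimes n}$, dividing by $n$, and invoking the AEP once more in the limits $n\to\infty$ then $\epsilon\to 0$ produces the matching asymptotic bound $\sum_{i\in I}R_i\leq\sum_{i\in I}\log|A_i|+(S(A_I|E)_\rho)_{-}$. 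The only mildly delicate point is coordinating the two limiting parameters $(n,\epsilon)$ so that the one-shot error vanishes while the AEP is applicable; this is routine and presents no genuine obstacle, since the achievability argument works with $\epsilon$ fixed before $n\to\infty$ and only then $\epsilon\to 0$.
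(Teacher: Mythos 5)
Your proposal is correct and follows essentially the same route as the paper: achievability by instantiating the one-shot Theorem~\ref{theorem:local_rand_extr} on $\rho^{\ox n}$ and invoking the AEP (Theorem~\ref{theorem:AEP}) with $\epsilon$ fixed before the limits $n\rightarrow\infty$, $\epsilon\rightarrow 0$, and the converse from the grouped single-user outer bound. The only cosmetic difference is that the paper simply cites \cite{YHW:randomness} for the asymptotic necessity, whereas you re-derive it from the one-shot bound of \cite{BertaFawziWehner:randomness} plus the AEP --- the same argument the paper sketches just before Theorem~\ref{theorem:local_rand_extr}.
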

\begin{proof}
We prove here both Theorem \ref{theorem:local_rand_extr} and Corollary \ref{corollary:local_rand_extr}.
To achieve our goal, we let each party $i$ perform a random unitary $U_i$ on $A_i$ followed by a qc-channel $\cT_i(\alpha) = \sum_{x_i=1}^{t_i} \proj{x_i} \Tr\alpha P_{x_i}^{(i)}$ (which fulfills $\widetilde{H}_2(A_i|X_i)_{\tau_i} \geq \log \frac{|A_i|}{t_i}$), where $P_{x_i}^{(i)}$ are the projectors corresponding to each $x_i\in [t_i]$ possible outcome, therefore $\1_{A_i} = \sum_{x=1}^{t_i} P_x^{(i)}$. We impose an additional property on these projectors, they must have similar ranks. Actually, we do not let any pair of projectors differ in more than one unit in rank. This condition can be expressed as $\left\lfloor\frac{|A_i|}{t_i}\right\rfloor \leq \operatorname{rank} P_x^{(i)} \leq \left\lceil\frac{|A_i|}{t_i}\right\rceil$. For concreteness, let us sort them the greater first and the smaller after $\operatorname{rank} P_x^{(i)} = \left\lceil\frac{|A_i|}{t_i}\right\rceil$ for $x=1,\ldots,|A_i|\!\mod t_i$ and $\operatorname{rank} P_x^{(i)} = \left\lfloor\frac{|A_i|}{t_i}\right\rfloor$ for $x=(|A_i|\!\mod t_i) + 1,\ldots,t_i$. 

Now we can invoke Theorem \ref{theorem:with_smoothing} with Corollary \ref{corollary:D_I-main} (cf.~Corollary \ref{corollary:D_I}), finding that there exist unitaries $U_i$ on $A_i$ (found with high probability by sampling from a $2$-design) such that
\[\begin{split}
  \sigma_{X_1\ldots X_k E} 
  &= \left( \cT_1\!\circ\cU_1 \otimes\cdots\otimes \cT_k\circ\cU_k \otimes \id_E \right) \rho_{A_1\ldots A_kE} \\
  &=\sum_{x_1\in[t_1]}\dots\sum_{x_k\in[t_k]}\proj{x_1}_{X_1}\otimes\cdots\otimes\proj{x_k}_{X_K}\\
  &\phantom{==========}
   \otimes \Tr_{A_{[k]}} \left( \rho_{A_{[k]}E}\left( U_1^\dagger P_{x_1}^{(1)} U_1 \otimes\cdots\otimes U_k^\dagger P_{x_k}^{(k)} U_k \otimes \1_E \right) \right)
\end{split}\]
satisfies
\begin{equation}\begin{split}
  \label{eq:randomness-error}
  \frac12 &\left\| \sigma_{X_1\ldots X_k E} - \frac{\1_{X_1}}{|X_1|}\otimes\cdots\otimes\frac{\1_{X_k}}{|X_k|}\otimes\rho_E \right\|_1 \\
   &\phantom{=======}
  \leq \sum_{\emptyset\neq I\subseteq[k]} 2^{|I|}\epsilon + \frac12\sum_{\emptyset\neq I\subseteq[k]} \exp_2\left[-\frac{1}{2}\widetilde{H}_2^{\epsilon}(A_I|E)_{\rho|\zeta_E^I}-\frac{1}{2}\widetilde{H}_2(A_I|B)_{\tau|\sigma_B^I}\right]\\
  &\phantom{=======}
   \leq 3^k\epsilon 
       + \frac12 \sum_{\emptyset\neq I \subseteq [k]} \exp_2\left( -\frac12 \left( H_{\min}^\epsilon(A_I|E)_\rho + \log|A_I| - \sum_{i\in I} \log t_i \right) \right),
\end{split}\end{equation}
where we have chosen all $\epsilon_I=\epsilon$ to be equal and bounded $D_I\leq1$ in the first inequality, we have calculated $\sum_{\emptyset\neq I\subseteq[k]} 2^{|I|}\epsilon=(3^k-1)\epsilon\leq3^k\epsilon$, and bounded the conditional entropies using the arguments discussed at the beginning of the section. 
Finally, we can bound $ \delta \leq (3^k+2^{k-1})\epsilon$ if the following system of linear equations is satisfied: 
\[
  \forall \emptyset\neq I\subseteq[k] 
  \quad 
  \sum_{i\in I} \log t_i \leq \sum_{i\in I} \log|A_i| + H_{\min}^\epsilon(A_I|E)_\rho + 2\log\epsilon.\]
Since all $t_i\leq|A_i|$, the above inequality is trivially true unless $H_{\min}^\epsilon(A_I|E)_\rho$ is negative. So we might as well replace the min-entropy by its negative part $(H_{\min}^\epsilon(A_I|E)_\rho)_-$, resulting in the achievability of the region \eqref{eq:randomness-one-shot}. Together with the outer bound derived from \cite{BertaFawziWehner:randomness} this region is thus shown to be essentially optimal. This answers the question from \cite{YHW:randomness} about a one-shot version of the basic protocol and achievable rates from that paper, for all $k\geq 2$; and completes the proof of Theorem \ref{theorem:local_rand_extr}.

From this bound we also obtain easily a proof for Corollary \ref{corollary:local_rand_extr}. Namely, invoking the asymptotic equipartition property for the min-entropy (Theorem \ref{theorem:AEP}), a tuple of rates $R_i=\frac{1}{n}\log t_i\geq 0$ is achievable as $n\rightarrow\infty$ if and only if 
\[
  \forall \emptyset\neq I\subseteq[k] \quad \sum_{i\in I} R_i \leq \sum_{i\in I} \log|A_i| + (S(A_I|E)_\rho)_-,
\]
which concludes the proof, since the necessity of these bounds has been argued before \cite{YHW:randomness}.
\end{proof}

This reproduces the core result of \cite{YHW:randomness} for $k=2$, albeit with a much simpler protocol than there, and proves the conjectured rate region for all numbers $k$ of users.

To illustrate the benefit of being able to address each point in the achievable rate region directly, and via one-shot techniques, we consider the case that the i.i.d.~source state is only partially known, i.e.~it is $\rho_{A_{[k]}E}^{\ox n}$ with $\rho \in \cS \subset \cS(A_1\ldots A_kE)$. The objective in this so-called \emph{compound source} setting is to design a protocol that extracts randomness universally with the same figures of merit for all $\rho^{\ox n}$, $\rho \in\cS$. The following theorem also demonstrates the power of the R\'enyi entropic decoupling Theorem \ref{theorem:without_smoothing}.

\begin{theorem}
  \label{thm:compound-randomness}
  In the i.i.d.~limit of $n\rightarrow\infty$ and $\delta\rightarrow 0$, the achievable region of the rates $R_i = \frac{\log t_i}{n}$ for a compound source $\left( \rho^{\ox n} :  \rho\in\cS\right)$, is given by 
   \begin{equation}
    \label{eq:randomness-compound}
    \forall \emptyset\neq I\subseteq[k] \quad \sum_{i\in I} R_i \leq \sum_{i\in I} \log|A_i| + \inf_{\rho\in\cS} \,(S(A_I|E)_\rho)_-.
  \end{equation}
\end{theorem}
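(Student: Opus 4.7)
The converse is immediate: Corollary \ref{corollary:local_rand_extr} applied to each individual $\rho\in\cS$ already forces $\sum_{i\in I}R_i\leq\sum_{i\in I}\log|A_i|+(S(A_I|E)_\rho)_-$, and taking the infimum over $\rho\in\cS$ yields the claimed outer bound. I therefore focus on achievability. My plan is to circumvent the AEP (which is tied to one specific i.i.d.~state) by invoking the R\'enyi decoupling bound of Theorem \ref{theorem:without_smoothing} at a parameter $\alpha_I>1$ that approaches $1$ with $n$, and to cover the whole compound family $\cS$ by an $\epsilon$-net of polynomial size.

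Fix rates $R_i\geq 0$ with a strictly positive gap $\gamma>0$: for every nonempty $I\subseteq[k]$, $\sum_{i\in I}R_i + 2\gamma \leq \sum_{i\in I}\log|A_i| + \inf_{\rho\in\cS}(S(A_I|E)_\rho)_-$. Since $\widetilde{H}_\alpha(A_I|E)_\rho$ is continuous in $\rho$ and monotonically increases to $S(A_I|E)_\rho$ as $\alpha\searrow 1$, Dini's theorem on the compact closure $\overline{\cS}$ produces uniform convergence. Concretely, I choose $\alpha_I=1+n^{-1/2}$, so that $\widetilde{H}_{\alpha_I}(A_I|E)_\rho \geq S(A_I|E)_\rho - \gamma$ uniformly in $\rho\in\cS$ for all large $n$. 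Simultaneously, Lemma \ref{lemma:net} provides an $\epsilon$-net $\cS_0\subset\overline{\cS}$ at trace-distance resolution $\delta=n^{-2}$ of cardinality $M=|\cS_0|\leq(5n^2)^{2d^2}$ with $d=|A_{[k]}E|$, which is polynomial in $n$.

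The protocol is exactly as in Theorem \ref{theorem:local_rand_extr}: each party $i$ samples a unitary $U_i$ on $A_i^n$ from a $2$-design and then performs the fixed projective POVM with $t_i=\lfloor 2^{nR_i}\rfloor$ outcomes of near-equal rank. By additivity of sandwiched R\'enyi entropy under tensor products, applying Theorem \ref{theorem:without_smoothing} to each $\rho_0^{\ox n}$ with $\rho_0\in\cS_0$ gives an expected decoupling error
\[
  \epsilon_n(\rho_0) \leq \sum_{\emptyset\neq I\subseteq[k]} 2\exp_2\!\Bigl[-n\bigl(1-\tfrac{1}{\alpha_I}\bigr)\bigl(\widetilde{H}_{\alpha_I}(A_I|E)_{\rho_0}+\log|A_I|-\textstyle\sum_{i\in I}R_i\bigr)\Bigr].
\]
A short case analysis (splitting on the sign of $S(A_I|E)_{\rho_0}$ and using the uniform R\'enyi-to-von-Neumann bound) shows the inner bracket is $\geq\gamma$, so the exponent is $\leq -n(1-\alpha_I^{-1})\gamma=-\Omega(\sqrt n)$, and $\epsilon_n(\rho_0)\leq 2^{-\Omega(\sqrt n)}$ uniformly in $\rho_0\in\cS_0$. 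Choosing the unitary tuple minimising the sum of the $M$ expected errors yields a single $U_{[k]}$ with error $\leq M\epsilon_n$ simultaneously on every $\rho_0^{\ox n}\in\cS_0$; since $M$ grows polynomially while $\epsilon_n$ is superpolynomially small, $M\epsilon_n\to 0$.

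For an arbitrary $\rho\in\cS$, pick $\rho_0\in\cS_0$ with $\frac12\|\rho-\rho_0\|_1\leq\delta$, so $\frac12\|\rho^{\ox n}-\rho_0^{\ox n}\|_1\leq n\delta=1/n$. Monotonicity of the trace distance under the cptp measurement map and the triangle inequality then give total error $\leq M\epsilon_n + 4/n \to 0$ on $\rho^{\ox n}$. The main technical obstacle is the simultaneous calibration of $(\alpha_I,\delta,\gamma)$: $\alpha_I$ must be close enough to $1$ that the R\'enyi entropy uniformly approximates the von Neumann entropy on $\cS$, yet $(1-\alpha_I^{-1})\cdot n$ must be large enough that $\epsilon_n \ll 1/M$; and $\delta$ must be small enough for the continuity extension while keeping the net size subexponential. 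The clean choice $\alpha_I=1+n^{-1/2}$, $\delta=n^{-2}$ reconciles these constraints.
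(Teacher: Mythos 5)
Your proposal is correct, and its backbone is the same as the paper's: the converse is lifted from Corollary~\ref{corollary:local_rand_extr}, and achievability is obtained by replacing smooth min-entropies with the R\'enyi decoupling bound of Theorem~\ref{theorem:without_smoothing}, covering $\cS$ by a polynomial-size net (Lemma~\ref{lemma:net}) and using uniform convergence $\widetilde{H}_\alpha\to S$ on the compact state space. Where you genuinely deviate is in how a single code is made to work for every net element: the paper forms the averaged cq-source $\widetilde{\rho}=\frac1N\sum_y\rho^{(y)\ox n}\ox\proj{y}_Y$, applies the one-shot protocol of Theorem~\ref{theorem:local_rand_extr} to that single state, lower-bounds $\widetilde{H}_\alpha(A_I^n|E^nY)_{\widetilde{\rho}}$ via the classical-conditioning Lemma~\ref{lemma:Renyi-cq-conditioning}, and converts error $\delta$ on the average into error $N\delta$ per component; you instead apply the decoupling bound to each net element separately and derandomize by choosing the unitary tuple minimising the sum of the $M$ expected errors, which gives the same factor-$M$ loss without the auxiliary $Y$ register or Lemma~\ref{lemma:Renyi-cq-conditioning}. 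A second difference is parameter scheduling: the paper keeps $\alpha>1$ fixed, lets $n\to\infty$, and only then sends $\alpha\to1$ to remove the uniform gap $\Delta(\alpha)$, whereas you couple $\alpha_I=1+n^{-1/2}$ to the block length and obtain an explicit $2^{-\Omega(\sqrt{n})}$ error before the polynomial net blow-up; both are valid, yours yields a more quantitative error decay, the paper's keeps the protocol literally identical to the one-shot construction applied to one (averaged) state. Your verification that the exponent is at least $\gamma$ is sound (indeed the case split on the sign of $S(A_I|E)_{\rho_0}$ is not even needed, since $S\geq(S)_-$ always), and the extension from the net to arbitrary $\rho\in\cS$ via $\frac12\|\rho^{\ox n}-\rho_0^{\ox n}\|_1\leq n\delta$ and monotonicity matches the paper's final step.
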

\begin{proof}
The optimality of the bounds follows from Corollary \ref{corollary:local_rand_extr}, since for a given subset $I\subseteq[k]$ and any $\rho\in\cS$ the bound $\sum_{i\in I} R_i \leq \sum_{i\in I} \log|A_i| + (S(A_I|E)_\rho)_-$ applies. It remains to prove the achievability. To this end, for block length $n$, we choose an $\frac{\eta}{n}$-net $\cS_0 \subset \cS$ to approximate elements of $\cS$ in trace norm. By adapting the proof of Lemma \ref{lemma:net}, we find $N:= |\cS_0| \leq \left(\frac{5n}{\eta}\right)^{2|A_{[k]}|^2|E|^2}$. We number the elements of the net, $\cS_0 = \{ \rho^{(y)} : y=1,\ldots,N \}$ and define the cq-state
\[
  \widetilde{\rho}_{A_{[k]}^nE^nY} 
    = \frac{1}{N} \sum_{y=1}^N  \rho_{A_{[k]}E}^{(y)\ox n} \ox \proj{y}_Y.
\]
The plan is to construct a protocol for this state, argue that hence it works well on each $\rho_{A_{[k]}E}^{(y)\ox n}$, and finally that it also works on every $\rho^{\ox n}$, $\rho\in \cS$ by the net property. We could do this directly using Theorem \ref{theorem:local_rand_extr}, except that we would have to make the smoothing parameter $\epsilon$ in the min-entropies dependent on $n$, which makes the argument awkward. Instead, we opt to use the R\'enyi decoupling from Theorem \ref{theorem:without_smoothing} (Corollary \ref{corollary:D_I-main}), following otherwise the proof of Theorem \ref{theorem:local_rand_extr}. This means that there, the error Equation~\eqref{eq:randomness-error} is modified to 
\begin{equation}\begin{split}
  \label{eq:randomness-error-compound}
  \frac12 &\left\| \sigma_{X_1\ldots X_k E^nY} - \frac{\1_{X_1}}{|X_1|}\otimes\cdots\otimes\frac{\1_{X_k}}{|X_k|}\otimes\widetilde{\rho}_{E^nY} \right\|_1 \\
  &\phantom{=======}
   \leq \sum_{\emptyset\neq I \subseteq [k]} \exp_2\left( -\frac{\alpha-1}{\alpha} \left( \widetilde{H}_\alpha(A_I^n|E^nY)_{\widetilde{\rho}} + n\log|A_I| - \sum_{i\in I} \log t_i \right) \right),
\end{split}\end{equation}
where we have chosen all $\alpha_I=\alpha>1$ equal. Now, the right-hand side of this bound is $\leq \delta$ if 
\[
  \forall\emptyset\neq I \subseteq[k] \quad \sum_{i\in I} \log t_i \leq \log|A_I^n| + \widetilde{H}_\alpha(A_I^n|E^nY)_{\widetilde{\rho}} + \frac{\alpha}{\alpha-1} \log\left(2^{-k}\delta\right), 
\]
for some $\delta\geq0$. However, we can lower-bound the conditional R\'enyi entropy here as follows:
\[\begin{split}
  \widetilde{H}_\alpha(A_I^n|E^nY)_{\widetilde{\rho}} 
  &\geq \min_y \widetilde{H}_\alpha(A_I^n|E^n)_{\rho^{(y)\ox n}} 
   = n\left( \min_y  \widetilde{H}_\alpha(A_I|E)_{\rho^{(y)}} \right) \\
  &\geq n\left( \inf_{\rho\in\cS}  \widetilde{H}_\alpha(A_I|E)_{\rho} \right) \\
  &\geq n\left(\inf_{\rho\in\cS} S(A_I|E)_{\rho} - \Delta(\alpha) \right),
\end{split}\]
where in the first line we have used Lemma \ref{lemma:Renyi-cq-conditioning} and the additivity of the conditional sandwiched R\'enyi entropy, in the second line that $\cS_0\subset\cS$, and finally in the third the uniform convergence of $\widetilde{H}_\alpha(A_I|E)$ to $S(A_I|E)$ as functions on state space. To explain the latter, $\widetilde{H}_\alpha(A_I|E)_\rho \rightarrow S(A_I|E)_\rho$ point-wise as $\alpha\rightarrow 1$, and all $\widetilde{H}_\alpha(A_I|E)$ and the limit $S(A_I|E)$ are continuous, hence uniformly continuous, functions on the compact state space. This implies that there exists $\Delta(\alpha)>0$ (converging to $0$ as $\alpha\rightarrow 1$) such that for all $I$ and all states $\rho$, $S(A_I|E)_\rho \geq \widetilde{H}_\alpha(A_I|E)_\rho \geq S(A_I|E)_\rho - \Delta(\alpha)$. With the rates $nR_i=\log t_i$, this implies that if 
\[
  \forall\emptyset\neq I \subseteq[k] \quad \sum_{i\in I} R_i \leq \log|A_I| + \inf_{\rho\in\cS} S(A_I|E)_\rho - \Delta(\alpha) + \frac1n \frac{\alpha}{\alpha-1} \log\left(2^{-k}\delta\right),
\]
then the right hand side of the bound \eqref{eq:randomness-error-compound} is $\leq \delta$. This means that the error of the same protocol on any one of the $\rho^{(y)\ox n}$ is $\leq N\delta$. Hence, by applying the triangle inequality twice, on any $\rho^{\ox n}$, $\rho\in\cS$, the error is $\leq N\delta+2\eta$. Letting $\delta = \frac{\eta}{N}$, the error ($\leq 3\eta$) can be made arbitrarily small, while the rates are bounded as
\[
  \forall\emptyset\neq I \subseteq[k] \quad \sum_{i\in I} R_i \leq \log|A_I| + \inf_{\rho\in\cS} S(A_I|E)_\rho - \Delta(\alpha) - O\left( \frac{\log n - \log\eta}{n(\alpha-1)} \right).
\]
For $n\rightarrow\infty$ and $\alpha\rightarrow 1$, this proves the claim.
\end{proof}

\subsection{Multi-party entanglement of assistance and assisted distillation}
\label{subsec:EoA}
Consider a pure state $\psi_{ABC_1\ldots C_m}$ of two parties $A$ and $B$ who are helped by $m$ other parties $C_i$ with the aim to obtain approximately a maximally entangled state $\Phi_d$ of Schmidt rank $d$ by using arbitrary local operations and classical communication (LOCC). Namely, if the overall CPTP map implemented by the LOCC protocol is denoted $\Lambda:ABC_1\ldots C_m \rightarrow A'B'$, with $|A'|=|B'|=d$, we aim to find 
\[
  \Lambda(\psi_{ABC_1\ldots C_m}) \stackrel{!}{\approx} (\Phi_d)_{A'B'},
\]
where $\ket{\Phi_d}$ is the standard maximally entangled state of Schmidt rank $d$. The trace distance 
\[
  \delta := \frac12 \left\| \Lambda(\psi_{ABC_1\ldots C_m}) - (\Phi_d)_{A'B'} \right\|_1
\]
is called the \emph{error} of the protocol. 

It is worth pausing for the simplest case, $m=0$, so that $\psi_{AB}$ is already a pure state between $A$ and $B$. Then the objective is merely to concentrate the entanglement by LOCC into maximal entanglement, and we find the essentially optimal $\log d \geq H_{\min}^\delta(\psi_A)$ \cite{LoPopescu}. For $m>0$, consider any bipartition of the helpers by choosing a subset $I\subseteq[m]$ and its complement $I^c=[m]\setminus I$, and simulate any $(m+2)$-party LOCC protocol by a bipartite LOCC protocol between the systems $AC_I$ and $BC_{I^c}$. Thus, from the preceding entanglement concentration considerations, we get the upper bound
\[
  \log d \leq \min_{\emptyset\subseteq I\subseteq[m]} H_{\min}^\delta(\psi_{AC_I}).
\]
We can show that this bound is essentially achievable, up to an additive offset depending only on $\delta$ and $m$, and a technical condition. 

\begin{theorem}
\label{theorem:EoA}
Given the setting above, multi-party entanglement of assistance has an achievable rate $\log d$ with error $\delta \leq 4\cdot 3^{m/2}\sqrt{\epsilon}$ if
\begin{equation}\begin{split}
  \label{eq:EoA-oneshot}
  \log d &\leq \min_{I\subseteq[m]} H_{\min}^\epsilon(AC_I)_\psi + 2\log\epsilon, \\
  -2\log\epsilon &\leq \min_{\emptyset\neq I\subseteq[m]} H_{\min}^\epsilon(C_I)_\psi.
\end{split}\end{equation}
\end{theorem}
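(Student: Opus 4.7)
The plan is to prove the theorem by constructing an LOCC protocol based on local random unitaries and destructive measurements, and analyzing it using Theorem \ref{theorem:with_smoothing} together with Uhlmann's theorem. The protocol I propose is the following: each helper $C_i$ applies a random unitary $U_i$ drawn from a unitary 2-design on its system, performs a destructive projective measurement in the computational basis, and sends the outcome $x_i$ classically to Alice and Bob; Alice applies a random unitary $U_A$ on $A$ and partial-traces a subsystem $\bar{A}$ of dimension $|A|/d$, keeping $\tilde{A}$ of dimension $d$; finally, armed with $\vec{x}$, Alice and Bob apply local unitaries that rotate their residual bipartite state into the canonical $|\Phi_d\rangle$ shared between $\tilde{A}$ and a matching $d$-dimensional subspace of $B$.

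For the analysis, I apply Theorem \ref{theorem:with_smoothing} to $|\psi\rangle_{ABC_{[m]}}$ with the $k = m+1$ systems $\{A, C_1, \ldots, C_m\}$ undergoing random unitaries, reference $E = B$, and CPTP map $\mathcal{T}$ consisting of the partial trace $A \to \bar{A}$ combined with the destructive measurements on each $C_i$. A direct computation yields the Choi-matrix entropies $\widetilde{H}_2(A_I | B'_I)_\tau = 2\log d - \log|A|$ for subsets $I$ containing $A$ (coming from the Choi operator $(\mathbb{1}_{\tilde A}/d) \otimes \Phi_{\bar A \bar A'}$), and $\widetilde{H}_2(A_I | B'_I)_\tau = 0$ for $I \subseteq [m]$ (since the Choi of the destructive measurement is classically correlated). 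Then, choosing the conditioning state $\zeta^I_B$ appropriately, using the identity $\widetilde{H}_2(\cdot | B)_{\psi|\mathbb{1}/|B|} = \widetilde{H}_2(\cdot B)_\psi - \log|B|$, the pure-state duality $H_{\min}(A_I B)_\psi = H_{\min}(A_{I^c})_\psi$, and the bound $\widetilde{H}_2 \geq H_{\min}$ from Remark \ref{rem:2-vs-min}, the requirement that each decoupling-error term be $\leq \delta/2^{m+1}$ reduces, for subsets $I = \{A\} \cup J$ containing Alice, to $\log d \leq H_{\min}^\epsilon(A C_J)_\psi + 2\log\epsilon$, and for helpers-only subsets $\emptyset \neq I \subseteq [m]$, to $-2\log\epsilon \leq H_{\min}^\epsilon(C_I)_\psi$. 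Together these match the two hypotheses of the statement.

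Once decoupling holds, the state $\sigma_{\bar{A} X_{[m]} B}$ is close to the product $\tau_{\bar{A} X_{[m]}} \otimes \psi_B$, so Uhlmann's theorem (Theorem \ref{theorem:Uhlmann}) applied to the purification on the ambient space $\tilde{A} \otimes C'_{[m]}$ (where $C'_i$ are the coherent-copy registers attached to the destructive measurements) produces an approximation by a purification of the product; the latter can be arranged so that $|\Phi_d\rangle$ is placed between a $d$-dimensional subsystem of $\tilde{A}$ and a matching $d$-dimensional subsystem of $B$, with junk on the remainder, which Alice and Bob extract via their $\vec{x}$-dependent local isometries. The final error bound $\delta \leq 4 \cdot 3^{m/2}\sqrt{\epsilon}$ assembles by summing the $2^{|I|+1}\epsilon$ contributions over $\emptyset \neq I \subseteq \{A\} \cup [m]$ (yielding the factor $3^{m+1}$) and converting trace distance to purified distance via the Fuchs--van de Graaf inequality \eqref{eq:FvdG_relation}. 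The main obstacle I anticipate is the entropy-translation step: precisely identifying the correct $\zeta^I_B$ and wielding the pure-state duality to recover unconditional min-entropies of the form $H_{\min}^\epsilon(A C_I)$ and $H_{\min}^\epsilon(C_I)$, while simultaneously ensuring that the Uhlmann purification of $\mathbb{1}/|\bar{A}| \otimes \psi_B$ really does fit inside $\tilde A$ with the required $d$-dimensional maximally entangled block — a consistency check that underlies the second technical hypothesis in the statement.
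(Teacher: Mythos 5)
Your proposal deviates from the paper's proof at two points where it actually breaks. First, the decoupling criterion you set up is not the right one for this task. You take the reference to be $E=B$ and let Alice's channel be a partial trace down to a smaller system, aiming at $\sigma_{\bar{A}X_{[m]}B}\approx \frac{\1_{\bar{A}}}{|\bar{A}|}\ox\frac{\1_{X_{[m]}}}{|X_{[m]}|}\ox\psi_B$. Even granting this, conditioned on the broadcast outcomes the state held by Alice (both pieces of $A$) and Bob ($B$) is a \emph{purification} of approximately $\frac{\1_{\bar{A}}}{|\bar{A}|}\ox\psi_B$; by Uhlmann it is close to $\Phi_{\bar{A}a_1}\ox\ket{\psi'}_{a_2B}$ with $a_1a_2$ embedded in Alice's retained register, so the Alice--Bob Schmidt spectrum is that of $\psi_B$ --- not a rank-$d$ maximally entangled state. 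Moreover, the Uhlmann isometry you invoke acts on the purifying registers $C'_{[m]}$ of the destructive measurements and on the discarded part of $A$, which no party holds, so it cannot be implemented within LOCC. The paper avoids all of this by taking the reference trivial (no $E$ at all) and, crucially, using a random \emph{rank-$d$ projective measurement} (pinching) on $A$ rather than a partial trace: since Alice's and the helpers' outcomes are all broadcast, the conditional Alice--Bob state is pure, the decoupling statement says Alice's conditional marginal is close to $P_{j^{(0)}}/d$, and therefore its purification necessarily sits in $B$; Uhlmann then yields a rank-$d$ maximally entangled state which Alice and Bob rotate into place with $\vec{j}$-dependent local isometries.

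Second, the entropy bookkeeping does not reduce to the stated conditions. With $E=B$, Theorem \ref{theorem:with_smoothing} produces exponents containing $H_{\min}^\epsilon(A_I|B)_\psi$, and no choice of $\zeta_B^I$ or use of pure-state duality converts these into the \emph{unconditional} entropies of the theorem: taking $\zeta_B=\1_B/|B|$ gives $H_{\min}(A_IB)_\psi-\log|B|$, and the optimal $\zeta_B$ gives $-H_{\max}(A_I|\text{rest of the purification})_\psi$; neither equals $H_{\min}(AC_I)_\psi$, and for helpers-only subsets $H_{\min}(C_I|B)_\psi$ can differ drastically from $H_{\min}(C_I)_\psi$. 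Likewise the Choi entropy $2\log d-\log|A|$ of your partial-trace channel injects a spurious $\log|A|$ into the rate condition. In the paper the conditions fall out exactly because the rank-$d$ pinching on $A$ has $\widetilde{H}_2(A'|A)_{\tau_0}=-\log d$, the complete basis measurements on the $C_i$ have Choi entropy $0$, and with trivial reference the state entropies are precisely $H_{\min}^\epsilon(AC_I)_\psi$ and $H_{\min}^\epsilon(C_I)_\psi$, giving exponents $-\frac12\left(H_{\min}^\epsilon(AC_I)_\psi-\log d\right)$ and $-\frac12 H_{\min}^\epsilon(C_I)_\psi$ and hence exactly the conditions \eqref{eq:EoA-oneshot}. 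Your error bookkeeping (the $3^{m+1}\epsilon$ count and the Fuchs--van de Graaf conversion) does match the paper, but it only becomes relevant once the protocol and the entropic reduction are fixed as above.
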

\begin{corollary}
\label{corollary:EoA}
In the i.i.d.~limit of $n\rightarrow\infty$, the maximum asymptotic entanglement rate $R=\frac{1}{n}\log d$ from $\psi^{\ox n}$ is
\begin{equation}
  \label{eq:EoA-iid}
  R = \min_{I\subseteq[m]} S(\psi_{AC_I}),
\end{equation}
where $S(\rho)$ is the von Neumann entropy of the state $\rho$.
\end{corollary}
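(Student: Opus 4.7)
The plan is to apply the multi-party decoupling Theorem~\ref{theorem:with_smoothing} (together with Corollary~\ref{corollary:D_I-main}) to the reduced state $\psi_{AC_{[m]}} = \Tr_B \psi$, taking the environment $E$ in the decoupling theorem to be trivial; Bob's system $B$ re-enters only at the end, through Uhlmann's theorem (Theorem~\ref{theorem:Uhlmann}). The one-shot LOCC protocol I envisage is: Alice applies a Haar-random (or $2$-design) unitary $V_A$ on $A$, splits $A = A' \ox \tilde A$ with $|A'| = d$, keeps $A'$, and performs a complete projective measurement on $\tilde A$ with outcome $y_A$; each helper $i$ applies an independent random unitary $U_i$ on $C_i$ and a complete projective measurement with outcome $x_i$; all outcomes $(y_A, x_{[m]})$ are broadcast via LOCC, and Bob applies an outcome-dependent isometry on $B$ that rotates the induced $d$-dimensional subspace $B^{(y_A,x_{[m]})} \subseteq B$ to a fixed subsystem $B'$, producing $\Phi_d$ on $A' \ox B'$.

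With trivial $E$, Theorem~\ref{theorem:with_smoothing} (combined with Remark~\ref{rem:2-vs-min} to replace $\widetilde H_2^\epsilon$ by $H_{\min}^\epsilon$, and Corollary~\ref{corollary:D_I-main} to obtain $D_J \leq 1$ for the tensor-product local channel) bounds $\EE \| \sigma_{A' Y_A X_{[m]}} - \tfrac{\mathds{1}_{A'}}{d} \ox \tau_{Y_A X_{[m]}}\|_1$ by a sum over non-empty $J \subseteq \{A,1,\ldots,m\}$ of terms of the shape $2^{|J|+1}\epsilon + 2^{-\frac12 H_{\min}^\epsilon(A_J)_\psi - \frac12 \widetilde H_2(A_J\mid \text{out}_J)_\tau}$. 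A direct Choi computation yields: for $J = \{A\}\cup I$ with $I \subseteq [m]$, $\widetilde H_2(AC_I \mid A' Y_A X_I)_\tau = -\log d$ (the maximally entangled Choi piece $\Phi_{A'_A A'}$ on Alice's retained subsystem contributes $-\log d$, while the classically correlated $\tilde A$--$Y_A$ and $C_i$--$X_i$ Choi pieces each contribute $0$); for $J = I \subseteq [m]$ not containing $A$, $\widetilde H_2(C_I \mid X_I)_\tau = 0$. Consequently the $J \ni A$ terms are controlled precisely by the condition $\log d \leq H_{\min}^\epsilon(AC_I)_\psi + 2\log\epsilon$, and the $J \subseteq [m]$, $J \neq \emptyset$ terms precisely by $H_{\min}^\epsilon(C_I)_\psi \geq -2\log\epsilon$---exactly the two conditions in the statement of the theorem---with the $2^{m+1}-1$ subset contributions summing to an overall error $\delta \lesssim 3^{m/2}\sqrt{\epsilon}$.

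The passage from this decoupling of the $A' Y_A X_{[m]}$-marginal to the actual distillation guarantee is by Uhlmann's theorem: the protocol acts only on $AC_{[m]}$, so $\psi_{ABC_{[m]}}$ remains a purification of the evolving reduced state on $AC_{[m]}$; the canonical purification of the ideal state $\tfrac{\mathds{1}_{A'}}{d} \ox \tau_{Y_A X_{[m]}}$ on the Bob-system side is exactly $\Phi_d$ on $A' \ox B'$ tensored with a classical record of the outcomes and a pure residual state, and the Fuchs--van de Graaf inequalities~\eqref{eq:FvdG_relation} lift the marginal trace-distance bound to a purified-distance bound between the two purifications, modulo an isometry on $B$ that Bob implements based on the broadcast classical outcomes.

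For Corollary~\ref{corollary:EoA}, substitute $\psi \leftarrow \psi^{\ox n}$ and apply the AEP (Theorem~\ref{theorem:AEP}): the normalised smooth min-entropies $\frac1n H_{\min}^\epsilon((AC_I)^n)_{\psi^{\ox n}}$ converge to $S(AC_I)_\psi$ as $n \to \infty$ and $\epsilon \to 0$, while the auxiliary condition $H_{\min}^\epsilon(C_I^n)_{\psi^{\ox n}} \geq -2\log\epsilon$ is automatically satisfied for all sufficiently large $n$ as soon as each helper-marginal $\psi_{C_I}$ has positive von Neumann entropy; combined with the bipartite LOCC converse $R \leq S(AC_I)_\psi$ recorded immediately before the theorem statement, this yields $R = \min_{I \subseteq [m]} S(AC_I)_\psi$. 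The principal technical obstacle will be the explicit Choi R\'enyi-$2$ computation and the careful bookkeeping of all $J$-subset contributions into the clean form of the theorem; once those are in place, Uhlmann's theorem carries everything else.
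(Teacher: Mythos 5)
Your proposal is correct and follows essentially the same route as the paper: local random unitaries followed by a rank-$d$ projective measurement on $A$ and complete basis measurements on the $C_i$, Theorem~\ref{theorem:with_smoothing} (with Corollary~\ref{corollary:D_I-main} and trivial $E$) with the Choi collision entropies $-\log d$ and $0$ giving exactly the two conditions of Theorem~\ref{theorem:EoA}, then Uhlmann's theorem plus the Fuchs--van de Graaf inequalities, and finally the AEP and the bipartite concentration converse for the i.i.d.\ rate. The only detail worth adding, which the paper supplies, is the degenerate case $S(\psi_{C_I})=0$ (where your positivity assumption fails): there $\psi_{C_I}$ is pure, so the state factorizes and those helpers can be discarded without loss, after which your argument applies verbatim.
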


\begin{proof}
We prove both Theorem \ref{theorem:EoA} and Corollary \ref{corollary:EoA}.
To start with the former, our strategy will consist in making a local random complete basis measurement onto each $C_i$, and a random projective measurement of rank-$d$ projectors onto $A$; after that, $B$ will only have to perform a unitary. Let us fix orthonormal computational bases $\left\{\ket{j^{(i)}}\right\}$ for each $C_i$ with $i=1,\ldots,m$ and define a complete measurement in these bases as $\cT_i(\gamma) := \sum_{j^{(i)}=1}^{|C_i|} \proj{j^{(i)}} \gamma \proj{j^{(i)}}$. We also fix rank-$d$ projectors $P_{j^{(0)}}$ (we may assume w.l.o.g.~that $d$ divides the dimension $|A|$ by trivially enlarging $A$ if necessary), then $\cT_0(\alpha) = \sum_{j^{(0)}=1}^{|A|/d} P_{j^{(0)}} \alpha P_{j^{(0)}}$ is defined as the projective measurement of rank $d$ on $A$. Using that the R\'enyi entropies of the Choi states are $\widetilde{H}_2(C_i'|C_i)_{\tau_i} = 0$ ($\forall i\in[m]$) and $\widetilde{H}_2(A'|A)_{\tau_0} = - \log d$ \cite{Dupuis-et-al:decouple}, Theorem \ref{theorem:with_smoothing} with Corollary \ref{corollary:D_I-main} shows that there exist unitaries $U_0$ on $A$ and $U_i$ on $C_i$ ($i\in[m]$), found with high probability by sampling from a $2$-design, such that
\[
  \sigma_{AC_1\ldots C_m} = \left(\cT_0\circ\cU_0 \otimes \cT_1\circ\cU_1 \otimes \cdots \otimes \cT_m\circ\cU_m \right)\psi_{AC_1\ldots C_m}
\]
satisfies
\begin{equation}\begin{split}
  \label{eq:EoA-decoupling-bound}
  \frac12 &\left\| \sigma_{AC_1\ldots C_m} - \frac{\1_{A}}{|A|} \otimes \frac{\1_{C_1}}{|C_1|} \otimes \cdots \otimes \frac{\1_{C_m}}{|C_m|} \right\|_1 \\
  &\phantom{======}
   \leq 3^{m+1}\epsilon + \frac12 \sum_{\emptyset\subseteq I\subseteq[m]} 2^{-\frac12 (H_{\min}^\epsilon(AC_I)_\psi - \log d)} + \frac12 \sum_{\emptyset\neq I\subseteq[m]} 2^{-\frac12 H_{\min}^\epsilon(C_I)_\psi},
\end{split}\end{equation}
choosing all $\epsilon_I=\epsilon$ equal. The right hand side of this last bound is $\leq \eta := (3^{m+1}+2^m)\epsilon$ if the following conditions are satisfied: 
\begin{equation}\begin{split}
  \label{eq:EoA-oneshot_conditions}
  \log d &\leq \min_{I\subseteq[m]} H_{\min}^\epsilon(AC_I)_\psi + 2\log\epsilon, \\
  -2\log\epsilon &\leq \min_{\emptyset\neq I\subseteq[m]} H_{\min}^\epsilon(C_I)_\psi.
\end{split}\end{equation}
Let $\vec{j} = j^{(0)}j^{(1)}\ldots j^{(m)}$ be a set of possible measurement outcomes corresponding to the general POVM element $\Lambda_{\vec{j}}=P_{j^{(0)}}\ox\proj{j^{(1)}}\ox\cdots\ox\proj{j^{(m)}}$. The probability of getting this specific outcomes when measuring $\sigma_{AC_1\ldots C_m}$ is
\[\begin{split}
p(\vec{j})=\Tr\sigma_{AC_{[m]}}\Lambda_{\vec{j}} = \Tr \left[(U_0\ox U_1\ox\cdots\ox U_m)\psi_{AC_1\ldots C_m} (U_0\ox U_1\ox\cdots\ox U_m)^\dagger\Lambda_{\vec{j}}\right],
\end{split}\]
and the probability of obtaining the outcomes $\vec{j}$ after measuring the maximally mixed is given by
\[\begin{split}
  p'(\vec{j})=\Tr \left(\frac{\1_{A}}{|A|} \otimes \frac{\1_{C_1}}{|C_1|} \otimes \cdots \otimes \frac{\1_{C_m}}{|C_m|}\right)\Lambda_{\vec{j}}
  =\Tr\frac{P_{j^{(0)}}}{|A||C_1|\cdots|C_m|}
  =\frac{d}{|A||C_1|\cdots|C_m|}.
\end{split}\]
We can bound the total variational distance between the two probability distributions using Lemma \ref{lemma:tr_dist}:
\[\begin{split}
  \frac12 \sum_{\vec{j}}\abs{p(\vec{j})-p'(\vec{j})}
  &=\frac12\sum_{\vec{j}}\abs{\Tr\left(\sigma_{AC_1\ldots C_m}-\frac{\1_{A}}{|A|} \otimes \frac{\1_{C_1}}{|C_1|} \otimes \cdots \otimes \frac{\1_{C_m}}{|C_m|}\right)\Lambda_{\vec{j}}}\\
  &\leq \frac12 \left\| \sigma_{AC_1\ldots C_m} - \frac{\1_{A}}{|A|} \otimes \frac{\1_{C_1}}{|C_1|} \otimes \cdots \otimes \frac{\1_{C_m}}{|C_m|} \right\|_1 
  \leq\eta.
\end{split}\]
As $\sigma_{AC_{[m]}}$ and the maximally mixed state in the above trace distance are both direct sums over operators in the orthogonal subspaces given by the support of $\Lambda_{\vec{j}}$, we can rewrite the trace distance in question as
\[
  \frac12 \left\| \sigma_{AC_1\ldots C_m} - \frac{\1_{A}}{|A|} \otimes \frac{\1_{C_1}}{|C_1|} \otimes \cdots \otimes \frac{\1_{C_m}}{|C_m|} \right\|_1 
  = \sum_{\vec{j}} \frac12 \left\| \Lambda_{\vec{j}}\sigma_{AC_{[m]}}\Lambda_{\vec{j}} - p'(\vec{j})\Lambda_{\vec{j}} \right\|_1\leq\eta.
\]
Using the triangle inequality $\|\rho-\sigma\|_1\leq\|\rho-\tau\|_1+\|\tau-\sigma\|_1$ and the bound on the total variation distance between $p$ and $p'$ we can thus obtain
\begin{equation}\begin{split}
  \label{eq:EoA-norm-bound}
  \frac12 \sum_{\vec{j}} p(\vec{j}) &\left\| \frac{1}{p(\vec{j})} \left(\! P_{j^{(0)}}\ox\bra{j^{(1)}}\cdots\bra{j^{(m)}}\right) (U_0\ox U_1\ox\cdots\ox U_m) \right. \\
  &\phantom{===}\left.
   \psi_{AC_1\ldots C_m} (U_0\ox U_1\ox\cdots\ox U_m)^\dagger \left(\! P_{j^{(0)}}\ox\ket{j^{(1)}}\cdots\ket{j^{(m)}}\right) - \frac{P_{j^{(0)}}}{d} \right\|_1 \leq 2\eta.
\end{split}\end{equation}
Let us now introduce the unit vectors 
\[
  \ket{\psi(\vec{j})}_{AB} = \frac{1}{\sqrt{p(\vec{j})}} \left(\! P_{j^{(0)}}\ox\1_B\ox\bra{j^{(1)}}\cdots\bra{j^{(m)}}\right) (U_0\ox\1_B\ox U_1\ox\cdots\ox U_m) \ket{\psi}_{ABC_1\ldots C_m}, 
\]
so that we can define $\eta(\vec{j}) = \frac12\left\| \Tr_B \psi(\vec{j})_{AB} - \frac{P_{j^{(0)}}}{d} \right\|_1$, such that $\sum_{\vec{j}}  p(\vec{j})\eta(\vec{j}) \leq 2\eta$. We have a purification $\psi(\vec{j})_{AB}$, then by Uhlmann's Theorem \ref{theorem:Uhlmann}, there must exist a purification $\phi$ of the projector $\frac{P_j^{(0)}}{d}$ such that the purified distance is conserved. This is a maximally mixed state on its support, therefore any purification will be a maximally entangled state of rank $d$ (the dimension of the support) that we can write as $\ket{\Phi_d(\vec{j})}_{AB} = \left(U(\vec{j})\ox V(\vec{j})\right)\ket{\Phi_d}_{A'B'}$, where $U(\vec{j})$ and $V(\vec{j})$ are some isometries applied to the canonical maximally mixed state $\ket{\Phi_d}_{A'B'}$. Now, applying the Fuchs-van de Graaf inequalities \eqref{eq:FvdG_relation}, we find
\[
  \frac12 \left\|\psi(\vec{j})-\Phi_d(\vec{j})\right\|_1
   \leq P\left(\psi(\vec{j}),\Phi_d(\vec{j})\right)
   = P\left(\Tr_B\psi(\vec{j}),\frac{P_j^{(0)}}{d}\right)
   \leq\sqrt{\eta(\vec{j}) \left(2-\eta(\vec{j})\right)}. 
\]
With these elements and facts, we can finally describe the LOCC protocol to concentrate the entanglement in the hands of Alice and Bob: parties $A$ and the $C_i$ apply the local unitaries $U_0$ and $U_i$, followed by the projective measurements $(P_{j^{(0)}})$ and $\left(\proj{j^{(i)}}\right)$, respectively (in the case of the $C_i$ they are destructive). The measurement outcomes are broadcast to $A$ and $B$ who apply the (partial) isometries $U(\vec{j})^\dagger$ and $V(\vec{j})^\dagger$, respectively (see figure \ref{fig:EoAdiagram}). By the triangle inequality and the concavity of the square root, the resulting CPTP map $\Lambda:ABC_1\ldots C_m \rightarrow A'B'$ satisfies
\[
  \frac12 \left\| \Lambda(\psi_{ABC_1\ldots C_m}) - (\Phi_d)_{A'B'} \right\|_1 
    \leq \sqrt{2\eta(2-2\eta)} 
    \leq 2\sqrt{\eta}
    \leq 4\cdot 3^{m/2} \sqrt{\epsilon}.
\]
The achieved one-shot rate, always assuming that the second condition in \eqref{eq:EoA-oneshot_conditions} is fulfilled, is therefore $\log d \geq \min_{I\subseteq[m]} H_{\min}^\epsilon(AC_I)_\psi + 2\log\epsilon$. This concludes the proof of the theorem.

\begin{figure}[ht]
    \centering
    \includegraphics[scale=0.42]{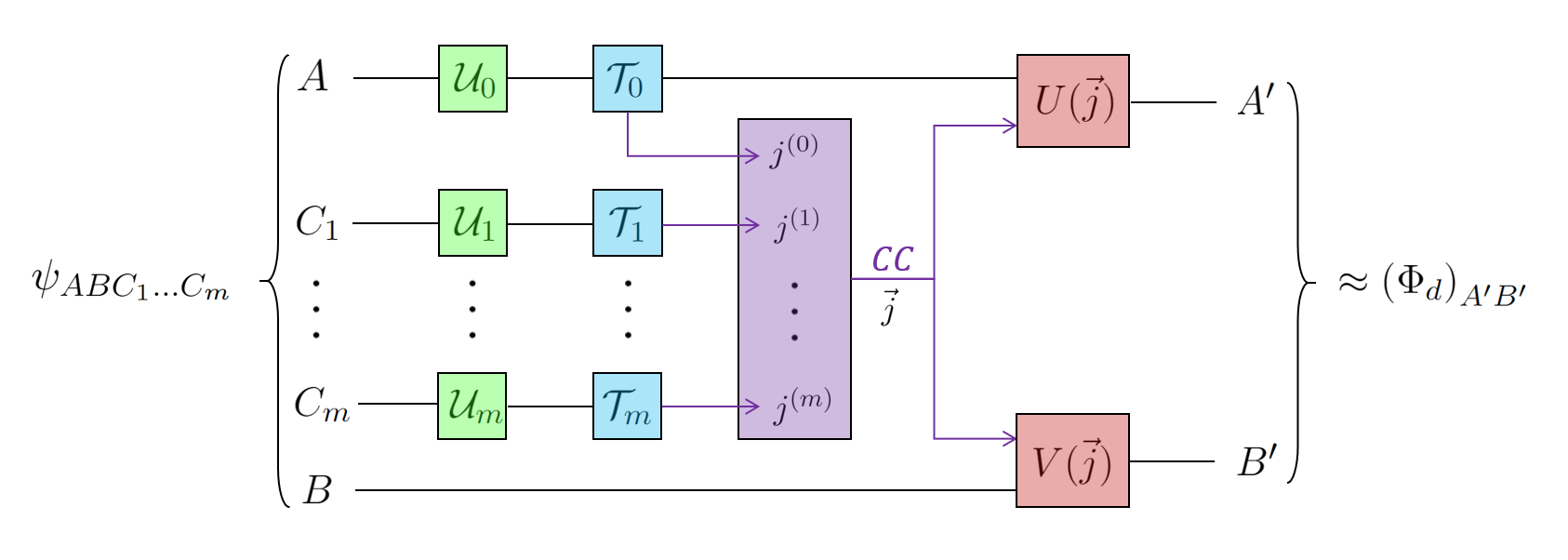}
    \caption{Diagram of the LOCC protocol that maximally concentrates the entanglement of an initial state $\psi_{ABC_1\dots C_m}$ onto Alice's and Bob's subspaces of dimension $|A'|=|B'|=d$.}
    \label{fig:EoAdiagram}
\end{figure}

To prove the corollary, referring to the i.i.d.~asymptotic limit of $n\rightarrow\infty$ copies of $\psi_{ABC_1\ldots C_m}$ and vanishing error $\delta\rightarrow 0$, the AEP applies. This means that $H_{\min}^\epsilon(A^nC_I^n)_{\psi^{\ox n}} \sim n S(\psi_{AC_I})$ and $H_{\min}^\epsilon(C_I^n)_{\psi^{\ox n}} \sim n S(\psi_{C_I})$. By the above comment, we may assume w.l.o.g. that all these von Neumann entropies are positive; for otherwise if some $S(\psi_{C_I})=0$ we can discard the corresponding parties, or if $S(\psi_{AC_I})=0$ then $A$ and $B$ are not entangled and there is nothing to distill by LOCC. In the positive case, all exponential terms in the sum for $\eta$ can be made exponentially or just sub-exponentially small in $n$, and defining the asymptotic rate via $\log d = nR$, we achieve its optimal value $R = \min_{I\subseteq[m]} S(\psi_{AC_I})$.
For the optimality, the necessity of the inequalities $R\leq S(AC_I)$ can be argued by noting that any LOCC protocol between the $m+2$ parties is at the same time an LOCC protocol for the bipartition $AC_I:BC_{[m]\setminus I}$, and the optimal rate for bipartite entanglement concentration is the reduced state entropy \cite{BBPS-e-concentration}.
\end{proof}

Corollary \ref{corollary:EoA} is the result from \cite{HOW:merging-CMP}, proved there by a much more complicated, iterative protocol that relied on the tensor product structure of $\psi^{\otimes n}$. The present procedure was previously analyzed by Dutil \cite[Ch.~5]{Dutil:PhD} and shown to work assuming the simultaneous smoothing conjecture in the i.i.d.~case. Here finally we achieve the same without any unproven conjectures. Note that in particular, no time sharing between different protocols is necessary.

The first of the conditions \eqref{eq:EoA-oneshot} is essentially necessary, making the achieved rate essentially optimal. The second condition looks like a technical artifact of the proof since we require that all the local measurement outcomes of the helpers $C_i$ are close to being uniformly distributed. However, this is not necessary for the objective of entanglement of assistance, but at the same time it becomes difficult to achieve by random basis measurements if some reduced state $\psi_{C_I}$ has rather small min-entropy. We can see that this is benign when $\psi_{C_I}$ is actually pure, as then our state factorizes, $\psi_{ABC_1\ldots C_m} = \psi_{ABC_{I^c}} \ox \psi_{C_I}$, and we can simply leave the parties $C_I$ out of the LOCC protocol without any loss. We have to leave the general case as an open question. In any case, we can observe that by providing a small amount of EPR states between any pair of players (in fact, the pairs $B$ and $C_j$ are sufficient), we can always ensure that the entropies $H_{\min}^\epsilon(C_I)$ are sufficiently lower bounded.

\begin{remark}
Generalising \cite{holography-networks}, Cheng \emph{et al.}~\cite{random-tensors} have considered a model of random multipartite pure state defined by starting with a multipartite pure states on a larger number or systems (original and auxiliary ones) and subjecting the auxiliary systems to local random measurements. 
This is more general than our objective of obtaining bipartite maximal entanglement, but in the bulk of the paper \cite{random-tensors} also more specific because there, the main interest is in initial states given by network of partially entangled bipartite pure states. Interestingly, to describe the resulting random states, the authors of \cite{random-tensors} manage to resolve the simultaneous smoothing in that special case. In the case of an arbitrary state, however, perhaps our current approach can help to gain insights into the high-probability properties of the resulting random states.
\end{remark}

\medskip
Looking at the proof of Theorem \ref{theorem:EoA}, we see that the attainability is essentially the same for an initial mixed state, as in the following theorem.

\begin{theorem}
\label{theorem:EoA-mixed}
Given a mixed state $\rho_{ABC_1\ldots C_m} = \Tr_E \proj{\psi}_{ABC_1\ldots C_mE}$ in the above setting, 
multi-party assisted distillable entanglement has an achievable rate $\log d$ with error $\delta \leq 4\cdot 3^{m/2}\sqrt{\epsilon}$ if
\begin{equation}\begin{split}
  \label{eq:EoA-mixed-oneshot}
  \log d &\leq \min_{I\subseteq[m]} -H_{\max}^\epsilon(AC_I|BC_{I^c})_\rho + 2\log\epsilon, \\
  -2\log\epsilon &\leq \min_{\emptyset\neq I\subseteq[m]} -H_{\max}^\epsilon(C_I|ABC_{I^c})_\rho.
\end{split}\end{equation}
\end{theorem}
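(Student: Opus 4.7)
The plan is to reduce the mixed-state problem to the pure-state analysis of Theorem \ref{theorem:EoA} by introducing a purification of $\rho$. I would pick a purification $\psi_{ABC_{[m]}E}$ of $\rho_{ABC_{[m]}}$, think of $E$ as the inaccessible environment of the mixed state, and implement exactly the LOCC protocol of Theorem \ref{theorem:EoA}: Alice performs a random unitary $U_0$ on $A$ followed by the rank-$d$ projective measurement $(P_{j^{(0)}})$, each helper $C_i$ performs a random unitary $U_i$ on $C_i$ and a computational-basis measurement $(\proj{j^{(i)}})$, the outcomes $\vec{j}$ are broadcast, and $A$, $B$ finish with conditional isometries $U(\vec{j})^\dagger$, $V(\vec{j})^\dagger$. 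The whole difference with the pure-state case is that the decoupling bound is routed through the purifying system $E$ instead of $B$.

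The core technical step is to apply Theorem \ref{theorem:with_smoothing} (with Corollary \ref{corollary:D_I-main}) to the reduced state $\psi_{AC_{[m]}E}$, treating $E$ as the reference system. This yields random unitaries such that
\[
  \sigma_{AC_{[m]}E} := \bigl(\cT_0\!\circ\cU_0\otimes\cdots\otimes\cT_m\!\circ\cU_m\otimes\id_E\bigr)\psi_{AC_{[m]}E}
\]
is close in trace norm to $\tfrac{\1_A}{|A|}\otimes\bigotimes_i\tfrac{\1_{C_i}}{|C_i|}\otimes\psi_E$, with a bound of exactly the form of Equation~\eqref{eq:EoA-decoupling-bound}, but with $H_{\min}^\epsilon(AC_I)_\psi$ and $H_{\min}^\epsilon(C_I)_\psi$ replaced by their $E$-conditional analogues $H_{\min}^\epsilon(AC_I|E)_\psi$ and $H_{\min}^\epsilon(C_I|E)_\psi$; the Choi-state entropies $\widetilde{H}_2(A'|A)_{\tau_0}=-\log d$ and $\widetilde{H}_2(C_i'|C_i)_{\tau_i}=0$ are unchanged. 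I then invoke the duality of smoothed min- and max-entropies on the pure state $\psi_{ABC_{[m]}E}$: since the complement of $AC_I$ in the purification is $BC_{I^c}E$ and of $C_I$ is $ABC_{I^c}E$, one gets
\[
  H_{\min}^\epsilon(AC_I|E)_\psi=-H_{\max}^\epsilon(AC_I|BC_{I^c})_\rho, \qquad H_{\min}^\epsilon(C_I|E)_\psi=-H_{\max}^\epsilon(C_I|ABC_{I^c})_\rho,
\]
which converts the sufficient conditions of the decoupling bound directly into the two inequalities claimed in Equation~\eqref{eq:EoA-mixed-oneshot}.

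The rest of the argument then runs in lockstep with the pure-state proof. Conditioned on any outcome $\vec{j}$, the post-measurement state on $ABE$ is pure, and the decoupling shows that its reduced state on $AE$ is close to $\tfrac{P_{j^{(0)}}}{d}\otimes\omega_E(\vec{j})$ for some $\omega_E(\vec{j})$. Uhlmann's Theorem \ref{theorem:Uhlmann}, applied with $B$ as the purifying system, then provides a partial isometry $V(\vec{j})^\dagger$ that brings the global state close to a purification of the product form, namely $\ket{\Phi_d}_{AB'}\otimes\ket{\omega(\vec{j})}_{B''E}$; tracing out $E$ and propagating the error via the Fuchs--van de Graaf inequalities \eqref{eq:FvdG_relation} leaves $\Phi_d$ on $AB'$ up to the claimed error $\leq 4\cdot 3^{m/2}\sqrt{\epsilon}$, just as in the pure case. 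The main obstacle I anticipate is purely bookkeeping in the duality step — verifying that both families of conditions come out with the correct complement systems — rather than any new technical ingredient beyond what Theorem \ref{theorem:EoA} already required.
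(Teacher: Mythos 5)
Your proposal is correct and follows essentially the same route as the paper: run the pure-state protocol on a purification, apply Theorem \ref{theorem:with_smoothing} with $E$ as the reference system to obtain conditions in terms of $H_{\min}^\epsilon(AC_I|E)_\psi$ and $H_{\min}^\epsilon(C_I|E)_\psi$, convert these into the stated max-entropy conditions by min/max duality, and conclude with Uhlmann's theorem (Bob absorbing the purifying register of $\psi_E$) and the Fuchs--van de Graaf inequalities. The only slight imprecision is that the decoupling bound actually gives closeness to the fixed marginal $\psi_E$ rather than some outcome-dependent $\omega_E(\vec{j})$ — the stronger statement the paper uses — though your weaker formulation still suffices for extracting $\Phi_d$ between $A$ and $B$.
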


\begin{proof}
We trace the proof of Theorem \ref{theorem:EoA}, indicating only the necessary changes. To start, we use the same random unitaries $U_j$ followed by the same $\cT_j$. Then Equation \eqref{eq:EoA-decoupling-bound} is replaced by 
\begin{equation}\begin{split}
  \label{eq:EoA-mixed-decoupling-bound}
  \frac12 &\left\| \sigma_{AC_1\ldots C_mE} - \frac{\1_{A}}{|A|} \otimes \frac{\1_{C_1}}{|C_1|} \otimes \cdots \otimes \frac{\1_{C_m}}{|C_m|} \otimes \psi_E \right\|_1 \\
  &\phantom{======}
   \leq 3^{m+1}\epsilon + \frac12 \sum_{\emptyset\subseteq I\subseteq[m]} 2^{-\frac12 (H_{\min}^\epsilon(AC_I|E)_\psi - \log d)} + \frac12 \sum_{\emptyset\neq I\subseteq[m]} 2^{-\frac12 H_{\min}^\epsilon(C_I|E)_\psi},
\end{split}\end{equation}
with respect to the purification $\psi$ of $\rho$ on the right hand side and for 
\[
  \sigma_{AC_1\ldots C_mE} = \left(\cT_0\circ\cU_0 \otimes \cT_1\circ\cU_1 \otimes \cdots \otimes \cT_m\circ\cU_m \otimes \id_E\right)\rho_{AC_1\ldots C_mE}.
\]

Then, if the conditions \eqref{eq:EoA-mixed-oneshot} are satisfied, the right hand side of Equation \eqref{eq:EoA-mixed-decoupling-bound} becomes $\leq \left(2^m+3^{m+1}\right)\epsilon=: \eta$, and we can continue as before until Equation \eqref{eq:EoA-norm-bound}, which is replaced by 
\begin{equation}\begin{split}
  \label{eq:EoA-mixed-norm-bound}
  \frac12 \sum_{\vec{j}} p(\vec{j}) &\left\| \frac{1}{p(\vec{j})} \left(\! P_{j^{(0)}}\ox\bra{j^{(1)}}\cdots\bra{j^{(m)}}\right) (U_0\ox U_1\ox\cdots\ox U_m) \right. \\
  &\phantom{===}\left.
   \psi_{AC_1\ldots C_mE} (U_0\ox U_1\ox\cdots\ox U_m)^\dagger \left(\! P_{j^{(0)}}\ox\ket{j^{(1)}}\cdots\ket{j^{(m)}}\right) - \frac{P_{j^{(0)}}}{d} \otimes \psi_E \right\|_1 \leq 2\eta.
\end{split}\end{equation}

The rest of the proof is almost unchanged, except that the purification $\ket{\phi}_{\widetilde{E}E}$ of $\psi_E$ comes in: with 
\[
  \ket{\psi(\vec{j})}_{ABE} \! = \!\frac{1}{\sqrt{p(\vec{j})}} \left(\! P_{j^{(0)}}\ox\1_{BE}\ox\bra{j^{(1)}}\cdots\bra{j^{(m)}}\right) (U_0\ox\1_{BE}\ox U_1\ox\cdots\ox U_m) \ket{\psi}_{ABEC_1\ldots C_m},
\]
we have $\eta(\vec{j}) = \frac12\left\| \Tr_B \psi(\vec{j})_{ABE} - \frac{P_{j^{(0)}}}{d}\otimes\psi_E \right\|_1$, such that $\sum_{\vec{j}}  p(\vec{j})\eta(\vec{j}) \leq 2\eta$. Once again, through Uhlmann's theorem, gives us isometries $U(\vec{j}):A'\hookrightarrow A$ and $V(\vec{j}):B'\widetilde{E} \hookrightarrow B$ such that 
\[
  \left(U(\vec{j})\ox V(\vec{j})\right)\left(\ket{\Phi_d}_{A'B'}\otimes\ket{\phi}_{\widetilde{E}E}\right) \approx \ket{\psi(\vec{j})},
\]
and the proof concludes exactly as before. Finally note that $H_{\min}^\epsilon(AC_I|E) = -H_{\max}^\epsilon(AC_I|BC_{I^c})$ by the duality relation between min- and max-entropies (cf.~\cite{Tomamichel:PhD}). 
\end{proof}

Unlike the pure-state case we do not have any clear statement of optimality of the rate achieved in this theorem. In fact, due to properties of the coherent information one should optimise the expressions in \eqref{eq:EoA-mixed-oneshot} over preprocessing channels $\cT_j:C_j\rightarrow C_j'$ ($j=1,\ldots,m$) and $cT_0:A\rightarrow A'$, and also consider swapping the roles of $A$ and $B$. Even then, we are limited by the specific protocol we are considering (rather than a general LOCC procedure); furthermore, in the i.i.d.~asymptotic limit regularisation might be required. Nevertheless we can state the following result. 

\begin{corollary}[{Cf.~Dutil~\cite[Thm.~5.4.4]{Dutil:PhD}}]
Given asymptotically many copies of a state $\rho_{ABC_{[m]}}$ ($n\gg 1$) and $o(n)$ EPR states between any pair of players, the following rate is achievable for EPR distillation between $A$ and $B$ by LOCC assisted by the players $C_j$: 
\begin{equation}
  \label{eq:EoA-mixed-rate}
  R = \sup_{\cT_j} \min_{I\subseteq[m]} I(A'C_I'\rangle BC_{I^c}')_\sigma \text{ s.t. } \forall I\subseteq[m]\ I(C_I'\rangle ABC_{I^c}')_\sigma \geq 0,
\end{equation}
where the supremum is over channels $\cT_j:C_j\rightarrow C_j'$ ($j=1,\ldots,m$) and $\cT_0:A\rightarrow A'$, and 
\[
  \phantom{============}
  \sigma_{A'BC_1'\ldots C_m'} = (\cT_0 \ox \id_B \ox \cT_1 \ox\cdots\ox \cT_m)\rho_{ABC_1\ldots C_m}.
  \phantom{============}
  \qedsymbol
\]
\end{corollary}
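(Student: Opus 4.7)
\begin{proofof}[sketch]
The plan is to combine the one-shot mixed-state protocol of Theorem \ref{theorem:EoA-mixed} with the asymptotic equipartition property, applied to a suitably preprocessed i.i.d.\ state. First, I would fix channels $\cT_j:C_j\to C_j'$ for $j=1,\ldots,m$ and $\cT_0:A\to A'$ realizing (or approximating) the sup in \eqref{eq:EoA-mixed-rate}, have Alice and each helper $C_j$ implement their channels by local Stinespring isometries followed by discarding the ancillary environment locally. Applied to the $n$ input copies of $\rho$, this produces the i.i.d.\ state $\sigma^{\otimes n}_{A'BC_1'\cdots C_m'}$ without any communication, and this is the state on which Theorem \ref{theorem:EoA-mixed} will be invoked.

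Next, I would pick a smoothing parameter $\epsilon_n \to 0$ sub-exponentially in $n$ (e.g.\ $\epsilon_n=1/n$) and apply Theorem \ref{theorem:EoA-mixed} to $\sigma^{\otimes n}$. This yields an LOCC protocol distilling $\log d = nR$ EPR pairs between $A$ and $B$ with error at most $4\cdot 3^{m/2}\sqrt{\epsilon_n}\to 0$, provided both conditions of \eqref{eq:EoA-mixed-oneshot} are satisfied. By the AEP (Theorem \ref{theorem:AEP}) together with min--max duality (Lemma \ref{lemma:Renyi-duality}),
\[
  \frac{1}{n}\bigl(-H_{\max}^{\epsilon_n}(A'^nC_I'^n\,|\,B^nC_{I^c}'^n)_{\sigma^{\otimes n}}\bigr) \longrightarrow I(A'C_I'\rangle BC_{I^c}')_\sigma,
\]
and analogously with conditioning on $A'BC_{I^c}'$. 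Hence the first condition of \eqref{eq:EoA-mixed-oneshot} is met asymptotically whenever $R \leq \min_I I(A'C_I'\rangle BC_{I^c}')_\sigma - o(1)$, matching the claimed rate; supremising over the preprocessing at the end recovers the sup in \eqref{eq:EoA-mixed-rate}.

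Third, the second condition of \eqref{eq:EoA-mixed-oneshot} demands $-H_{\max}^{\epsilon_n}(C_I'^n\,|\,A'^nB^nC_{I^c}'^n) \geq -2\log\epsilon_n = O(\log n)$ for every nonempty $I\subseteq[m]$. By AEP this quantity behaves like $n\,I(C_I'\rangle A'BC_{I^c}')_\sigma$, which is nonnegative by the standing constraint. When strictly positive, the inequality holds for all sufficiently large $n$ at no extra cost. In the boundary case where this coherent information vanishes for some $I$, I would consume a sublinear number of the pre-shared EPR pairs: each EPR pair placed between $B$ and some $C_j$ increases $-H_{\max}^\epsilon(C_I'^n\,|\,A'^nB^nC_{I^c}'^n)$ by one bit for every $I\ni j$, so $\Theta(\log n) = o(n)$ such pairs distributed among the $B$--$C_j$ links suffice to push all the conditional entropies above the $-2\log\epsilon_n$ threshold. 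A direct entropy accounting shows that these same pairs can only preserve or increase the main coherent informations $I(A'C_I'\rangle BC_{I^c}')$, so the asymptotic rate $R$ is unaffected.

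The main obstacle is precisely this boundary analysis: when some $I(C_I'\rangle A'BC_{I^c}')_\sigma$ sits exactly at zero, the second one-shot condition fails in the naive i.i.d.\ reduction, and the assumed $o(n)$ pre-shared EPR pairs between pairs of players are what bridge the $O(\log n)$ gap without perturbing the attained distillation rate. A minor secondary point is exchanging the supremum over $\cT_j$ with the asymptotic limit, which is handled by fixing an arbitrary feasible choice of channels close to the optimum and using continuity of the coherent information in the channel parameters.
\end{proofof}
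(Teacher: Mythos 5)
Your proposal follows essentially the same route the paper intends for this corollary (which it states without a separate proof): implement the preprocessing channels $\cT_0,\cT_1,\ldots,\cT_m$ locally with the channel environments joining the purifying system, invoke the one-shot Theorem \ref{theorem:EoA-mixed} on $\sigma^{\ox n}$, convert the conditions \eqref{eq:EoA-mixed-oneshot} into coherent informations via the AEP (Theorem \ref{theorem:AEP}) and min--max duality, and use the free sublinear entanglement on the $B$--$C_j$ links to rescue the second condition — exactly the device described after Theorem \ref{theorem:EoA} and used quantitatively in the proof of Theorem \ref{thm:compound-EoA}. The one slip is quantitative: in the boundary case $I(C_I'\rangle A'BC_{I^c}')_\sigma=0$ the AEP only guarantees $H_{\min}^{\epsilon_n}(C_I'^n|E^n)\geq -O\bigl(\sqrt{n\log(1/\epsilon_n)}\bigr)$, so $\Theta(\log n)$ ebits do not suffice to clear the $-2\log\epsilon_n$ threshold; you need on the order of $\sqrt{n}\,\mathrm{polylog}(n)$ ebits, which is still $o(n)$, so the rest of your argument (superadditivity ensuring the main min-entropies only increase, rate unaffected) goes through unchanged after this adjustment.
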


To demonstrate a case where the one-shot result are relevant, we consider the problem of i.i.d.~entanglement of assistance when the source is only partially known, meaning $\psi \in \cS \subseteq \cS(ABC_{[m]})$, and we would like to design protocols as above for every $n$ that are universal for all $\ket{\psi}^{\ox n}\in A^nB^nC_{[m]}^n$ with $\psi\in\cS$ (i.e. a compound source). 

\begin{theorem}
\label{thm:compound-EoA}
In the i.i.d.~limit of $n\rightarrow\infty$ and error $\delta\rightarrow 0$, the maximum entanglement rate $R=\frac1n\log d$ for a compound source $(\psi^{\ox n}:\psi\in\cS)$, when $o(n)$ EPR states are available for free between any pair or players, is
\begin{equation}
  \label{eq:EoA-compound-iid-rate}
  R  = \inf_{\psi\in\cS} \min_{I\subseteq[m]} S(AC_I)_\psi.
\end{equation}
\end{theorem}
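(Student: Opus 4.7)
The plan is to follow the template of Theorem \ref{thm:compound-randomness} essentially verbatim, adapting it to the entanglement of assistance protocol from the proof of Theorem \ref{theorem:EoA}, and leveraging the R\'enyi decoupling Theorem \ref{theorem:without_smoothing} rather than the smoothed min-entropy version. The converse is immediate: for every individual $\psi \in \cS$, Corollary \ref{corollary:EoA} yields $R \leq \min_{I\subseteq[m]} S(AC_I)_\psi$, hence $R\leq \inf_{\psi\in\cS}\min_I S(AC_I)_\psi$.

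For achievability, I would fix a block length $n$ and build an $\frac{\eta}{n}$-net $\cS_0 = \{\psi^{(y)}\}_{y=1}^N \subseteq \cS$ using Lemma \ref{lemma:net} (adapted to pure states), with $N$ growing only polynomially in $n$. Define the averaged cq-state
\[
  \widetilde{\psi}_{A^nB^nC_{[m]}^nY} = \frac{1}{N}\sum_{y=1}^N \psi^{(y)\ox n}_{ABC_{[m]}} \ox \proj{y}_Y,
\]
and run the EoA protocol from Theorem \ref{theorem:EoA} on this state, treating $Y$ as a dummy environment/reference. Instead of Theorem \ref{theorem:with_smoothing}, I would invoke Theorem \ref{theorem:without_smoothing} with Corollary \ref{corollary:D_I-main}, so that the resulting decoupling bound takes the form
\[
  \sum_{\emptyset\neq I\subseteq[m]} \exp_2\!\left(-\tfrac{\alpha-1}{\alpha}\bigl(\widetilde{H}_\alpha(A^nC_I^n|Y)_{\widetilde{\psi}} + n\log|A_I| - \log d\bigr)\right) + (\text{analogous $C_I$-terms}),
\]
for some $\alpha\in(1,2]$ common to all subsets. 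The key entropic step is then Lemma \ref{lemma:Renyi-cq-conditioning} combined with additivity, which gives
\[
  \widetilde{H}_\alpha(A^nC_I^n|Y)_{\widetilde{\psi}} \geq n\min_y \widetilde{H}_\alpha(AC_I)_{\psi^{(y)}} \geq n\Bigl(\inf_{\psi\in\cS} S(AC_I)_\psi - \Delta(\alpha)\Bigr),
\]
the last inequality using uniform continuity of $\widetilde{H}_\alpha \to S$ on the compact state space and $\cS_0\subset\cS$, and analogously for the $H_{\min}^\epsilon(C_I)$-type terms.

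The second family of constraints in \eqref{eq:EoA-oneshot}, demanding that the helper entropies $H_{\min}^\epsilon(C_I)_\psi$ be large, is precisely where the $o(n)$ free EPR pairs between pairs of players are spent: pre-sharing a vanishingly small rate of maximal entanglement between, say, $B$ and each $C_j$ uniformly boosts every $\widetilde{H}_\alpha(C_I)$ without affecting the asymptotic rate on the left-hand side. This lets us impose the second constraint universally across $\cS$ without paying anything in rate.

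The main obstacle, as in the compound randomness proof, is threading the net approximation through cleanly. A protocol achieving error $\delta$ on $\widetilde{\psi}$ achieves error $\leq N\delta$ on each $\psi^{(y)\ox n}$ (since the average is over $N$ elements), and then error $\leq N\delta + \eta$ on each $\psi^{\ox n}$ with $\psi\in\cS$ by the net property plus monotonicity of trace distance under the fixed LOCC map (the Uhlmann-isometry step of the EoA protocol must be chosen once for the net-element and then inherited by nearby states). Setting $\delta = \eta/N$ keeps the overall error $\leq 2\eta$, while the rate penalty coming from $\frac{\alpha}{\alpha-1}\cdot\frac{1}{n}\log(N/\eta)$ is only $O\bigl(\frac{\log n}{n(\alpha-1)}\bigr)$; letting first $n\to\infty$ and then $\alpha\to 1$, $\eta\to 0$, the rate \eqref{eq:EoA-compound-iid-rate} is attained.
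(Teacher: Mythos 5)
Your overall template (net, averaged source, R\'enyi decoupling instead of $n$-dependent smoothing, uniform convergence $\widetilde{H}_\alpha\to S$, error blow-up by $N$, then $n\to\infty$, $\alpha\to 1$) is indeed the paper's strategy, and your converse is the paper's converse. However, there is a genuine gap at the central step: you append a \emph{classical} flag $Y$, condition the R\'enyi entropies only on $Y$, and invoke Lemma \ref{lemma:Renyi-cq-conditioning}, "running the EoA protocol from Theorem \ref{theorem:EoA}" with $Y$ as a dummy reference. This does not work for entanglement of assistance, because unlike compound randomness extraction the figure of merit is not itself a decoupling-type trace distance: the protocol ends with an Uhlmann isometry on Bob's side, and that isometry must be a \emph{single} map, independent of which $\psi^{(y)}$ (or $\psi\in\cS$) is the true source. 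Decoupling of $A^nC_I^n$ from the dephased flag $Y$ only says that \emph{each} component decouples, which via Uhlmann yields component-dependent correction isometries $V_y(\vec{j})$ -- exactly the universality problem you need to avoid. To certify one universal decoder you must treat the averaged state $\widetilde{\rho}=\frac1N\sum_y\psi^{(y)\ox n}$ as a genuinely mixed input and run the mixed-state protocol of Theorem \ref{theorem:EoA-mixed}: the decoupling has to be from the full \emph{purifying} register of $\widetilde{\rho}$ (the coherent flag), so the relevant exponent is $\widetilde{H}_\alpha(A^n\widetilde{C}_I|E)_{\widetilde{\psi}}$ with $E$ the purifier, not $\widetilde{H}_\alpha(A^nC_I^n|Y)$. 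Conditioning on more can only decrease the entropy, so your $Y$-conditioned bound is an overestimate of the quantity that actually controls the protocol. The paper evaluates the correct quantity by duality (Lemma \ref{lemma:Renyi-duality}), turning it into $-\widetilde{H}_\beta(A^nC_I^n|B^nC_{I^c}^n)$ of the mixture, and then applies quasi-concavity (Lemma \ref{lemma:quasi-concavity}), paying an additive $\log N$; Lemma \ref{lemma:Renyi-cq-conditioning} is used in the compound \emph{randomness} proof, not here, and the distinction is not cosmetic.

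Two consequences of this gap propagate through your sketch. First, the role of the $o(n)$ free EPR pairs is mislocated: in the correct (coherent) analysis the helper terms become $\widetilde{H}_\alpha(\widetilde{C}_I|E)$, which after duality and quasi-concavity can dip below zero by up to $\log N$, and the EPR boost $\log d_0\gtrsim\frac{\alpha}{\alpha-1}(\log N-\log\eta)$ is needed precisely to absorb that; in your $Y$-conditioned accounting this penalty never appears, which is a symptom of bounding the wrong entropy. Second, your error bookkeeping ($\delta=\eta/N$, total error $2\eta$) follows the randomness-extraction pattern, but the EoA error involves Fuchs--van de Graaf square roots: the protocol has error $\approx 2\sqrt{\delta}$ on $\widetilde{\rho}$, hence $\approx 2N\sqrt{\delta}$ per component, so one must take $\delta$ of order $\eta/N^2$ (as the paper does) for the argument to close. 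With the flag treated coherently, the mixed-state protocol of Theorem \ref{theorem:EoA-mixed} used in place of Theorem \ref{theorem:EoA}, the duality-plus-quasi-concavity bound in place of Lemma \ref{lemma:Renyi-cq-conditioning}, and these two bookkeeping fixes, your outline becomes the paper's proof.
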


\begin{proof}
Even when the state $\psi^{\ox}$ is fixed and known, Corollary \ref{corollary:EoA} upper-bounds the rate by $\min_{I\subseteq[m]} S(AC_I)_\psi$, hence the infimum of this quantity over $\psi\in\cS$ provides an upper bound on the optimal rate $R$.

Regarding the achievability, for block length $n$ choose an $\frac{\eta}{n}$-net $\cS_0\subset\cS$ of states for $\cS$ (i.e.~a net to approximate elements of $\cS$). By adapting the proof of Lemma \ref{lemma:net}, we find that  $N := |\cS_0| \leq \left(\frac{5n}{\eta}\right)^{2|A||B||C_{[m]}|}$. We number the elements of the net, $\cS_0 = \{\rho_s : s=1,\ldots,N\}$. The plan is to construct a one-shot assisted entanglement distillation protocol for the averaged state plus sublinear entanglement, 
\begin{equation*}
\widetilde{\rho}_{A^nB^nC_{[m]}^n} \ox \widetilde{\Phi}_{b_{[m]}c_{[m]}}
  := \left(\frac{1}{N} \sum_{s=1}^N \psi_s^{\ox n} \right) \ox \left(\bigotimes_{j=1}^m \Phi_{b_jc_j}\right)
  \text{ on } \widetilde{A}\widetilde{B}\widetilde{C}_{[m]} = A^nB^nb_{[m]}C_{[m]}^nc_{[m]},
\end{equation*}
where $\Phi_{b_jc_j}$ is a maximally entangled state between systems $b_j$ (with Bob) and $c_j$ (with helper $j$) of dimensions $|b_j|=|c_j|=d_0$, $\log d_0=o(n)$, $\widetilde{A}=A^n$, $\widetilde{B}=B^nb_{[m]}$ and $\widetilde{C}_j=C_j^nc_j$. 
Then to argue that the protocol performs well on all $\psi_s^{\ox n}$ (plus the sublinear entanglement), and finally that it must perform well on all $\psi^{\ox n}$ with $\psi \in \cS$, we could do this directly using Theorem \ref{theorem:EoA-mixed}, except that for that to work we have to make the smoothing parameter $\epsilon$ in the min-entropies dependent on $n$, which makes the argument awkward. Instead, we opt to use the R\'enyi decoupling from Theorem \ref{theorem:without_smoothing} (Corollary \ref{corollary:D_I-main}), following otherwise the proof of Theorem \ref{theorem:EoA-mixed}. This means that there, Equation \eqref{eq:EoA-mixed-decoupling-bound} is replaced by 
\begin{equation}\begin{split}
  \label{eq:EoA-compound-decoupling-bound}
  \frac12 &\left\| \widetilde{\sigma}_{A^n\widetilde{C}_1\ldots \widetilde{C}_mE} - \frac{\1_{A}^{\ox n}}{|A|^n} \otimes \frac{\1_{\widetilde{C}_1}}{|\widetilde{C}_1|} \otimes \cdots \otimes \frac{\1_{\widetilde{C}_m}}{|\widetilde{C}_m|} \otimes \widetilde{\psi}_E \right\|_1 \\
  &\phantom{=}
   \leq \sum_{\emptyset\subseteq I\subseteq[m]} \!\!\!\exp_2\left( -\frac{\alpha-1}{\alpha} (\widetilde{H}_{\alpha}(A^n\widetilde{C}_I|E)_{\widetilde{\psi}\ox\widetilde{\Phi}} - \log d) \right) + \sum_{\emptyset\neq I\subseteq[m]} \!\!\!\exp_2\left(-\frac{\alpha-1}{\alpha} \widetilde{H}_{\alpha}(\widetilde{C}_I|E)_{\widetilde{\psi}\ox\widetilde{\Phi}} \right)\!,
\end{split}\end{equation}
with respect to the purification $\widetilde{\psi}$ of $\widetilde{\rho}$ and for 
\[
  \widetilde{\sigma}_{A^n\widetilde{C}_1\ldots \widetilde{C}_mE} = \left(\cT_0\circ\cU_0 \otimes \cT_1\circ\cU_1 \otimes \cdots \otimes \cT_m\circ\cU_m \otimes \id_E\right) (\widetilde{\psi}\ox\widetilde{\Phi}),
\]
with the maps $\cU_0$ and $\cT_0$ acting on $A^n$, and $\cU_j$ and $\cT_j$ acting on $\widetilde{C}_j$. 

We upper-bound the right hand side of Equation \eqref{eq:EoA-compound-decoupling-bound} as follows: with $\frac{1}{\alpha}+\frac{1}{\beta} = 2$,
\[\begin{split}
  -\widetilde{H}_{\alpha}(A^n\widetilde{C}_I|E)_{\widetilde{\psi}\ox\widetilde{\Phi}}
  &\leq -\widetilde{H}_{\alpha}(A^nC_I^n|E)_{\widetilde{\psi}} - |I|\log d_0 \\
  &\leq -\widetilde{H}_{\alpha}(A^nC_I^n|E)_{\widetilde{\psi}} \\
  &=     \widetilde{H}_{\beta}(A^nC_I^n|B^nC_{I^c}^n)_{\widetilde{\psi}} \\
  &\leq \max_{s\in[N]} \widetilde{H}_{\beta}(A^nC_I^n|B^nC_{I^c}^n)_{\psi_s^{\ox n}} + \log N \\
  &\leq \sup_{\psi\in\cS} \widetilde{H}_{\beta}(A^nC_I^n|B^nC_{I^c}^n)_{\psi^{\ox n}} + \log N \\
  &=   -n \inf _{\psi\in\cS} \widetilde{H}_{\alpha}(AC_I)_{\psi} + \log N,
\end{split}\]
and similarly, 
\[\begin{split}
  -\widetilde{H}_{\alpha}(\widetilde{C}_I|E)_{\widetilde{\psi}\ox\widetilde{\Phi}}
  &\leq -\widetilde{H}_{\alpha}(C_I^n|E)_{\widetilde{\psi}} - |I|\log d_0 \\
  &\leq -\widetilde{H}_{\alpha}(C_I^n|E)_{\widetilde{\psi}} - \log d_0 \\
  &=     \widetilde{H}_{\beta}(C_I^n|A^nB^nC_{I^c}^n)_{\widetilde{\psi}} - \log d_0 \\
  &\leq \max_{s\in[N]} \widetilde{H}_{\beta}(C_I^n|A^nB^nC_{I^c}^n)_{\psi_s^{\ox n}} + \log N - \log d_0 \\
  &\leq \sup_{\psi\in\cS} \widetilde{H}_{\beta}(C_I^n|A^nB^nC_{I^c}^n)_{\psi^{\ox n}} + \log N - \log d_0 \\
  &=   -n \inf _{\psi\in\cS} \widetilde{H}_{\alpha}(C_I)_{\psi} + \log N - \log d_0,
\end{split}\]
in both chains of inequalities using Lemmas \ref{lemma:Renyi-duality} (for the equalities) and \ref{lemma:quasi-concavity} (for the inequalities in the fourth line) and the additivity of the conditional R\'enyi entropy, and in the second chain additionally that $I\neq\emptyset$.

Thus, with the rate $nR = \log d$, the right hand side of the bound \eqref{eq:EoA-compound-decoupling-bound} is $\leq\delta$ if 
\[\begin{split}
  R        &\leq \inf_{\psi\in\cS} \min_{I\subseteq[m]} \widetilde{H}_\alpha(AC_I)_\psi - \frac1n\left(\log N - \frac{\alpha}{\alpha-1}\log\left(2^{-m-1}\delta\right) \right), \\
  \log d_0 &\geq \log N - \frac{\alpha}{\alpha-1}\log\left(2^{-m-1}\delta\right).
\end{split}\]
Since $\widetilde{H}_\alpha(AC_I)_{\rho}$ converges to $S(AC_I)_{\rho}$ as $\alpha\rightarrow 1$, and the converging as well as the limit functions are continuous on the compact set of all states, hence uniformly continuous, also the convergence  $\widetilde{H}_\alpha(AC_I) \rightarrow S(AC_I)$ of the functions on state space is uniform. Thus, there exists a $\Delta(\alpha)>0$ (converging to $0$ as $\alpha\rightarrow 1$) such that for all $I\subseteq[m]$, 
\[
  \inf_{\psi\in\cS} S(AC_I)_{\psi} 
    \geq \sup_{\psi\in\cS} \widetilde{H}_\alpha(AC_I)_{\psi} 
    \geq \sup_{\psi\in\cS} S(AC_I)_{\psi} - \Delta(\alpha).
\]
And so the trace norm in Equation~\eqref{eq:EoA-compound-decoupling-bound} is guaranteed to be $\leq\delta$ if
\[\begin{split}
  R        &\leq \inf_{\psi\in\cS} \min_{I\subseteq[m]} S(AC_I)_\psi - \Delta(\alpha) - \frac1n\left(\log N - \frac{\alpha}{\alpha-1}\log\left(2^{-m-1}\delta\right) \right), \\
  \log d_0 &\geq \log N - \frac{\alpha}{\alpha-1}\log\left(2^{-m-1}\delta\right).
\end{split}\]
Continuing the reasoning of the proof of Theorem \ref{theorem:EoA-mixed}, we obtain an assisted distillation protocol for $\widetilde{\rho}$ that has error $\leq 2\sqrt{\delta}$, hence it has error $\leq 2N\sqrt{\delta}$ on each of the $\psi_s^{\ox n}$, and so finally it has error $\leq 2N\sqrt{\delta}+\eta$ on each source $\psi^{\ox n}$ such that $\psi\in\cS$. Choosing $\delta=\frac{\eta}{N^2}$ and $\log d_0 = \frac{3\alpha}{\alpha-1}(\log N - \log\eta +m+1)$, we get an error guarantee of $\leq 3\eta$ across the set $\cS$, while the rate achieved is 
\[
  R = \inf_{\psi\in\cS} \min_{I\subseteq[m]} S(AC_I)_\psi - \Delta(\alpha) - O\left( \frac{\log n-\log\eta}{n(\alpha-1)} \right),
\]
which for $n\rightarrow\infty$ and $\alpha\rightarrow 1$ proves the claim. 
\end{proof}

\subsection{Multi-party quantum Slepian-Wolf coding: state merging}
\label{subsec:Slepian-Wolf}
In terms of decoupling strategy and objectives, this task could be considered a generalisation of the previous, entanglement of assistance, except that we are interested in both entanglement yield and entanglement consumption and their net difference. Namely, the setting is described by a pure state $\psi_{A_1\ldots A_kBR}$ of $k+2$ parties, $k$ senders (Alice-$i$) holding $A_i$, one receiver (Bob) holding $B$ and a reference system $R$, whose only role is to hold the purification. Additionally the parties share maximally entangled states $\Phi_{A_i'B_i'}$ between Alice-$i$ and Bob of Schmidt rank $c_i$, so that the overall initial state is 
\[
  \psi_{A_1\ldots A_kBR} \ox (\Phi_{c_1})_{A_1'B_1'} \ox \cdots \ox (\Phi_{c_k})_{A_k'B_k'}.
\]

A one-way LOCC state merging protocol consists first of $k$ compression (encoding) instruments $\left( \cE_i^{(x)}:A_iA_i' \rightarrow A_i'' : x\in[\ell_i]\right)$, with the individual maps acting as $\cE_i^{(x)}(\alpha) = V_i^{(x)\dagger} \alpha V_i^{(x)}$. Here, the $V_i^{(x)}:A_i''\rightarrow A_iA_i'$ are isometries, i.e. $V_i^{(x)\dagger}V_i^{(x)}=\1_{A_i''}$, such that 
the projectors $\Pi_i^{(x)} = V_i^{(x)}V_i^{(x)\dagger}$ form a projective measurement, i.e. $\sum_{x=1}^{\ell_i} \Pi_i^{(x)} = \1_{A_iA_i'}$. We denote $|A_i''|=d_i$, $|X_i|=\ell_i$, hence $|A_i|c_i=d_i\ell_i$, which might necessitate to increase $A_i$ by isometric embedding. 
Secondly, of a collection of decompression (decoding) CPTP maps $\cD^{(x_{[k]})}:BB_1'\ldots B_k' \rightarrow \widehat{A}_1\ldots\widehat{A}_k\widehat{B}B_1''\ldots B_k''$, one for each tuple $x_{[k]}=x_1\ldots x_k$ of outcomes (where $\hat{A}_i\cong A_i$ and $\hat{B}\cong B$). 
The idea is that Alice-$i$ performs the instrument $\cE_i$, obtaining outcome $x_i$ which is communicated to Bob, who collects the outcome tuple $x_{[k]}$ and applies $\cD^{x_{[k]}}$. The result is a one-way LOCC operation $\Lambda:A_1\ldots A_k A_1'\ldots A_k' BB_1'\ldots B_k' \rightarrow \widehat{A}_1\ldots\widehat{A}_k\widehat{B} A_1''\ldots A_k'' B_1''\ldots B_k''$ that can be written as
\[
  \Lambda = \sum_{x_{[k]}} \cE_1^{(x_1)} \ox \cdots \ox \cE_k^{(x_k)} \ox \cD^{(x_{[k]})}.
\]
The objective is that at the end, after application of $\Lambda$, the Alices and Bob share approximately the state $\psi_{\widehat{A}_1\ldots \widehat{A}_k\widehat{B}R} \ox (\Phi_{d_1})_{A_1'B_1'} \ox \cdots \ox (\Phi_{d_k})_{A_k'B_k'}$, where now $\widehat{A}_1\ldots \widehat{A}_k\widehat{B}$ are held by Bob, and Alice-$i$ shares with Bob maximally entangled state of Schmidt rank $d_i$: 
\[
  \Lambda\left(\psi_{A_1\ldots A_kBR} \ox (\Phi_{c_1})_{A_1'B_1'} \ox \cdots \ox (\Phi_{c_k})_{A_k'B_k'}\right) \stackrel{!}{\approx} 
  \psi_{\widehat{A}_1\ldots \widehat{A}_k\widehat{B}R} \ox (\Phi_{d_1})_{A_1''B_1''} \ox \cdots \ox (\Phi_{d_k})_{A_k''B_k''}.
\]
The trace distance 
\[
  \eta := \frac12\! \left\| \Lambda\!\!\left(\psi_{A_1\ldots A_kBR} \!\ox\! (\Phi_{c_1})_{A_1'B_1'} \!\ox \cdots \ox\! (\Phi_{c_k})_{A_k'B_k'}\right) \!- 
  \psi_{\widehat{A}_1\ldots \widehat{A}_k\widehat{B}R} \!\ox\! (\Phi_{d_1})_{A_1''B_1''} \!\ox \cdots \ox\! (\Phi_{d_k})_{A_k''B_k''} \right\|_1
\]
is called the \emph{error} of the protocol.

Let us define the numbers $r_i := \log c_i - \log d_i$ as the net one-shot rates of entanglement cost for Alice-$i$, and the task is to characterize the possible tuples of these rates with corresponding state merging protocols. This problem has been introduced and solved in \cite{HOW:merging-Nature,HOW:merging-CMP} in the asymptotic setting of both single and multiple senders, and in \cite{Berta:merging} in the one-shot setting of a single sender. Dutil \cite{Dutil:PhD} has investigated the case of multiple senders in the one-shot setting as well as in the i.i.d.~asymptotics, and made the connection to the question of simultaneous smoothing of collision entropies and min-entropies \cite{DutilHayden:merging}. 

\begin{theorem}
\label{theorem:merging}
Given the setting above, quantum state merging can be achieved with error $\eta \leq 4\cdot 3^{k/2}\sqrt{\epsilon}$ if  
\begin{equation}\begin{split}
  \label{eq:one-shot-merging}
  \forall \emptyset\neq I\subseteq[k] 
  \quad 
  \sum_{i\in I} \log d_i &\leq H_{\min}^\epsilon(A_I|R)_\psi + \sum_{i\in I} \log c_i + 2\log\epsilon, \\
  \text{or equivalently } \sum_{i\in I} r_i &\geq H_{\max}^\epsilon(A_I|A_{I^c}B)_\psi - 2\log\epsilon,
\end{split}\end{equation}
with the above net one-shot rates of entanglement consumption $r_i=\log c_i-\log d_i$.

\end{theorem}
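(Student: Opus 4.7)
The plan is to reduce state merging to a decoupling of Alice's post-measurement output from the reference $R$, and then use Uhlmann's theorem to construct Bob's decoder. Each Alice-$i$ first applies a random unitary $U_i$ on $A_iA_i'$ drawn from a $2$-design, and then performs the projective measurement $\{\Pi_i^{(x_i)}=V_i^{(x_i)}V_i^{(x_i)\dagger}\}_{x_i\in[\ell_i]}$ of rank-$d_i$ projectors, recording the outcome $x_i$ in a classical register $X_i$. The associated cptp map is $\cT_i:A_iA_i'\rightarrow A_i''X_i$. After tracing out Bob's systems $BB_{[k]}'$, the effective state on Alice's side and the reference is $\rho_{A_{[k]}A_{[k]}'R}=\psi_{A_{[k]}R}\ox\bigotimes_{i=1}^k\frac{\1_{A_i'}}{c_i}$.

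Next, I would apply Theorem~\ref{theorem:with_smoothing} in the tensor-product form of Corollary~\ref{corollary:D_I-main}, with environment $E=R$, input systems $A_iA_i'$, and channels $\cT_i$. Two entropy computations are then required. First, additivity across the tensor factor $\bigotimes_i\1_{A_i'}/c_i$ gives the lower bound $H_{\min}^\epsilon((A_iA_i')_I|R)_\rho \geq H_{\min}^\epsilon(A_I|R)_\psi + \log c_I$ with $c_I=\prod_{i\in I}c_i$. Second, a direct computation of the Choi state $\tau_i$ of $\cT_i$ (block-diagonal in the outcome basis, with each block equal to $1/\ell_i$ times a rank-$d_i$ pure maximally entangled state) yields $\widetilde H_2((A_iA_i')_I|A_I''X_I)_{\tau|\tau_{A_I''X_I}} = -\log d_I$, where $d_I=\prod_{i\in I}d_i$. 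Combining these with $D_I\leq 1$ and the relaxation $\widetilde H_2^\epsilon\geq H_{\min}^\epsilon$ (Remark~\ref{rem:2-vs-min}), the decoupling bound reads
\[
  \EE_{U_{[k]}}\!\left\|\sigma_{A_{[k]}''X_{[k]}R} - \bigotimes_{i=1}^k\!\tfrac{\1_{A_i''X_i}}{d_i\ell_i}\ox\psi_R\right\|_1 \leq \sum_{\emptyset\neq I\subseteq[k]}\!\left(2^{|I|+1}\epsilon + \exp_2\!\left[-\tfrac12\bigl(H_{\min}^\epsilon(A_I|R)_\psi + \log c_I - \log d_I\bigr)\right]\right).
\]
Under the hypothesis \eqref{eq:one-shot-merging} each exponent is at most $\log\epsilon$, the total right-hand side is $O(3^k\epsilon)$, and a standard averaging argument selects a deterministic tuple of unitaries achieving (twice) this bound.

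The final step is Uhlmann's theorem (Theorem~\ref{theorem:Uhlmann}) to convert decoupling into Bob's recovery isometries. The post-measurement global state is pure on $A_{[k]}''X_{[k]}RBB_{[k]}'$, while the decoupled target admits a natural purification on Bob's side consisting of classical copies of $x_{[k]}$ (communicated from the Alices), EPR pairs $\Phi_{d_i}$ between $A_i''$ and new Bob-held systems $B_i''$, and the merged state $\psi_{\widehat A_{[k]}\widehat BR}$ with $\widehat A_{[k]}\widehat B$ held by Bob. Uhlmann then supplies classically-controlled isometries $V^{(x_{[k]})}$ on $BB_{[k]}'$; applying Fuchs--van de Graaf (Equation~\eqref{eq:FvdG_relation}) converts this into an overall trace-distance error of order $3^{k/2}\sqrt{\epsilon}$. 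The equivalent formulation in terms of the net rates $r_i=\log c_i-\log d_i$ is then immediate from $\sum_{i\in I}r_i=\log c_I-\log d_I$ together with the min/max duality $H_{\min}^\epsilon(A_I|R)_\psi=-H_{\max}^\epsilon(A_I|A_{I^c}B)_\psi$ (Lemma~\ref{lemma:Renyi-duality}) for the pure state $\psi_{A_{[k]}BR}$.

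The main obstacle in this plan is the Uhlmann step: the decoupling inequality controls only the \emph{average} $A_{[k]}''X_{[k]}R$-marginal, whereas the merging task demands per-outcome recovery isometries such that, on average over $x_{[k]}$, the restored state is close to the ideal target. Resolving this requires rewriting the trace distance on the cq-state as a sum of conditional trace distances weighted by the outcome probabilities, followed by outcome-by-outcome application of Uhlmann, in direct analogy with the Uhlmann argument used in the proof of Theorem~\ref{theorem:EoA}. Modulo this standard but careful book-keeping, the multi-user decoupling theorem delivers the achievability claim in a single stroke, without invoking any simultaneous-smoothing hypothesis.
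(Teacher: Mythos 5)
Your proposal is correct and follows essentially the same route as the paper's proof: local random unitaries on $A_iA_i'$ followed by the rank-$d_i$ instrument, an application of Theorem~\ref{theorem:with_smoothing} with Corollary~\ref{corollary:D_I-main} using exactly the entropy offsets $+\sum_{i\in I}\log c_i$ and $-\sum_{i\in I}\log d_i$, and then the per-outcome Uhlmann plus Fuchs--van de Graaf step --- the ``obstacle'' you flag is resolved precisely as you describe, by splitting the cq trace distance into the variational distance of the outcome distribution and the weighted conditional distances, as in the proof of Theorem~\ref{theorem:EoA}. The only nit is that the equivalence of the two rate conditions rests on the duality of \emph{smooth} min-/max-entropies for the pure state $\psi_{A_{[k]}BR}$ (cf.~Tomamichel), not on the R\'enyi duality of Lemma~\ref{lemma:Renyi-duality}.
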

\vspace{-5pt}
\begin{figure}[ht]
    \centering
    \includegraphics[scale=0.6]{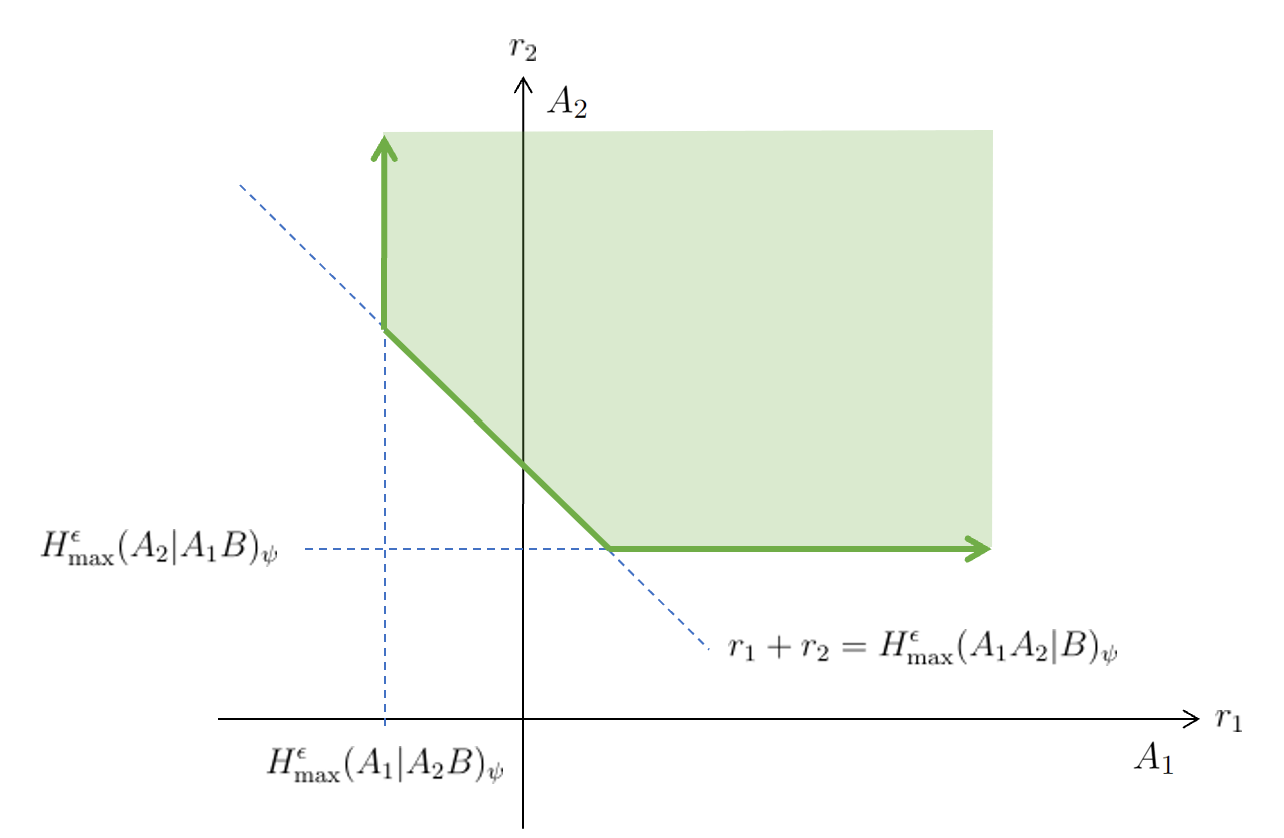}
    \caption{One-shot achievable rate region of a two-senders quantum Slepian-Wolf coding. Notice that the region is open towards the northeast.}
    \label{fig:RR_state_merging}
\end{figure}

\begin{corollary}
\label{corollary:merging}
In the i.i.d.~limit of $n\rightarrow\infty$, the region of achievable rates $R_i=\frac1n r_i$ for successful quantum state merging of $\psi^{\ox n}$ is given precisely by
\begin{equation}
  \label{eq:q-SW-iid}
  \forall I\subseteq[k] \quad \sum_{i\in I} R_i \geq S(A_I|A_{I^c}B)_\psi. 
\end{equation}
\end{corollary}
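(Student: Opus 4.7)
The plan is to prove the corollary in two parts, achievability and converse, with both being essentially direct consequences of Theorem~\ref{theorem:merging} combined with the asymptotic equipartition property and the well-known single-sender state merging converse.

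For achievability, I would apply Theorem~\ref{theorem:merging} to the product state $\psi^{\ox n}$ on $A_1^n\ldots A_k^nB^nR^n$, setting $r_i = nR_i$. The one-shot condition becomes
\[
\forall\emptyset\neq I\subseteq[k] \quad \sum_{i\in I} nR_i \geq H_{\max}^{\epsilon_n}(A_I^n|A_{I^c}^n B^n)_{\psi^{\ox n}} - 2\log\epsilon_n.
\]
Choosing any sequence $\epsilon_n\to 0$ with $\frac{1}{n}\log\epsilon_n\to 0$ (for instance $\epsilon_n = 1/\sqrt{n}$), the AEP (Theorem~\ref{theorem:AEP}) gives $\frac{1}{n}H_{\max}^{\epsilon_n}(A_I^n|A_{I^c}^n B^n)_{\psi^{\ox n}} \to S(A_I|A_{I^c}B)_\psi$, while the penalty $-\frac{2}{n}\log\epsilon_n$ vanishes. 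Hence any rate tuple $(R_i)_{i\in[k]}$ strictly satisfying $\sum_{i\in I} R_i > S(A_I|A_{I^c}B)_\psi$ for every nonempty $I$ is achievable with vanishing error. The case $I = \emptyset$ in the statement is trivial since $S(\emptyset|\cdot)=0$, and it is also compatible with the possibility that individual $R_i$ may be negative (i.e.\ net entanglement gain), which corresponds to $d_i > c_i$.

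For the converse, I would argue subset-by-subset. Fix any $\emptyset\neq I\subseteq[k]$ and regard the collection of senders in $I$ as a single ``merged sender'' holding $A_I = \bigotimes_{i\in I} A_i$, while the remaining senders in $I^c$ together with Bob hold $A_{I^c}B$. Any multi-party state merging protocol for $\psi^{\ox n}$ with rates $(R_i)_{i\in[k]}$ then yields, by aggregating the operations and the entanglement books of the parties in $I$, a valid single-sender state merging protocol from $A_I^n$ to $(A_{I^c}B)^n$ at net entanglement rate $\sum_{i\in I} R_i$. The single-sender converse from~\cite{HOW:merging-CMP,Berta:merging} then forces $\sum_{i\in I} R_i \geq S(A_I|A_{I^c}B)_\psi$. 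Intersecting over all nonempty $I$ gives the claimed outer bound.

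The only subtle step is really the achievability, where one has to be a little careful that the error parameter $\epsilon_n$ and the additive slack $2\log\epsilon_n$ can both be absorbed on the right-hand side without affecting the limit; but this is standard once one picks, e.g., $\epsilon_n = n^{-1/2}$. No simultaneous smoothing is invoked, which is precisely the point: the one-shot bound~\eqref{eq:one-shot-merging} already contains a separate smoothed max-entropy for each subset $I$, and the AEP can be applied to each of these independently.
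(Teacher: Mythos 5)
Your proposal is correct and follows essentially the same route as the paper: achievability by applying the one-shot Theorem~\ref{theorem:merging} to $\psi^{\ox n}$ and invoking the AEP (Theorem~\ref{theorem:AEP}), and the converse by reducing to the bipartite cut $A_I : A_{I^c}B$ and citing the single-sender merging converse, which is precisely the argument the paper delegates to \cite{HOW:merging-CMP}. The only cosmetic difference is your $n$-dependent choice $\epsilon_n = n^{-1/2}$, which strictly requires a quantitative form of the AEP (or the paper's double limit, first $n\rightarrow\infty$ at fixed $\epsilon$, then $\epsilon\rightarrow 0$), but this does not affect the correctness of the argument.
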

\begin{proof}
To describe our protocol, 
we fix unitaries $V_i:A_iA_i'\rightarrow X_iA_i''$ and then can write the instrument as a CPTP map 
$\cT_i(\alpha) = \sum_{x=1}^{\ell_i} (\proj{x}\ox\1_{A_i''})V_i\alpha V_i^\dagger(\proj{x}\ox\1_{A_i''})$. Its Choi state $\tau^{(i)}_{A_iA_i':X_iA_i''}$ has conditional R\'enyi entropy $\widetilde{H}_2(A_iA_i'|X_iA_i'')_{\tau^{(i)}} = -\log d_i$ \cite{Dupuis-et-al:decouple}. We can thus apply Theorem \ref{theorem:with_smoothing} with Corollary \ref{corollary:D_I-main}, which tell us that there exist local unitaries $U_i$ on $A_i$ such that 
\[\begin{split}
  \sigma_{X_1\ldots X_k A_1''\ldots A_k''R} 
  &= (\cT_1\circ\cU_1\ox\cdots\ox\cT_k\circ\cU_k\ox\id_R) \left(\psi_{A_1\ldots A_k R}\ox\frac{\1_{A_1'}}{c_1}\ox\cdots\ox\frac{\1_{A_k'}}{c_k}\right) \\
  &= \sum_{x_{[k]}} p(x_{[k]}) \proj{x_1}^{X_1}\ox\cdots\ox\proj{x_k}^{X_k}\ox \sigma^{(x_{[k]})}_{A_1''\ldots A_k''R}
\end{split}\]
satisfies
\begin{equation}\begin{split}
  \label{eq:q-SW-decoupling-bound}
  \frac12 &\left\| \sigma_{X_{[k]}A_{[k]}''R} - \frac{\1_{X_1A_1''}}{\ell_1 d_1}\ox\cdots\ox\frac{\1_{X_kA_k''}}{\ell_k d_k}\ox\psi_R \right\|_1 \\
  &\phantom{======}
   \leq 3^k\epsilon + \frac12 \sum_{\emptyset\neq I\subseteq[k]} \exp_2\left[\frac12\!\left(\sum_{i\in I}\log d_i - \sum_{i\in I}\log c_i - H_{\min}^\epsilon(A_I|R)_\psi\right)\right],
\end{split}\end{equation}
choosing all $\epsilon_I=\epsilon$ equal. The right hand side of this bound is $\leq\delta := (3^k+2^{k-1})\epsilon$ if Equation~\eqref{eq:one-shot-merging} is fulfilled. In that case, the total variational distance between $p(x_{[k]})$ and the uniform distribution on $X^k$ is upper bounded by $\delta$, too, and so by the triangle inequality we get
\[
  \sum_{x_{[k]}} p(x_{[k]}) \frac12 \left\| \sigma^{(x_{[k]})}_{A_{[k]}''R} - \frac{\1_{A_1''}}{d_1}\ox\cdots\ox\frac{\1_{A_k''}}{d_k}\ox\psi_R \right\|_1 
  =: \sum_{x_{[k]}} p(x_{[k]}) \delta(x_{[k]})
  \leq 2\delta. 
\]
Notice that $\sigma^{(x_{[k]})}_{A_{[k]}''R} = \Tr_B \proj{\psi^{(x_{[k]})}}_{A_{[k]}''BB_{[k]}'R}$, with 
\[
  \ket{\psi^{(x_{[k]})}}_{A_{[k]}''BB_{[k]}'R} = \frac{1}{\sqrt{p(x_{[k]})}} \bra{x_{[k]}}(V_1U_1\ox\cdots\ox V_kU_k) \left(\ket{\psi}_{A_{[k]}BR}\ox\ket{\Phi}_{A_1'B_1'}\ldots\ket{\Phi}_{A_k'B_k'}\right),
\]
while
\[
  \frac{\1_{A_1''}}{d_1}\ox\cdots\ox\frac{\1_{A_k''}}{d_k}\ox\psi_R = \Tr_{B_1''\ldots B_k'' \widehat{A}_{[k]}\widehat{B}} \Phi_{A_1''B_1''}\ox\cdots\ox\Phi_{A_k''B_k''}\ox\psi_{\widehat{A}_{[k]}\widehat{B}R}.
\]
Then, just as before, we can conclude using Uhlmann's Theorem \ref{theorem:Uhlmann} and the Fuchs-van de Graaf inequalities \eqref{eq:FvdG_relation}, that for each $x_{[k]}$ there exists an isometry $W^{(x_{[k]})}:BB_{[k]}' \rightarrow \widehat{A}_{[k]}\widehat{B}B_{[k]}''$ such that 
\[\begin{split}
  \frac12&\left\| W^{(x_{[k]})}\proj{\psi^{(x_{[k]})}}_{A_{[k]}''BB_{[k]}'R}W^{(x_{[k]})\dagger} \right.\\
  &\phantom{=========}\Biggl.
   -\, \psi_{\widehat{A}_1\ldots \widehat{A}_k\widehat{B}R} \ox (\Phi_{d_1})_{A_1''B_1''} \ox \cdots \ox (\Phi_{d_k})_{A_k''B_k''} \Biggr\|_1 \leq \sqrt{\delta(x_{[k]})(2-\delta(x_{[k]}))}.
\end{split}\]
This means that defining $\cE^{(x_i)}(\alpha) = \bra{x_i} V_iU_i\alpha U_i^\dagger V_i^\dagger \ket{x_i}$ and $\cD^{(x_{k})}(\beta) = W^{(x_{[k]})}\beta W^{(x_{[k]})\dagger}$ as the encoding and decoding maps, this will satisfy the requirement for state merging with error
\[\begin{split}
  \frac12&\left\| \Lambda\left(\psi_{A_1\ldots A_kBR} \ox (\Phi_{c_1})_{A_1'B_1'} \ox \cdots \ox (\Phi_{c_k})_{A_k'B_k'}\right) \right.\\
  &\phantom{======}\left.
   \, - \psi_{\widehat{A}_1\ldots \widehat{A}_k\widehat{B}R} \ox (\Phi_{d_1})_{A_1''B_1''} \ox \cdots \ox (\Phi_{d_k})_{A_k''B_k''} \right\|_1 \leq \sqrt{2\delta(2-2\delta)}
   \leq 4\cdot 3^{k/2} \sqrt{\epsilon}. 
\end{split}\]
With the one-shot achievability in hand, we can now once again use the AEP Theorem \ref{theorem:AEP} for the min-entropy to get the optimal rate region for the i.i.d.~asymptotics of a source $\psi^{\ox n}$ as $n\rightarrow\infty$ and $\delta\rightarrow 0$. Namely, rates $R_i$, defined as the limits of $\frac{r_i}{n}$, are achievable if and only if for all $I\subseteq[k]$, $\sum_{i\in I} R_i \geq S(A_I|A_{I^c}B)_\psi$. This completes the proof of Theorem \ref{theorem:merging} and Corollary \ref{corollary:merging}, since the converse (necessity of the asymptotic inequalities) was argued in \cite{HOW:merging-CMP}.
\end{proof}

To be sure, the achievability of \eqref{eq:q-SW-iid} was shown in \cite{HOW:merging-CMP}, already, by finding the extreme points of the region and noting that they can be solved by iteration of the single-sender merging protocol, and then time-sharing (convex hull) for the remaining region. 
The present protocol (removing the need for time-sharing) was first proposed in the multiple-sender setting by Dutil and Hayden \cite{DutilHayden:merging}, where however the proof of its functioning is incomplete. In Dutil's PhD thesis \cite[Ch.~4]{Dutil:PhD}, the role of simultaneous smoothing is fully analysed. Indeed, a decoupling bound of the form \eqref{eq:q-SW-decoupling-bound} was conjectured there \cite[Conj.~4.1.3]{Dutil:PhD}, and the simultaneous smoothing problem was highlighted. It could be solved only in the i.i.d.~asymptotics of $k=2$ senders. 

To demonstrate a case where the direct attainability of points in the above rate region, and also the one-shot result are relevant, we consider the problem of i.i.d.~state merging for a compound source, i.e.~the source is only partially known, meaning $\rho = \Tr_R \psi \in \cS \subseteq \cS(A_{[k]}B)$, and we would like to design protocols as above for every $n$ that are universal for all $\ket{\psi}^{\ox n}\in A_{[k]}^nB^nR^n$ with $\Tr_R\psi\in\cS$.

\begin{theorem}
  \label{thm:compound-SW}
  In the i.i.d.~limit of $n\rightarrow \infty$, the region of achievable rates $R_i=\frac1n r_i$ for a compound source $\left(\psi^{\ox n}:\Tr_R\psi\in\cS\right)$ is given by
  \begin{equation}
  \label{eq:q-SW-compund}
  \forall I\subseteq[k] \quad \sum_{i\in I} R_i \geq \sup_{\rho\in\cS} S(A_I|A_{I^c}B)_\rho. 
  \end{equation}
\end{theorem}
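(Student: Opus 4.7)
My approach is to adapt the proof of Theorem \ref{thm:compound-EoA} to the state merging setting, combining an $\eta/n$-net of the source family with the R\'enyi decoupling bound of Theorem \ref{theorem:without_smoothing}. The converse is immediate from Corollary \ref{corollary:merging}: any universal protocol also works for each single $\rho\in\cS$, so the achievable rate tuple must satisfy $\sum_{i\in I} R_i \geq S(A_I|A_{I^c}B)_\rho$ for every such $\rho$, and hence $\sum_{i\in I} R_i \geq \sup_{\rho\in\cS} S(A_I|A_{I^c}B)_\rho$.

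For achievability at block length $n$, I would pick an $\frac{\eta}{n}$-net $\cS_0 = \{\rho_s\}_{s=1}^N \subset \cS$ via Lemma \ref{lemma:net}, with $N$ polynomial in $n/\eta$. Fixing purifications $\ket{\psi_s}_{A_{[k]}BR}$, form the compound purification
\[
  \ket{\widetilde{\psi}}_{A_{[k]}^n B^n R^n Y} := \frac{1}{\sqrt{N}}\sum_{s=1}^N \ket{\psi_s}^{\ox n} \otimes \ket{s}_Y,
\]
treating $\widetilde{R} := R^n Y$ as the reference. Running the proof of Theorem \ref{theorem:merging} on $\widetilde{\psi}$, but invoking Theorem \ref{theorem:without_smoothing} with Corollary \ref{corollary:D_I-main} in place of Theorem \ref{theorem:with_smoothing}, yields a universal protocol $\Lambda$ with decoupling error on $\widetilde{\psi}$ at most $\delta$, provided that for every $\emptyset\neq I\subseteq[k]$,
\[
  \sum_{i\in I}(\log c_i - \log d_i) \geq -\widetilde{H}_\alpha(A_I^n|\widetilde{R})_{\widetilde{\psi}} + \frac{\alpha}{\alpha-1}\log\left(2^k \delta^{-1}\right).
\]

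The heart of the argument is to bound the conditional R\'enyi entropy on $\widetilde{\psi}$ in terms of the single-copy entropies over $\cS$. Using R\'enyi duality (Lemma \ref{lemma:Renyi-duality}) with $\frac{1}{\alpha}+\frac{1}{\beta}=2$, followed by the quasi-concavity of Lemma \ref{lemma:quasi-concavity} and the additivity of the conditional sandwiched R\'enyi entropy,
\[
  -\widetilde{H}_\alpha(A_I^n|\widetilde{R})_{\widetilde{\psi}}
   = \widetilde{H}_\beta(A_I^n|A_{I^c}^n B^n)_{\widetilde{\rho}}
   \leq n \sup_{\rho\in\cS} \widetilde{H}_\beta(A_I|A_{I^c}B)_\rho + \log N.
\]
Uniform continuity of $\widetilde{H}_\beta$ on the compact state space, together with its pointwise convergence to the von Neumann conditional entropy as $\beta\to 1$, provides a $\Delta(\alpha)\to 0$ such that $\widetilde{H}_\beta(A_I|A_{I^c}B)_\rho \leq S(A_I|A_{I^c}B)_\rho + \Delta(\alpha)$ uniformly in $\rho$. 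Setting $nR_i = \log c_i - \log d_i$, the rate condition becomes
\[
  \sum_{i\in I} R_i \geq \sup_{\rho\in\cS} S(A_I|A_{I^c}B)_\rho + \Delta(\alpha) + O\!\left(\frac{\log n - \log\eta}{n(\alpha-1)}\right).
\]

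Finally I would transfer the protocol from $\widetilde{\psi}$ to arbitrary sources in the standard two-step manner. First, measuring the classical register $Y$ in the basis $\{\ket{s}\}$ on the trace-distance inequality for $\widetilde{\psi}$ preserves $\ell_1$ distance, yielding $\frac{1}{N}\sum_s \|\Lambda(\psi_s^{\ox n}) - \psi_s'^{\ox n}\ox\Phi\|_1 = O(\sqrt{\delta})$, so each net purification suffers merging error $O(N\sqrt{\delta})$. Second, for general $\rho\in\cS$, Uhlmann (Theorem \ref{theorem:Uhlmann}) and the Fuchs--van de Graaf inequalities \eqref{eq:FvdG_relation} give a purification $\psi$ of $\rho$ with $D(\psi^{\ox n},\psi_s^{\ox n})=O(\sqrt{\eta})$, whence by the triangle inequality the universal error on $\psi^{\ox n}$ is $O(N\sqrt{\delta}+\sqrt{\eta})$. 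Choosing $\delta=\eta/N^2$ makes the error vanish as $\eta\to 0$, $\alpha\to 1$, $n\to\infty$. The main obstacle I expect is the coordination of these limits so that both $\Delta(\alpha)$ and $\frac{\log N}{n(\alpha-1)}$ vanish within the error budget; otherwise the argument is a direct adaptation of the compound entanglement-of-assistance proof.
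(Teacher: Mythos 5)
Your proposal is correct and follows essentially the same route as the paper's proof: an $\eta/n$-net, the averaged source purified with a classical index register, the R\'enyi decoupling bound of Theorem~\ref{theorem:without_smoothing} in place of smoothed min-entropies, then duality (Lemma~\ref{lemma:Renyi-duality}), quasi-concavity (Lemma~\ref{lemma:quasi-concavity}), additivity, and uniform convergence $\widetilde{H}_\beta \to S$, before transferring the error from the mixture to each net element and then to all of $\cS$ with $\delta=\eta/N^2$. The only deviation is cosmetic: in the final transfer you pass through Uhlmann and Fuchs--van de Graaf (incurring $O(\sqrt{\eta})$), whereas the paper simply adds $\eta$ via the net property in trace norm; both vanish in the stated limits.
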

\begin{proof}
Even when the source $\rho\in\cS$ is fixed, necessarily $\sum_{i\in I} R_i \geq S(A_I|A_{I^c}B)_\rho$ \cite{HOW:merging-CMP}, thus $\sum_{i\in I} R_i \geq \sup_{\rho\in\cS} S(A_I|A_{I^c}B)_\rho$ for all subsets $I$. This takes care of the converse bound in Equation~\eqref{eq:q-SW-compund}, and it remains to prove the achievability. 

To this end, for block length $n$ we choose an $\frac{\eta}{n}$-net $\cS_0 \subset \cS$ of states for $\cS$ (i.e.~a net to approximate elements of $\cS$). By adapting the proof of Lemma \ref{lemma:net}, we find that  $N := |\cS_0| \leq \left(\frac{5n}{\eta}\right)^{2|A_{[k]}|^2|B|^2}$. We number the elements of the net, $\cS_0 = \{\rho_s : s=1,\ldots,N\}$ and choose purifications $\ket{\psi}_s \in A_{[k]}BR$ of $\rho_s$. As previously deployed, the plan is to construct a one-shot protocol for the averaged source 
\begin{align*}
  \widetilde{\rho} &= \frac{1}{N} \sum_{s=1}^N \rho_s^{\ox n} \text{ on } A_{[k]}^nB^n, \text{ which has a purification } \\
  \ket{\widetilde{\psi}} &= \frac{1}{\sqrt{N}} \sum_{s=1}^N \ket{\psi_s}^{\ox n}\ox\ket{s}_{R'} \in A_{[k]}^nB^nR^nR',
\end{align*}
then argue that the protocol performs well on all $\psi_s^{\ox n}$, and finally that it must perform well on all $\psi^{\ox n}$ with $\Tr_R \psi \in \cS$. We could do this directly using Theorem \ref{theorem:merging}, except that for that to work we have to make the smoothing parameter $\epsilon$ in the min-entropies dependent on $n$, which makes the argument awkward. Instead, we opt to use the R\'enyi decoupling from Theorem \ref{theorem:without_smoothing} (Corollary \ref{corollary:D_I-main}), following otherwise the proof of Theorem \ref{theorem:merging}. This means that there,  Equation~\eqref{eq:q-SW-decoupling-bound} is replaced by 
\begin{equation}\begin{split}
  \label{eq:q-SW-decoupling-bound-Renyi}
  \frac12 &\left\| \sigma_{X_{[k]}A_{[k]}''R^nR'} - \frac{\1_{X_1A_1''}}{\ell_1 d_1}\ox\cdots\ox\frac{\1_{X_kA_k''}}{\ell_k d_k}\ox\widetilde{\psi}_{R^nR'} \right\|_1 \\
  &\phantom{======}
   \leq \sum_{\emptyset\neq I\subseteq[k]} 
          \exp_2\left[ \frac{\alpha-1}{\alpha} \left(\sum_{i\in I}\log d_i - \sum_{i\in I}\log c_i - \widetilde{H}_{\alpha}(A_I^n|R^nR')_{\widetilde{\psi}} \right)\right],
\end{split}\end{equation}
choosing all $\alpha_I=\alpha \in (1,2]$ equal. With the net rates $nR_i = \log c_i - \log d_i$, the right hand side of the last bound is $\leq \delta$ if 
\[\begin{split}
  \forall \emptyset\neq I\subseteq[k] \quad
  \sum_{i\in I} nR_i &\geq -\widetilde{H}_{\alpha}(A_I^n|R^nR')_{\widetilde{\psi}} - \frac{\alpha}{\alpha-1}\log\left(2^{-k}\delta\right) \\
  &= \widetilde{H}_{\beta}(A_I^n|A_{I^c}^nB^n)_{\widetilde{\psi}} - \frac{\beta}{1-\beta}\log\left(2^{-k}\delta\right),
\end{split}\]
where we have used the R\'enyi entropy duality (Lemma \ref{lemma:Renyi-duality}) with $\frac{1}{\beta}+\frac{1}{\alpha}=2$. In fact, we can simplify this condition using Lemma \ref{lemma:quasi-concavity} which tells us
\[\begin{split}
  \widetilde{H}_{\beta}(A_I^n|A_{I^c}^nB^n)_{\widetilde{\psi}} 
   &\leq \max_{s\in[N]} \widetilde{H}_{\beta}(A_I^n|A_{I^c}^nB^n)_{\rho_s^{\ox n}} + \log N \\
  &\leq \sup_{\rho\in\cS} \widetilde{H}_{\beta}(A_I^n|A_{I^c}^nB^n)_{\rho^{\ox n}} + \log N \\
  &= \sup_{\rho\in\cS} n\widetilde{H}_{\beta}(A_I|A_{I^c}B)_{\rho} + \log N. 
\end{split}\]
Thus, the trace norm in Equation~\eqref{eq:q-SW-decoupling-bound-Renyi} is $\leq\delta$ if
\[
  \forall\emptyset\neq I \subseteq[k] \quad
    \sum_{i\in I} R_i \geq \sup_{\rho\in\cS} \widetilde{H}_{\beta}(A_I|A_{I^c}B)_{\rho} + \frac1n \left(\log N + \frac{\beta}{1-\beta}\left(k-\log\delta \right)\right). 
\]
Since $\widetilde{H}_\beta(A_I|A_{I^c}B)_{\rho}$ converges to $S(A_I|A_{I^c}B)_{\rho}$ as $\beta\rightarrow 1$, and the converging as well as the limit functions are continuous on the compact set of all states, hence uniformly continuous, also the convergence  $\widetilde{H}_\beta(A_I|A_{I^c}B) \rightarrow S(A_I|A_{I^c}B)$ of the functions on state space is uniform. Thus, there exists a $\Delta(\beta)>0$ (converging to $0$ as $\beta\rightarrow 1$) such that for all $I\subseteq[k]$, 
\[
  \sup_{\rho\in\cS} S(A_I|A_{I^c}B)_{\rho} 
    \leq \sup_{\rho\in\cS} \widetilde{H}_\beta(A_I|A_{I^c}B)_{\rho} 
    \leq \sup_{\rho\in\cS} S(A_I|A_{I^c}B)_{\rho} + \Delta(\beta).
\]
And so the trace norm in Equation~\eqref{eq:q-SW-decoupling-bound-Renyi} is guaranteed to be $\leq\delta$ if
\begin{equation}
  \label{eq:compound-rate}
  \forall\emptyset\neq I \subseteq[k] \quad
    \sum_{i\in I} R_i \geq \sup_{\rho\in\cS} S(A_I|A_{I^c}B)_{\rho} + \Delta(\beta) + \frac1n \left(\log N + \frac{\beta}{1-\beta}\left(k-\log\delta \right)\right). 
\end{equation}

Continuing the reasoning of the proof of Theorem \ref{theorem:merging}, we obtain a merging protocol for $\widetilde{\psi}$ that has error $\leq 2\sqrt{\delta}$, hence it has error $\leq 2N\sqrt{\delta}$ on each of the $\psi_s^{\ox n}$, and so finally error $\leq 2N\sqrt{\delta}+2\eta$ on each of source $\psi^{\ox n}$ such that $\Tr_R\psi\in\cS$. Choosing $\delta=\frac{\eta^2}{N^2}$ we get an error guarantee of $\leq 4\eta$ across the set $\cS$, while the rates are bounded 
\begin{equation*}
  \forall\emptyset\neq I \subseteq[k] \quad
    \sum_{i\in I} R_i \geq \sup_{\rho\in\cS} S(A_I|A_{I^c}B)_{\rho} + \Delta(\beta) + O\left( \frac{\log n - \log\eta}{n(1-\beta)}\right), 
\end{equation*}
which for $n\rightarrow\infty$ and $\beta\rightarrow 1$ proves the claim. 
\end{proof}

\subsection{Quantum communication via quantum multiple access channels}
\label{subsec:MAC}
A quantum multiple access channel is a CPTP map $\cN:A_1\ldots A_k \rightarrow B$ from $k$ senders $A_i$ to a single receiver $B$. For later use, let us introduce the Stinespring dilation $\cN(\rho) = \Tr_E V \rho V^\dagger$, with $V:A_1\ldots A_k \rightarrow BE$ an isometry. Let each user $i$ hold independent quantum messages (quantum systems) $M_i$ of dimension $s_i=|M_i|$. Then, a code for such a channel consists of a set of encoding CPTP maps $\cE_i:M'_i\rightarrow A'_i$ and a single decoding CPTP map $\cD:B\rightarrow \widehat{M}_1\ldots\widehat{M}_k$ where $\widehat{M}_i \simeq M_i$. And the numbers $\log s_i$ are the one-shot rates. In this setting, we say that the code has error $\delta$ if 
\[
  \frac12 \left\| \left(\cD\!\circ\!\cN\!\circ\!(\cE_1\!\ox\cdots\ox\!\cE_k)\!\ox\id_{M_{[k]}}\!\right)\! (\Phi_{M_1'M_1}\!\ox\cdots\ox\!\Phi_{M_k'M_k}) 
  - \Phi_{\widehat{M}_1M_1}\!\ox\cdots\ox\Phi_{\widehat{M}_k M_k} \right\|_1 
  \leq \delta,
\]

where $\Phi_{M_i'M_i}$ and $\Phi_{\widehat{M}_i M_i}$ are standard maximally entangled states of Schmidt rank $s_i$. The problem here is now to characterize, for a given error $\delta$, the set of achievable one-shot rate tuples $(\log s_1, \ldots, \log s_k)$. Likewise, in the i.i.d.~asymptotic limit $\cN^{\ox n}$, when $n\rightarrow\infty$ and $\delta\rightarrow 0$, we introduce the asymptotic rates $R_i = \frac1n \log s_i$ and ask for a description of the achievable rate tuples $(R_1,\ldots,R_k)$. By general principles this is a convex corner, i.e.~a closed convex set in the positive orthant, containing the origin and stable under reducing any coordinate towards $0$. See Figure \ref{fig:3DrrMAC} for the one-shot tripartite rate region and Figure \ref{fig:iidRRMAC} for the i.i.d.~bipartite rate region.
\begin{theorem}
  \label{theorem:MAC}
  Given the quantum MAC $\cN:A_{[k]} \rightarrow B$ and its Stinespring isometry $V:A_{[k]} \rightarrow BE$, as well as pure states $\varphi^{(i)}_{A_iA_i'}$ with $A_i' \simeq A_i$ ($i\in[k]$), define
  \begin{equation}
    \label{eq:MAC-reference-state}
    \ket{\psi}_{A_1\ldots A_kBE} = (\1_{A_{[k]}}\ox V)\left( \ket{\varphi^{(1)}}_{A_1A_1'}\ox\cdots\ox\ket{\varphi^{(k)}}_{A_kA_k'} \right),
  \end{equation}
  where we let $V$ act on $A_{[k]}'$.
  Then there exists a code for the channel with error $\delta \leq (k+1)2^{k+1}\sqrt{\epsilon}$ if the one-shot rate tuples satisfy
  \begin{equation}
    \label{eq:one-shot-MAC}
    \forall \emptyset\neq I\subseteq[k] 
    \quad 
    \sum_{i\in I} \log s_i \leq H_{\min}^\epsilon(A_I|E)_\psi + 2\log\epsilon
    = -H_{\max}^\epsilon(A_I|A_{I^c}B)_\psi + 2\log\epsilon. 
\end{equation}
\end{theorem}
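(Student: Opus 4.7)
The approach is a decoupling-based code construction, completely parallel to the state merging proof of Theorem \ref{theorem:merging} and the assisted-distillation proof of Theorem \ref{theorem:EoA}. I view each Alice-$i$ as preparing the pure input $\ket{\varphi^{(i)}}_{A_iA_i'}$ locally, keeping $A_i$ as a reference and feeding $A_i'$ into the channel, so that the global post-channel state is precisely $\ket{\psi}_{A_{[k]}BE}$ from Equation~\eqref{eq:MAC-reference-state}. On her reference $A_i$, Alice-$i$ then performs a random unitary $U_i$ followed by a rank-$s_i$ projective measurement $\cT_i$ (with projectors $P_i^{(x_i)}$, $x_i\in[\ell_i]$, $|A_i|=s_i\ell_i$ after trivially enlarging $A_i$ if necessary), exactly as the Alice-party projection in the EoA proof. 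The Choi state of $\cT_i$ satisfies $\widetilde{H}_2(A_i|X_iA_i'')_{\tau^{(i)}} = -\log s_i$ \cite{Dupuis-et-al:decouple}.

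I then apply Theorem \ref{theorem:with_smoothing} together with Corollary \ref{corollary:D_I-main} to $\psi_{A_{[k]}BE}$, letting the channel environment $E$ play the role of the side system in the decoupling theorem. Choosing $\epsilon_I=\epsilon$ uniformly, the subset-$I$ term in the bound becomes $\exp_2\!\bigl[-\tfrac12\bigl(H_{\min}^\epsilon(A_I|E)_\psi-\sum_{i\in I}\log s_i\bigr)\bigr]$, and the resulting inequality
\[
  \tfrac12\bigl\|\sigma_{X_{[k]}A_{[k]}''E}-\pi_{X_{[k]}A_{[k]}''}\ox\psi_E\bigr\|_1
  \leq 3^k\epsilon + \tfrac12\!\sum_{\emptyset\neq I\subseteq[k]}\!\exp_2\!\left[-\tfrac12\!\left(H_{\min}^\epsilon(A_I|E)_\psi-\sum_{i\in I}\log s_i\right)\right]
\]
is $\leq\delta:=(3^k+2^{k-1})\epsilon$ precisely under the rate inequalities \eqref{eq:one-shot-MAC}. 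In particular the marginal on $X_{[k]}$ is $\delta$-close to uniform, so by a Markov argument one can fix a good realization of the $U_i$'s and a specific outcome tuple $x_{[k]}^*$ for which the conditional state $\sigma^{(x_{[k]}^*)}_{A_{[k]}''E}$ is $O(\delta)$-close to $\pi_{A_{[k]}''}\ox\psi_E$ in trace norm.

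With $A_{[k]}''$ thus decoupled from $E$, Uhlmann's Theorem \ref{theorem:Uhlmann} applied to the pure purification of $\sigma^{(x_{[k]}^*)}_{A_{[k]}''BE}$ produces a decoding isometry $W:B\to\widehat{M}_{[k]}\widetilde{B}$ such that applying $W$ yields a state close to $(\Phi_{A_1''\widehat{M}_1}\ox\cdots\ox\Phi_{A_k''\widehat{M}_k})\ox\ket{\psi_E^{\text{pur}}}$; the Fuchs–van de Graaf inequalities \eqref{eq:FvdG_relation} promote the reduced-state trace-norm bound to a purified-state bound with the usual square-root loss, yielding total code error $O(3^{k/2}\sqrt{\epsilon})$. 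The encoder $\cE_i:M_i'\to A_i'$ is then read off as the (approximate) isometry induced by $\ket{\varphi^{(i)}}$ together with the successful projection $P_i^{(x_i^*)}U_i^*$ on $A_i$, identifying the $s_i$-dimensional range of $P_i^{(x_i^*)}$ with $M_i$; the decoder is $\cD=W(\cdot)W^\dagger$ followed by discarding $\widetilde{B}$. The alternate form of the rate condition follows from the min/max-entropy duality $H_{\min}^\epsilon(A_I|E)_\psi=-H_{\max}^\epsilon(A_I|A_{I^c}B)_\psi$ on the pure tripartite state $\psi_{A_{[k]}BE}$.

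The only delicate step is verifying that the random-unitary-plus-projection construction legitimately defines a fixed isometric encoder after derandomization; this requires the marginal on $A_i''$ to be close to maximally mixed, which is guaranteed by the decoupling bound itself. Since this passage from a random ensemble to a concrete encoder is identical to the corresponding step in the proofs of Theorems \ref{theorem:EoA} and \ref{theorem:merging}, I expect no genuinely new obstacle, and the bulk of the argument is a direct transcription of those templates.
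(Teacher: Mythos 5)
Your proposal is correct and follows essentially the same route as the paper's proof: decouple the reference systems $A_{[k]}$ of $\psi_{A_{[k]}BE}$ from the environment $E$ via random local unitaries followed by rank-$s_i$ projective measurements, fix good unitaries and a good outcome tuple by averaging, read off the encoding isometries $W_i:M_i'\to A_i'$ from the near-maximally-entangled projected states $\widetilde{\varphi}^{(i)}$, and obtain the decoder by Uhlmann plus Fuchs--van de Graaf, with the duality $H_{\min}^\epsilon(A_I|E)=-H_{\max}^\epsilon(A_I|A_{I^c}B)$ giving the alternate rate form. The only difference is cosmetic (you invoke Uhlmann for the decoder before cleaning up the encoder, whereas the paper first replaces each $\widetilde{\varphi}^{(i)}$ by an embedded maximally entangled state and only then applies Uhlmann to the actual encoded state), and a triangle inequality makes the two orderings equivalent.
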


\begin{figure}[ht]
    \centering
    \includegraphics[scale=0.7]{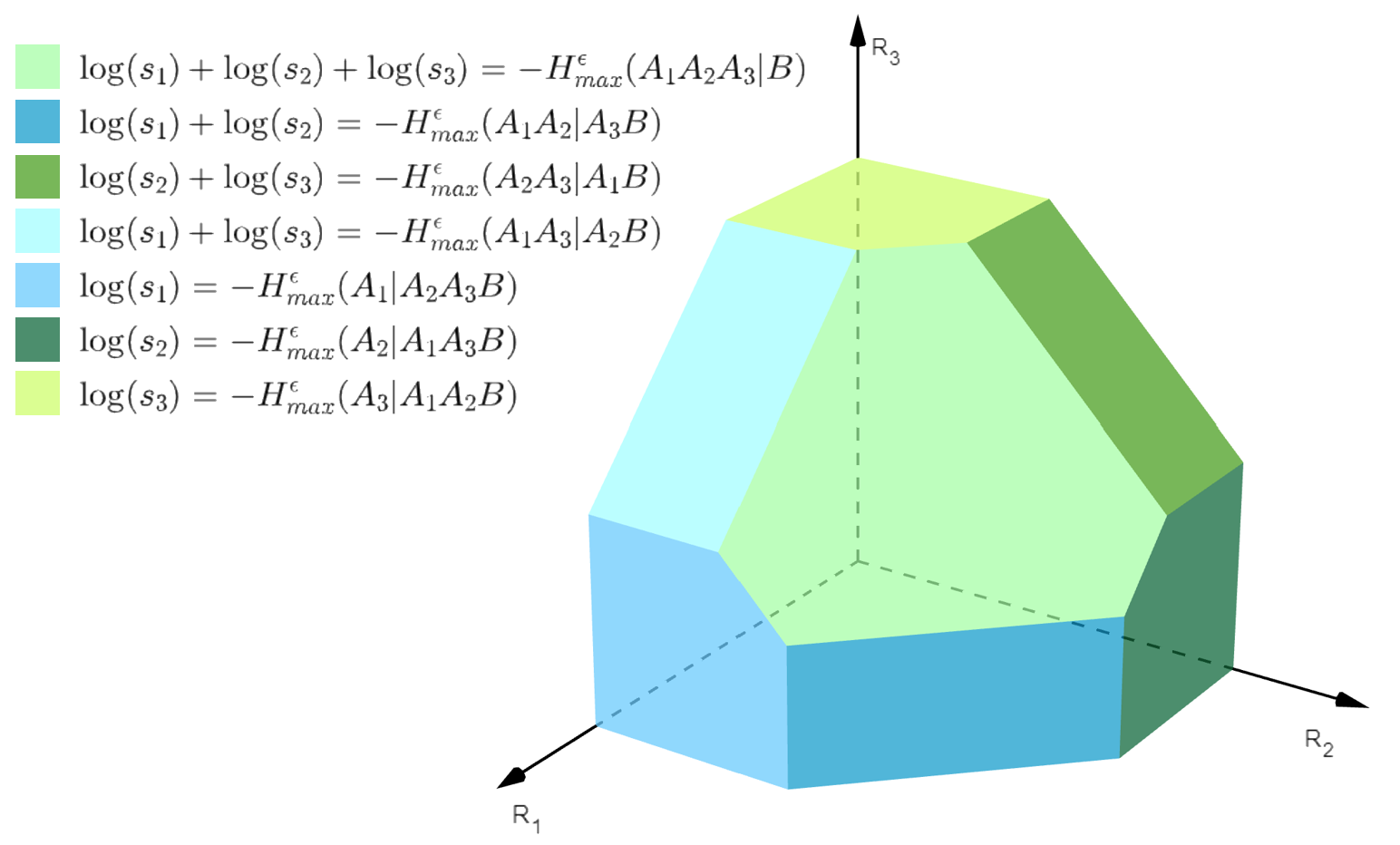}
    \caption{One-shot achievable rate region for a MAC with three senders $A_1$, $A_2$ and $A_3$.}
    \label{fig:3DrrMAC}
\end{figure}

\begin{corollary}
  \label{corollary:MAC}
  In the i.i.d.~asymptotic limit $n\rightarrow\infty$ and $\delta\rightarrow 0$, the rates $R_i=\frac{1}{n}\log s_i$ are achievable for transmission over $\cN^{\ox n}$ if
  \begin{equation}
    \label{eq:iid-MAC}
    \forall I\subseteq[k] \quad
    \sum_{i\in I} R_i \leq I(A_I\rangle BA_{I^c})_\psi,
  \end{equation}
  where $I(A_I\rangle BA_{I^c})_\psi = - S(A_I|BA_{I^c})_\psi $ is the coherent information. More generally, for an ensemble $\{q(u),\ket{\psi_u}\}$ of states as in Equation~\eqref{eq:MAC-reference-state}, $u\in\cU$ ranging over a discrete alphabet, the rates $R_i$ are achievable if 
  \begin{equation}
    \label{eq:iid-MAC-convexhull}
    \forall I\subseteq[k] \quad
    \sum_{i\in I} R_i \leq \sum_u q(u) I(A_I\rangle BA_{I^c})_{\psi_u} = I(A_I\rangle BA_{I^c}U)_{\overline{\psi}}, 
  \end{equation}
  the latter coherent information evaluated on the cq-state $\overline{\psi} = \sum_u q(u)\proj{u}_U \ox \psi_u$.
\end{corollary}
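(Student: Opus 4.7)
The plan follows the decoupling-plus-Uhlmann blueprint used in Theorems \ref{theorem:EoA} and \ref{theorem:merging}. The key choice is a local cptp map $\cT_i:A_i\to A_i$ on each sender's reference: pick a unitary identification $A_i\simeq X_i\ox M_i$ with $|M_i|=s_i$ and $|X_i|=|A_i|/s_i$, write $P^{(i)}_x:=\proj{x}_{X_i}\ox\1_{M_i}$ for the associated rank-$s_i$ projectors (which sum to $\1_{A_i}$), and set $\cT_i(\alpha):=\sum_{x}P^{(i)}_x\alpha P^{(i)}_x$. This map is unital, and the same calculation that gives $\widetilde H_2(A'|A)_{\tau_0}=-\log d$ for the rank-$d$ projective map in the proof of Theorem \ref{theorem:EoA} now yields $\widetilde H_2(A_i|A_i)_{\tau^{(i)}|\tau^{(i)}_{A_i}}=-\log s_i$, and hence $\widetilde H_2(A_I|A_I)_\tau=-\sum_{i\in I}\log s_i$ by additivity on tensor products.

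I then apply Theorem \ref{theorem:with_smoothing} together with Corollary \ref{corollary:D_I-main} to the marginal $\psi_{A_{[k]}E}$, with independent $2$-design unitaries $U_i$ on each $A_i$, the tensor product channel $\bigotimes_i \cT_i$, smoothing radii $\epsilon_I=\epsilon$, states $\sigma_B^I=\1_{A_I}/|A_I|$, $\zeta_E^I$ saturating the min-entropy (Remark \ref{rem:2-vs-min}), and $D_I\le 1$. This yields
\begin{align*}
\EE_{U_{[k]}}&\left\|\bigotimes_i (\cT_i\circ\cU_i)(\psi_{A_{[k]}E})-\bigotimes_i\frac{\1_{A_i}}{|A_i|}\ox\psi_E\right\|_1 \\
&\le 3^k\epsilon+\frac12\sum_{\emptyset\neq I\subseteq[k]}\exp_2\!\left(\frac{1}{2}\sum_{i\in I}\log s_i-\frac{1}{2}H_{\min}^\epsilon(A_I|E)_\psi\right),
\end{align*}
which is $\le\delta:=(3^k+2^{k-1})\epsilon$ precisely when the rate constraints \eqref{eq:one-shot-MAC} hold for every non-empty $I\subseteq[k]$.

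Converting this decoupling into an actual code mirrors the proof of Theorem \ref{theorem:EoA}: after derandomising over $U_{[k]}$, the classical-outcome structure of $\bigotimes_i\cT_i$ lets me decompose the left-hand side above as a mixture over outcome tuples $x_{[k]}$ and deduce (via triangle inequality and Markov) that for many $x^*_{[k]}$ the conditional post-measurement state on the $s_i$-dim subspaces $M^*_i:=\operatorname{range}P^{(i)}_{x^*_i}\subset A_i$ is $O(\delta)$-close to $\bigotimes_i(\1_{M^*_i}/s_i)\ox\psi_E$. Uhlmann's Theorem \ref{theorem:Uhlmann} then produces a decoding isometry $W$ on Bob's $B$ that approximately turns the overall pure state into $\bigotimes_i\Phi_{M^*_i\widehat M_i}$ tensored with a purification of $\psi_E$, with total error $\le 4\cdot 3^{k/2}\sqrt\epsilon$ via the Fuchs--van de Graaf inequalities \eqref{eq:FvdG_relation}. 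The physical sender-$i$ encoder $\cE_i:M'_i\hookrightarrow A'_i$ is the isometric embedding of $M_i$ into $A'_i$ obtained by pulling the subspace $M^*_i\subset A_i$ back through the Schmidt decomposition of the purification $\varphi^{(i)}_{A_iA_i'}$.

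Corollary \ref{corollary:MAC} then follows from the AEP (Theorem \ref{theorem:AEP}): in the i.i.d.~limit $n\to\infty$, $\epsilon\to0$, $\frac{1}{n}H_{\min}^\epsilon(A_I^n|E^n)_{\psi^{\ox n}}\to S(A_I|E)_\psi=I(A_I\rangle BA_{I^c})_\psi$, proving \eqref{eq:iid-MAC}; the convex-hull version \eqref{eq:iid-MAC-convexhull} is obtained by standard time-sharing, allocating $\approx nq(u)$ channel uses to each letter $u\in\cU$ and concatenating the resulting block-wise codes. I expect the main technical obstacle to be twofold: first, justifying $\widetilde H_2(A_i|A_i)_\tau=-\log s_i$ for the block-diagonalisation map (a naive choice such as $\cT_i=\Tr_{K_i}$ would yield only $\log|A_i|-2\log s_i$ and miss the capacity); and second, unwinding the fictitious measurement on the reference $A_i$ into a concrete physical isometric encoding on $A_i'$ via the Schmidt structure of $\varphi^{(i)}$, which is delicate when $\varphi^{(i)}$ is far from maximally entangled.
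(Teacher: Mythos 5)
Your one-shot construction and the derivation of the first region \eqref{eq:iid-MAC} essentially coincide with the paper's argument: the same block-dephasing maps $\cT_i(\alpha)=\sum_x P^{(i)}_x\alpha P^{(i)}_x$ with rank-$s_i$ projectors and Choi collision entropy $-\log s_i$, the same application of Theorem \ref{theorem:with_smoothing} with Corollary \ref{corollary:D_I-main} to $\psi_{A_{[k]}E}$, the same selection of a good outcome tuple followed by Uhlmann isometries $W_i:M_i'\to A_i'$ for the encoders and a final Uhlmann isometry on $B$ for the decoder, and the same AEP step giving $\tfrac1n H_{\min}^\epsilon(A_I^n|E^n)\to S(A_I|E)_\psi=I(A_I\rangle BA_{I^c})_\psi$. (Your worry about $\varphi^{(i)}$ being far from maximally entangled is moot: one applies Uhlmann to the projected state $\widetilde\varphi^{(i)}\propto(P_{j^{(i)}}U_i\ox\1)\ket{\varphi^{(i)}}$, whose reduced state the decoupling bound itself forces to be near-maximally mixed on a rank-$s_i$ support, irrespective of the Schmidt spectrum of $\varphi^{(i)}$.)

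The genuine gap is your last step: \eqref{eq:iid-MAC-convexhull} does not follow from ``standard time-sharing.'' Concatenating block codes, one for each letter $u$ on $\approx nq(u)$ channel uses, only certifies rate tuples of the form $R=\sum_u q(u)R^{(u)}$ with each $R^{(u)}$ satisfying the constraints \eqref{eq:iid-MAC} for $\psi_u$; this Minkowski combination of the single-letter regions can be strictly smaller than the region cut out by the \emph{averaged} constraints $\sum_{i\in I}R_i\le\sum_u q(u)I(A_I\rangle BA_{I^c})_{\psi_u}$. The clearest failure mode: if $I(A_I\rangle BA_{I^c})_{\psi_u}<0$ for some $I$ and some $u$ (coherent informations can be negative), then no non-negative tuple satisfies \eqref{eq:iid-MAC} for that $u$, so your per-letter code certifies nothing on that sub-block, while the averaged right-hand side can still be comfortably positive; even when all values are non-negative, the set function $I\mapsto I(A_I\rangle BA_{I^c})_{\psi_u}$ need not be monotone, and the effective constraints $\min_{J\supseteq I}$ do not commute with averaging over $u$. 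The paper's route avoids time-sharing altogether (this is one of its advertised points): apply the one-shot Theorem \ref{theorem:MAC} once to the single block state $\bigotimes_{u}\psi_u^{\ox n_u}$ with $n_u/n\to q(u)$, and invoke the AEP (Theorem \ref{theorem:AEP}) factor-wise, so that the smooth min-entropies add, $H_{\min}^\epsilon(A_I^n|E^n)\gtrsim\sum_u n_u S(A_I|E)_{\psi_u}-o(n)$, which plugged into \eqref{eq:one-shot-MAC} yields the averaged constraints directly, with a single simultaneous decoder. Replacing your final step by this argument closes the gap (and is also what makes the compound-channel extension in Theorem \ref{thm:compound-MAC} possible, where time-sharing would require knowing the channel to split the rates).
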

\begin{proof}
We prove both Theorem \ref{theorem:MAC} and Corollary \ref{corollary:MAC}.
To describe good codes, fix projective measurements $(P_{j^{(i)}})$ on $A_i$, where each of the $P_{j^{(i)}}$ has rank $s_i$
(by enlarging $A_i$ if necessary, we may assume w.l.o.g. that $s_i$ divides the dimension $|A_i|$), and let the corresponding CPTP map be
$\cT_i(\alpha) = \sum_{j^{(i)}=1}^{|A_i|/s_i} P_{j^{(i)}} \alpha P_{j^{(i)}}$. Its Choi state $\tau^{(i)}_{A_i'A_i}$ has conditional R\'enyi entropy $\widetilde{H}_2(A_i'|A_i)_{\tau^{(i)}} = -\log s_i$ \cite{Dupuis-et-al:decouple}. We can thus apply Theorem \ref{theorem:with_smoothing} with Corollary \ref{corollary:D_I-main} that tell us that there exist local unitaries $U_i$ on $A_i$ such that 
\[
  \sigma_{A_1\ldots A_k E} = (\cT_1\circ\cU_1\ox\cdots\ox\cT_k\circ\cU_k\ox\id_E)\psi_{A_1\ldots A_k E}
\]
satisfies
\[
  \frac12 \left\| \sigma_{A_{[k]} E} - \frac{\1_{A_1}}{|A_1|}\!\ox\cdots\ox\!\frac{\1_{A_k}}{|A_k|}\ox\psi_E \right\|_1 
  \!\!\leq 3^k\epsilon + \frac12 \sum_{\emptyset\neq I\subseteq[k]}\!\!\! \exp_2\left[\frac12\!\left(\sum_{i\in I}\log s_i - H_{\min}^\epsilon(A_I|E)_\psi\!\right)\right],
\]
choosing all $\epsilon_I=\epsilon$ equal, the right-hand side is $\leq\eta := (3^k+2^{k-1})\epsilon$ if Equation~\eqref{eq:one-shot-MAC} is fulfilled. In that case, there must exist measurement outcomes $j^{(i)}$ for the POVM on $A_i$ such that, with the outcome probability $p(\vec{j}) = \Tr (U_1\ox\cdots\ox U_k)\psi_{A_1\ldots A_k}(U_1\ox\cdots\ox U_k)^\dagger \left(P_{j^{(1)}}\ox\cdots\ox P_{j^{(k)}}\right)$: 
\[\begin{split}
  \frac12 &\left\| \frac{1}{p(\vec{j})} \left(P_{j^{(1)}}U_1\ox\cdots\ox P_{j^{(k)}}U_k\ox\1_E\right) \psi_{A_1\ldots A_k E} \left(U_1^\dagger P_{j^{(1)}}\ox\cdots\ox U_k^\dagger P_{j^{(k)}}\ox\1_E\right) \right.\\
  &\phantom{=============================}\left. 
   - \frac{P_{j^{(1)}}}{s_1}\ox\cdots\ox\frac{P_{j^{(k)}}}{s_k}\ox\psi_E \right\|_1 \leq 2\eta. 
\end{split}\]
Unpacking the definition of $\psi$, this means that for each $i$ there are unit vectors of Schmidt rank $s_i$, $\ket{\widetilde{\varphi}^{(i)}}_{A_i'A_i} \propto (P_{j^{(i)}}U_i\ox\1)\ket{\varphi^{(i)}}$, such that for all $\vec{j}$
\[\begin{split}
  \frac{1}{\sqrt{p(\vec{j})}} &\left(P_{j^{(1)}}U_1\ox\cdots\ox P_{j^{(k)}}U_k\ox\1_B\ox\1_E\right) \ket{\psi}_{A_1\ldots A_k BE} \\
  &\phantom{==========}
   = (\1_{A_1}\ox\cdots\ox\1_{A_k}\ox V) \left(\ket{\widetilde{\varphi}^{(1)}}_{A_1'A_1}\cdots\ket{\widetilde{\varphi}^{(k)}}_{A_k'A_k}\right).
\end{split}\]
The previous trace norm estimate shows that each of the $\widetilde{\varphi}^{(i)}_{A_i}$ is nearly maximally mixed on its support $M_i$, up to trace distance $\leq 2\eta$. So using Uhlmann´s theorem \ref{theorem:Uhlmann} and the inequalities \eqref{eq:FvdG_relation} once again we conclude that there must exist maximally entangled states $\Phi_{M_i'M_i}$ of Schmidt rank $s_i$ and isometries $W_i : M_i' \rightarrow A_i'$, such that $P\left( (W_i\ox\1_{M_i})\Phi_{M_i'M_i}(W_i\ox\1_{M_i})^\dagger, \widetilde{\varphi}^{(i)} \right) \leq 2\sqrt{\eta}$. Putting these last bounds together, using the triangle inequality for the purified distance and its non-increase under CPTP maps, we get

\[\begin{split}
  P&\left( \Tr_B V^{A_{[k]}'\rightarrow BE} \left( \bigotimes_{i=1}^k W_i^{M_i'\rightarrow A_i'}\Phi_{M_i'M_i} (W_i^{M_i'\rightarrow A_i'})^\dagger \right) (V^{A_{[k]}'\rightarrow BE})^\dagger , \right. \\
  &\phantom{==========================}
   \left. \frac{\1_{M_1}}{s_1}\ox\cdots\ox\frac{\1_{M_k}}{s_k}\ox\psi_E \right) \leq 2(k+1)\sqrt{\eta}.
\end{split}\]
As the second argument in the purified distance has purification $\Phi_{\widehat{M_1}}\ox\cdots\ox\Phi_{\widehat{M_k}M_k}\ox \psi_{A_{[k]}BE}$, by Uhlmann's theorem there exists an isometry $\widehat{W}:B\rightarrow \widehat{M}_1\ox\cdots\ox\widehat{M}_k\ox A_{[k]}B$ such that 
\[\begin{split}
  P&\left( \widehat{W} V^{A_{[k]}'\rightarrow BE} \left( \bigotimes_{i=1}^k W_i^{M_i'\rightarrow A_i'}\Phi_{M_i'M_i} (W_i^{M_i'\rightarrow A_i'})^\dagger \right) (V^{A_{[k]}'\rightarrow BE})^\dagger \widehat{W}^\dagger, \right. \\
  &\phantom{=====================}
   \Bigg. \Phi_{\widehat{M_1}M_1}\ox\cdots\ox\Phi_{\widehat{M_k}M_k}\ox \psi_{A_{[k]}BE} \Biggr) \leq 2(k+1)\sqrt{\eta}.
\end{split}\]
In other words, defining the encoders $\cE_i(\alpha) = W_i\alpha W_i^\dagger$ and the decoder as $\cD(\beta) = \Tr_{A_{[k]}B} \widehat{W}\beta\widehat{W}^\dagger$, yields a code for the quantum MAC with one-shot rates $s_i$ [subject to the conditions \eqref{eq:one-shot-MAC}] and error $\delta = 2(k+1)\sqrt{\eta} \leq (k+1)2^{k+1}\sqrt{\epsilon}$. This form of a one-shot achievability region had been conjectured for a long time, with the best previous result reported by Chakraborty, Nema, and Sen \cite{chakraborty-et-al:multisender.decoupling,CNS:one-shotMAC}, who used rate-splitting and a multipartite decoupling with a modified smooth collision entropy. Using the encoder and decoder defined above, we can attain any point in the one-shot capacity region in \eqref{eq:one-shot-MAC}.

As in the previous example applications, we can directly apply the AEP Theorem \ref{theorem:AEP} for the min-entropy \cite{Tomamichel:PhD} to obtain an achievable rate region for the i.i.d.~quantum multiple-access channel: $H_{\min}^\epsilon(A_I^n|E^n)_{\psi^{\ox n}} \sim n S(A_I|E)_\psi = -n S(A_I|BA_{I^c})_\psi = n I(A_I\rangle BA_{I^c})_\psi$, the latter quantity being the coherent information. Then, rates $R_i = \frac{1}{n}\log s_i$ are achievable in the limit $n\rightarrow\infty$ and $\delta\rightarrow 0$ if Equation~\eqref{eq:iid-MAC} is satisfied. The more general statement with the distribution $q$ over $u$ is obtained by applying the AEP to the tensor product $\bigotimes_{u\in\cU} \psi_u^{n_u}$, where $n_u$ are non-negative integers with $\sum_u n_u = n$ and $\sum_u \left| \frac{n_u}{n}-q(u) \right| \rightarrow 0$. This completes the proof.
\end{proof}

\begin{figure}[ht]
    \centering
    \includegraphics[scale=0.7]{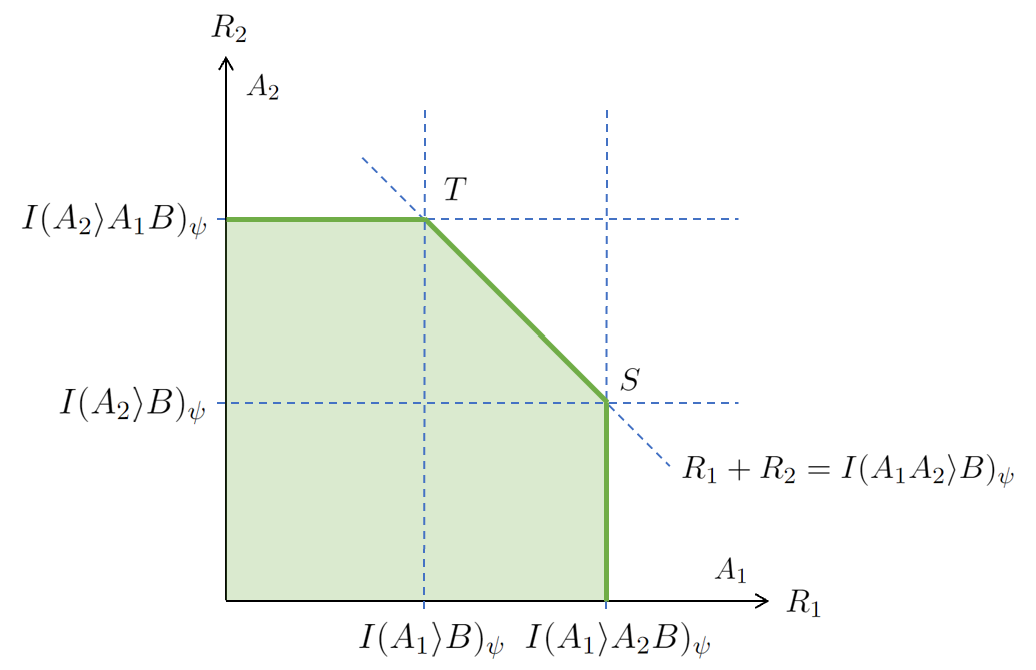}
    \caption{Achievable rate region of a MAC with two senders $A_1$ and $A_2$ in the i.i.d.~limit.}
    \label{fig:iidRRMAC}
\end{figure}

This rate region inner bound goes back to Yard, Devetak and Hayden \cite{YDH:MAC}, where it was obtained by determining the extremal points of the above region, attaining these by successive decoders and the rest of the region by time-sharing (convex combination of rates). In the two-sender case (see Fig.~\ref{fig:iidRRMAC}) these extremal points are $T=[I(A_1\rangle B)_\psi,I(A_2\rangle A_1B)_\psi]$ and $S=[I(A_1\rangle A_2B)_\psi,I(A_2\rangle B)_\psi]$. In the present proof we can achieve for the first time each point of the region directly by a quantum simultaneous decoder, and without needing to appeal to the simultaneous smoothing conjecture (cf. \cite{chakraborty-et-al:multisender.decoupling}). 

As an illustration of a situation where it is essential to reach each point in the convex hull of the corner directly and without time-sharing, we solve the problem of communication via a \emph{compound channel}, which is given by a subset $\mathfrak{C} \subset \text{CPTP}(A_{[k]}\rightarrow B)$ of the quantum channels mapping the $A_i$ to $B$. A code of block length $n$ for the compound channel is defined as above, but the error is the supremum over the error when applying the code to $\cN^{\ox n}$, $\cN\in\mathfrak{C}$.

Inspired by Mosonyi's approach to the single-sender case of classical communication \cite{Mosonyi}, using the R\'enyi decoupling bound (Theorem \ref{theorem:without_smoothing} and Corollary \ref{corollary:D_I-main}),
we can prove the following general achievability result. 
\begin{theorem}
  \label{thm:compound-MAC}
  Given the compound channel $\mathfrak{C} \subset \text{CPTP}(A_{[k]}\rightarrow B)$, a probability distribution $q(u)$ over a discrete alphabet and reference states $\ket{\varphi_u^{(i)}} \in A_iA_i'$ ($i\in[k]$), define the states
  \[
    \rho_u(\cN) = (\id_{A_{[k]}}\ox \cN)\left( \ket{\varphi_u^{(1)}}_{A_1A_1'}\ox\cdots\ox\ket{\varphi_u^{(k)}}_{A_kA_k'} \right),
  \]
  where we let $\cN \in\mathfrak{C}$ act on $A_{[k]}'$.
  Then the asymptotic rates $R_i$ are achievable if 
  \begin{equation*}
    \forall I\subseteq[k] \quad
    \sum_{i\in I} R_i 
      \leq \inf_{\cN\in\mathfrak{C}} \sum_u q(u) I(A_I\rangle BA_{I^c})_{\rho_u(\cN)}
      = \inf_{\cN\in\mathfrak{C}} I(A_I\rangle BA_{I^c}U)_{\overline{\rho}(\cN)}, 
  \end{equation*}
  the latter coherent information evaluated on the cq-state $\overline{\rho}(\cN) = \sum_u q(u)\proj{u}_U \ox \rho_u(\cN)$.
\end{theorem}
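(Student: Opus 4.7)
\begin{proofof}[of Theorem \ref{thm:compound-MAC} (sketch)]
The plan is to combine the one-shot MAC argument from the proof of Theorem \ref{theorem:MAC} with the net-and-average compound strategy used in Theorems \ref{thm:compound-randomness} and \ref{thm:compound-SW}, using the R\'enyi decoupling bound of Theorem \ref{theorem:without_smoothing} instead of the smooth min-entropy one. Concretely, for block length $n$, fix integers $n_u$ with $\sum_u n_u = n$ and $n_u/n \to q(u)$, and use as inputs the pure state $\bigotimes_u (\varphi_u^{(i)})^{\ox n_u}$ on each $A_iA_i'$. Then choose an $\frac{\eta}{n}$-net $\mathfrak{C}_0 = \{\cN_s : s=1,\ldots,N\} \subseteq \mathfrak{C}$ in diamond norm (of polynomial size $N \leq (5n/\eta)^{\text{poly}(|A_{[k]}|,|B|)}$), and form the average purified state
\[
  \ket{\widetilde{\psi}} = \frac{1}{\sqrt{N}} \sum_{s=1}^N \left( \bigotimes_{u} \ket{\psi_{u,s}}^{\ox n_u} \right) \ox \ket{s}_{R'} \in A_{[k]}^n B^n E^n R',
\]
where $\ket{\psi_{u,s}}$ denotes the Stinespring purification of $\rho_u(\cN_s)$. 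One then aims to build a single code that works for $\widetilde{\psi}$ in one shot, which by the net property forces it to perform well on each $\cN^{\ox n}$ with $\cN\in\mathfrak{C}$.

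Following the template of the proof of Theorem \ref{theorem:MAC}, each sender applies a random unitary on $A_i^n$ followed by the rank-$s_i$ projective measurement $\cT_i$, so that Theorem \ref{theorem:without_smoothing} with Corollary \ref{corollary:D_I-main} guarantees decoupling of $A_{[k]}^n$ from $E^nR'$ up to trace distance $\delta$ provided that, for all $\emptyset\neq I\subseteq [k]$,
\[
  \sum_{i\in I} \log s_i
   \leq \widetilde{H}_{\alpha}(A_I^n | E^n R')_{\widetilde{\psi}} - \tfrac{\alpha}{\alpha-1}\log\bigl(2^{-k}\delta\bigr),
\]
with $\alpha \in (1,2]$. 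The central manipulation is the chain of inequalities, using R\'enyi duality (Lemma \ref{lemma:Renyi-duality}) with $1/\alpha + 1/\beta = 2$, quasi-concavity (Lemma \ref{lemma:quasi-concavity}), and additivity on tensor products:
\[\begin{split}
  \widetilde{H}_\alpha(A_I^n | E^n R')_{\widetilde\psi}
    &= -\widetilde{H}_\beta(A_I^n | A_{I^c}^n B^n R')_{\widetilde\psi}
     \geq -\max_s \widetilde{H}_\beta(A_I^n | A_{I^c}^n B^n)_{\psi_{s}^{\ox}} - \log N \\
    &=   -\max_s \sum_u n_u \widetilde{H}_\beta(A_I | A_{I^c} B)_{\rho_u(\cN_s)} - \log N \\
    &\geq \inf_{\cN\in\mathfrak{C}} \sum_u n_u \,I(A_I\rangle BA_{I^c})_{\rho_u(\cN)} - n\Delta(\beta) - \log N,
\end{split}\]
where the last step uses uniform continuity on the compact state space to convert sandwiched R\'enyi entropies into von Neumann conditional entropies (i.e.\ coherent informations) up to a vanishing offset $\Delta(\beta)\to 0$ as $\beta\to 1$.

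The remainder of the argument then closes exactly as in the proof of Theorem \ref{theorem:MAC}: the decoupling bound together with Uhlmann's Theorem \ref{theorem:Uhlmann} and the Fuchs–van de Graaf inequalities \eqref{eq:FvdG_relation} produces encoding isometries $W_i:M_i'\rightarrow A_i'$ and a decoding isometry $\widehat{W}:B^n \rightarrow \widehat M_{[k]}A_{[k]}^nB^n$ whose resulting code on $\widetilde\psi$ has error $\leq 2(k+1)\sqrt{\delta}$; hence on each $\cN_s^{\ox n}$ error $\leq 2(k+1)N\sqrt\delta$, and by the net property on any $\cN^{\ox n}$ with $\cN\in\mathfrak{C}$ error $\leq 2(k+1)N\sqrt\delta + \eta$. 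Choosing $\delta = \eta/N^2$, and letting $n\to\infty$, $\alpha\to 1$, and $n_u/n\to q(u)$ simultaneously, one recovers the claimed rate region. The main obstacle, and the reason the R\'enyi bound is needed instead of the smooth min-entropy bound, is that smoothing parameters cannot be made independent of $n$ when a net of size polynomial in $n$ is inserted inside the entropies; the R\'enyi route sidesteps this by exploiting exact additivity on tensor products together with Lemma \ref{lemma:quasi-concavity}, paying only an additive $\log N = O(\log n)$ which is absorbed by the $\frac{1}{n}$ prefactor in the rate.
\end{proofof}
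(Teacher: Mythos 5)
Your proposal is correct and follows essentially the same route the paper indicates for this theorem (which it only sketches, deferring to the proofs of Theorems \ref{theorem:MAC} and \ref{thm:compound-SW}): a diamond-norm net over $\mathfrak{C}$, the averaged/mixture channel realized through the flagged purification with register $R'$, R\'enyi decoupling via Theorem \ref{theorem:without_smoothing} combined with duality (Lemma \ref{lemma:Renyi-duality}), quasi-concavity (Lemma \ref{lemma:quasi-concavity}), additivity and uniform convergence $\widetilde{H}_\beta \to S$, the type-class handling of $q(u)$ as in Corollary \ref{corollary:MAC}, and the $N$-fold error blow-up absorbed by choosing $\delta = \eta/N^2$. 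The only nitpick is that duality applied to the pure state on $A_{[k]}^nB^nE^nR'$ gives $-\widetilde{H}_\beta(A_I^n|A_{I^c}^nB^n)_{\widetilde\psi}$ (without $R'$ in the conditioning), but since discarding $R'$ only increases the conditional entropy your subsequent estimate goes through unchanged.
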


The proof combines the ideas of Theorem \ref{theorem:MAC} and Corollary \ref{corollary:MAC} applied to the uniform mixture channel $\widetilde{\cN} = \frac{1}{N} \sum_{t=1}^N \cN_t^{\ox n}$ over a net for the set $\mathfrak{C}$ (with respect to the diamond norm), and proceeds like the analogous proof of Theorem \ref{thm:compound-SW} in the previous subsection on compound quantum state merging, and we thus omit the details.


\section{Discussion}
\label{sec:conclusion}
Decoupling is a fundamental primitive in the design of quantum transmission codes, quantum Slepian-Wolf coding, cryptographic communication, and channel simulation, but has so far been largely limited to single-user settings. Here we have shown how to leverage tensorisation properties of expected-contractive maps, to extend the basic toolbox to simultaneous decoupling in a multipartite setting where each party applies their own random unitary. We have managed to find achievability bounds for general multipartite decoupling in terms of smooth conditional min-entropies as usual in one-shot scenarios (Theorem \ref{theorem:with_smoothing}); and in terms of conditional R\'enyi entropies (Theorem \ref{theorem:without_smoothing}). 

Our approach should be contrasted with the ``standard'' one of passing to a Hilbert-Schmidt norm bound already in the first line of Equation \eqref{eq:split-into-Theta}, seeing that we can evaluate quadratic averages not only of single random unitaries but also their tensor products. This has been done in \cite{DutilHayden:merging,Dutil:PhD} and \cite{chakraborty-et-al:multisender.decoupling}, and perhaps by other authors who have found themselves then at the same impasse. For simplicity, consider a tripartite quantum state $\rho_{A_1A_2E}$ (i.e.~$k=2$) and the usual setup of the composition of local unitary operations ($U_1$ on $A_1$ and $U_2$ on $A_2$) followed by a fixed CPTP map $\mathcal{T}_{A_1A_2\rightarrow B}$ with Choi matrix $\tau_{A_1A_2B}$. We can use Lemma \ref{lemma:tr_norm_bound} to bound
\[\begin{split}
    &\left\| \mathcal{T}_{A_1A_2\rightarrow B}[(U_1\otimes U_2)\rho_{A_1A_2E}(U_1\otimes U_2)^\dagger]-\tau_B\otimes\rho_E \right\|_1^2 \\
    &\phantom{=====}
     \leq \Tr\left[\left((\sigma\otimes\zeta)^{-1/4}(\mathcal{T}[(U_1\otimes U_2)\rho(U_1\otimes U_2)^\dagger]-\tau_B\otimes\rho_E)(\sigma\otimes\zeta)^{-1/4}\right)^2\right], 
\end{split}\]
for two auxiliary states $\sigma_B$ and $\zeta_E$. At this point we have passed already to the trace of a square, and following the method in \cite{Dupuis-et-al:decouple} and used above (see also \cite{chakraborty-et-al:multisender.decoupling}), 
we define $\widetilde{\mathcal{T}}_{A_1A_2\rightarrow B}(\cdot)=\sigma^{-1/4}\mathcal{T}_{A_1A_2\rightarrow B}(\cdot)\sigma^{-1/4}$, $\tilde{\rho}_{A_1A_2E}=\zeta_E^{-1/4}\rho_{A_1A_2E}\zeta_E^{-1/4}$ and also $\tilde{\tau}_{A_1A_2B}=\sigma_B^{-1/4}\tau_{A_1A_2B}\sigma_B^{-1/4}$. Then, after expanding the square, evaluating the expectations using $\int {\rm d}U\, UXU^\dagger = (\Tr X)\frac{\1}{d}$ and Corollary \ref{corollary:schur_general}, and after optimizing $\sigma_B$ and $\zeta_E$ we finally get
\[\begin{split}
    \EE_{U_1,U_2}& \left\| \mathcal{T}_{A_1A_2\rightarrow B}[(U_1\otimes U_2)\rho_{A_1A_2E}(U_1\otimes U_2)^\dagger]-\tau_B\otimes\rho_E \right\|_1^2 \\
    &\hspace{-3pt}\leq D\Bigg(\Tr[\tilde{\tau}_{A_2B}^2]\Tr[\tilde{\rho}_{A_2E}^2]+\Tr[\tilde{\tau}_{A_1B}^2]\Tr[\tilde{\rho}_{A_1E}^2]+\Tr[\tilde{\tau}_{A_1A_2B}^2]\Tr[\tilde{\rho}_{A_1A_2E}^2]\Bigg) \\
    &\hspace{-3pt}= D\!\left(2^{-\widetilde{H}_2(A_1|B)_\tau-\widetilde{H}_2(A_1|E)_\rho}+2^{-\widetilde{H}_2(A_2|B)_\tau-\widetilde{H}_2(A_2|E)_\rho}+2^{-\widetilde{H}_2(A_1A_2|B)_\tau-\widetilde{H}_2(A_1A_2|E)_\rho}\right) \\
    &\hspace{-3pt}\leq D\!\left(2^{-\widetilde{H}_2(A_1|B)_\tau-{H}_{\min}(A_1|E)_\rho}\!+\!2^{-\widetilde{H}_2(A_2|B)_\tau-{H}_{\min}(A_2|E)_\rho}\!+\!2^{-\widetilde{H}_2(A_1A_2|B)_\tau-{H}_{\min}(A_1A_2|E)_\rho}\right),
\end{split}\]
where in the last line we have lower bounded the collision entropies by min-entropies, and $D\!=\!2\left(1-\frac{1}{\abs{A_1^2}}\right)^{-1}\!\!\left(1-\frac{1}{\abs{A_2^2}}\right)^{-1}$ is a constant like the ones encountered in Theorems \ref{theorem:with_smoothing} and \ref{theorem:without_smoothing}.

The resulting bound thus has the characteristic sum of exponential terms, one for each subset of parties, and the exponents feature conditional min- and collision entropies of the state and of the fixed channel Choi matrix, respectively, recalling the structure of \cite{Dupuis-et-al:decouple}. So in some sense, this is a one-shot decoupling theorem. The technical problem is that we have left the realm of trace distances in the very first step, and so the min-entropies in the final expression all refer to the same state. 

If now we want to move to smooth min-entropies to optimize the attainable rates we need to smooth the global state so as to approximate all reduced states' smooth min-entropies simultaneously. The long-standing simultaneous smoothing conjecture \cite{DrescherFawzi:sim-min} states that this is possible in some way, but remains unsolved. In \cite{chakraborty-et-al:multisender.decoupling} it is partially addressed to lead to an improved one-shot decoupling bound, but in the application to an i.i.d.~coding problem one still has to appeal to the asymptotic version of the simultaneous smoothing conjecture, which remains open, too. Instead, the innocent-looking step of passing to the second line in Equation~\eqref{eq:split-into-Theta} gains us a sum of tensor product random maps, which we can split up using the triangle inequality so that each term can be dealt with via its own quadratic average bound; at the end, we can then apply smoothing separately to each of the exponential terms corresponding to the subsets of parties. We thus prove the conjectured form of simultaneous local decoupling, while not having to address the simultaneous smoothing conjecture. 

We have shown the power of these results by presenting a series of relevant applications in multi-user quantum information tasks. We have found one-shot, finite block length, and asymptotic achievability results in local randomness extraction, multipartite entanglement distillation, and quantum communication via quantum multiple access channels. 

\begin{itemize} \setlength\itemsep{0mm}
\item In particular, we have found a one-shot version of local randomness extraction and achievability rates for an arbitrary number $k$ of cooperating users, as well as the optimal rate region in the i.i.d asymptotics. The latter result reproduces the core insight of \cite{YHW:randomness} for $k=2$ collaborating parties, albeit with a much simpler protocol, and proves the conjectured rate region for an arbitrary number $k$ of users. 

\item Concerning multi-party entanglement of assistance, we have also found a one-shot and i.i.d.~optimal rates, reproducing the asymptotic results from \cite{HOW:merging-CMP} with a much simpler approach. Actually, the used procedure was previously analyzed in \cite{Dutil:PhD} and shown to work assuming the simultaneous smoothing conjecture. With the application of our theorems, we do not require the use of this unproven conjecture.

\item Likewise, we solve the quantum version of the Slepian-Wolf data compression of correlated sources, which reduces to the task of quantum state merging, in the one-shot setting, as suggested by \cite{DutilHayden:merging,Dutil:PhD}, as well as the i.i.d.~setting, reproducing the asymptotically optimal rate region of \cite{HOW:merging-Nature,HOW:merging-CMP} and proving the conjectured one-shot achievable region, by achieving each point of the respective regions directly, without the need of time-sharing and without the simultaneous smoothing conjecture.

\item Finally, we have found a one-shot achievability region for quantum communication via quantum multiple access channels that had been conjectured for a long time. In a similar fashion to the previous applications, we obtained an achievable rate region for the i.i.d.~quantum MAC, reproducing the result of \cite{YDH:MAC}. For the first time, we can achieve each point of that region directly by a quantum simultaneous decoder, without the need of time-sharing, and without the simultaneous smoothing conjecture.
\end{itemize}

To illustrate the utility of the one-shot results we showed that they also solve the compound source/channel versions of all four problems. These are conceptually important results since they prove that attainable rates are in some sense robust and do not require perfect knowledge of the source/channel. Indeed, consider the important case that the set $\cS$ ($\mathfrak{C}$) is a small trace-norm (diamond-norm) ball around an ``ideal'' state (channel). Then Theorems \ref{thm:compound-randomness}, \ref{thm:compound-EoA}, \ref{thm:compound-SW} and \ref{thm:compound-MAC} in particular imply that the optimal rates of the ideal state/channel can be almost achieved by a protocol that works uniformly well in the whole neighbourhood of the ideal. 
%

An important future problem will be to extend the multipartite randomness extraction model to the cryptographic setting, where typically only lower bounds on the min-entropies $H_{\min}^{\epsilon}(A_I|E)_\rho$ are available. In that case, an extractor needs a seed of randomness to start with. For example, Theorem \ref{theorem:local_rand_extr} (and Theorem \ref{theorem:with_smoothing} on which it is based), requires only a unitary $2$-design to give security guarantees with high probability. That is to say, each local user could use a random element of the Clifford group as a seed. However, schemes with much smaller seeds are known in single-user settings \cite{BertaFawziWehner:randomness,Nakata-et-al:decouple,Vadhan:pseudo}, and it will be interesting to adapt these to the multi-user case.

\section*{Acknowledgements}
The authors thank Fr\'ed\'eric Dupuis for his encouragement to try out the application of the R\'enyi decoupling approach to multi-user problems. We furthermore thank Hao-Chung Cheng, Li Gao, and Mario Berta for exchanging notes about our mutually independent work on decoupling and for sharing their manuscript \cite{ChengGaoBerta:broadcast} prior to making it public.

PC and AW are supported by the Institute for Advanced Study of the Technical University of Munich, by way of a Hans Fischer Senior Fellowship. AW is furthermore supported by the European Commission QuantERA grant ExTRaQT (Spanish MICINN project PCI2022-132965), by the Spanish MINECO (project PID2019-107609GB-I00) with the support of FEDER funds, the Generalitat de Catalunya (project 2017-SGR-1127), the Spanish MICINN with funding from European Union NextGenerationEU (PRTR-C17.I1) and the Generalitat de Catalunya, and by the Alexander von Humboldt Foundation.

\section*{Statements and declarations}
\begin{itemize}
    \item \textbf{Competing interests statement}: The authors have no relevant competing interests to disclose.
    \item \textbf{Data availability statement}: The authors affirm that the primary source of data supporting the findings of this study is contained within the paper, with the cited references serving as supplementary sources.

\end{itemize}

\printbibliography

\end{document}